\newcommand{\me}{\mathrm{e}}
\newcommand{\nn}{\nonumber}
\newcommand{\mm}{m}
\newtheorem{theorem}{Theorem}
\newtheorem{corollary}{Corollary}
\newtheorem{fact}{Fact}
\newtheorem{proposition}{Proposition}
\newtheorem{lemma}{Lemma}
\theoremstyle{remark}
\newtheorem{example}{Example}
\newtheorem{remark}{Remark}
\newtheorem{assumption}{Assumption}
\newtheorem{definition}{Definition}
\newcommand{\ekRev}[1]{{#1}}
\begin{document}

\title{Signaling Games for Log-Concave Distributions: Number of Bins and Properties of Equilibria}

\author{
Ertan~Kaz{\i}kl{\i}, Serkan~Sar{\i}ta\c{s}, Sinan~Gezici, Tam\'{a}s~Linder, and~Serdar~Y{\"u}ksel \thanks{
Part of this work was presented at the 2019 IEEE International Symposium on Information Theory (ISIT), July 7-12, 2019, Paris, France \cite{NumberOfBinsISIT2019}. E. Kaz{\i}kl{\i}, T. Linder and S. Y\"uksel are with the Department of Mathematics and Statistics, Queen's University, K7L 3N6, Kingston, Ontario, Canada. Emails: ertan.kazikli@queensu.ca and \{linder,yuksel\}@mast.queensu.ca. 
S. Sar{\i}ta\c{s} is with the Department of Electrical and Electronics Engineering, Middle East Technical University (METU), 06800, Ankara, Turkey. Email: ssaritas@metu.edu.tr. S. Gezici is with the Department of Electrical and Electronics Engineering, Bilkent University, 06800, Ankara, Turkey, Email: gezici@ee.bilkent.edu.tr.}}

\maketitle

\begin{abstract}
We investigate the equilibrium behavior for the decentralized cheap talk problem for real random variables \ekRev{and quadratic cost criteria} in which an encoder and a decoder have misaligned objective functions. In prior work, it has been shown that the number of bins in any equilibrium has to be countable, generalizing a classical result due to Crawford and Sobel who considered sources with density supported on $[0,1]$. In this paper, we first refine this result in the context of log-concave sources. For sources with two-sided unbounded support, we prove that, for any finite number of bins, there exists a unique equilibrium. In contrast, for sources with semi-unbounded support, there may be a finite upper bound on the number of bins in equilibrium depending on certain conditions stated explicitly. Moreover, we prove that for log-concave sources, the expected costs of the encoder and the decoder in equilibrium decrease as the number of bins increases. Furthermore, for strictly log-concave sources with two-sided unbounded support, we prove convergence to the unique equilibrium under best response dynamics which starts with a given number of bins, making a connection with the classical theory of optimal quantization and convergence results of Lloyd’s method. In addition, we consider more general sources which satisfy certain assumptions on the tail(s) of the distribution and we show that there exist equilibria with infinitely many bins for sources with two-sided unbounded support. Further explicit characterizations are provided for sources with exponential, Gaussian, and compactly-supported probability distributions.
\end{abstract}

\begin{IEEEkeywords}
Cheap talk, signaling games, Nash equilibrium, optimal quantization, Lloyd–Max algorithm, payoff dominant equilibria.
\end{IEEEkeywords}

\section{Introduction}

Signaling games and cheap talk are concerned with a class of Bayesian games where a privately informed player (encoder or sender) transmits information (signal) to another player (decoder or receiver), who knows the probability distribution of the private information observed at the encoder. In these games/problems, the objective functions of the players are not aligned, unlike in classical communication problems. The cheap talk problem was introduced in the economics literature by Crawford and Sobel \cite{SignalingGames}, who obtained the striking result that under some technical conditions on the cost functions, the cheap talk problem only admits equilibria that involve quantized encoding policies, i.e., the observation space is partitioned into intervals and the encoder reveals the interval its observation lies in, rather than revealing the observation completely. This is in significant contrast to the usual communication/information theoretic case where the objective functions are aligned. Therefore, as indicated in \cite{SignalingGames}, the amount of information that can be revealed by the encoder depends on the similarity of the players' interests (objective functions). In other words, the message about the private information should be strategically designed and transmitted by the encoder. In this paper, we discuss extensions and generalizations of some results concerning strategic information transmission and cheap talk \ekRev{under quadratic cost criteria} by focusing on log-concave sources as well as more general sources which satisfy certain assumptions (rather than sources with a density supported on $[0,1]$ as studied in \cite{SignalingGames}).

\subsection{Problem Definition}

The focus of this paper is to address the following problems:

\subsubsection{Number of Bins}
In a previous work \cite{tacWorkArxiv}, it is shown that, since the distances between the optimal decoder actions are lower bounded due to \cite[Theorem 3.2]{tacWorkArxiv}, the quantized nature of an equilibrium holds for arbitrary scalar sources, rather than only for sources with a density supported on $[0,1]$ as studied in the seminal paper by Crawford and Sobel \cite{SignalingGames}. Hence, for bounded sources, it can easily be deduced that the number of bins at the equilibrium must be bounded. For example, for a uniform source on $[0,1]$ and quadratic objective functions, \cite{SignalingGames} provides an upper bound on the number of quantization bins as a function of the bias $b$\ekRev{, which appears in the objective function of the encoder and quantifies the degree of misalignment between the objectives of the encoder and decoder.} Accordingly, for unbounded sources, the following problems are of interest: 
\begin{itemize}
\item For sources with either semi-unbounded or two-sided unbounded support on the real line, is there an upper bound on the number of bins at an equilibrium as a function of the bias $b$? As a special case, is it possible to have only a non-informative equilibrium; i.e., the upper bound on the number of bins is one?
\item Is it possible to have an equilibrium with infinitely many bins?
\item Is the equilibrium unique for a given number of quantization bins?
\end{itemize}
At this point, one can ask why bounding the number of bins is important. Finding such a bound is useful since if one can show that there only exists a finite number of bins, and if the equilibrium is unique for a given number of bins, then the total number of equilibria is finite; this will allow for a feasible setting where the decision makers can coordinate their policies. \ekRev{It is also interesting to investigate upper bounds on the number of bins in equilibrium from a communication theoretic perspective. In particular, even though the talk is cheap, the communication between the encoder and decoder becomes limited by the upper bound on the number of bins in equilibrium. This can be useful in network design where communication resources may not be wasted when they are not needed.}

Furthermore, in a recent work, where signaling games and cheap talk problems are generalized to dynamic (multi-stage) setups, a crucial property that allowed the generalization was the assumption that the number of bins for each stage equilibrium, conditioned on the past actions, is uniformly bounded \cite[Theorem~5]{dynamicGameArxiv}. In view of this, showing that the number of bins is finite is a useful technical result. 

\subsubsection{Equilibrium Costs}
Attaining the upper bound $N$ on the number of bins at an equilibrium implies that there exists at least one equilibrium with each of $1,2,\ldots,N$ bins for the bounded support scenario due to \cite[Theorem~1]{SignalingGames}, and thus, a new question arises: among these multiple equilibria, which ones induce smaller expected cost simultaneously for both of the players? Results in \cite{SignalingGames} show that under certain assumptions an equilibrium with more bins induces smaller expected costs for both the encoder and the decoder for a source with a density supported on $[0,1]$. Accordingly, for more general sources, the problem is to see how the expected costs of the players in equilibrium behave with respect to number of bins. This is known as the payoff dominant equilibrium selection property \cite{eqSelectionBook}. This also allows for a well-posed coding problem as one would like to design a coding scheme that is payoff dominant for both the encoder and the decoder.

\subsubsection{Convergence to Equilibria}
The interaction between the encoder and decoder can be viewed as a quantization game where the encoder decides on quantization bins and the decoder decides on reconstruction values. One might wonder how an equilibrium point is reached in this quantization game. For instance, the players may act sequentially by computing their best responses, which corresponds to a repeated play of the considered quantization game. Then, a problem of interest is to see whether these best response iterations converge to an equilibrium. Another related issue is that if they start from an arbitrary quantization policy with $N$ bins, is it the case that these best response iterations converge to an equilibrium with $N$ bins?

\subsection{Preliminaries}

\ekRev{In this paper, we consider the communication model introduced by Crawford and Sobel \cite{SignalingGames} and specialize to the quadratic cost setup as explained below. In this model,} there are two players with misaligned objective functions. An informed player (encoder) knows the value of an $\mathbb{M}$-valued random variable $M$ and transmits an $\mathbb{X}$-valued random variable $X$ to another player (decoder), who generates an $\mathbb{M}$-valued decision $U$ upon receiving $X$. The policies of the encoder and the decoder are assumed to be deterministic; i.e., $x=\gamma^e(m)$ and $u=\gamma^d(x)=\gamma^d(\gamma^e(m))$. Let $c^e(m,u)$ and $c^d(m,u)$ denote the cost functions of the encoder and the decoder, respectively, when the action $u$ is taken for the corresponding message $m$. Then, given the encoding and decoding policies, the encoder's induced expected cost is $J^e\left(\gamma^e,\gamma^d\right) = \mathbb{E}\left[c^e(M, U)\right]$, whereas the decoder's induced expected cost is $J^d\left(\gamma^e,\gamma^d\right) = \mathbb{E}\left[c^d(M, U)\right]$. Here, we assume real valued random variables and quadratic cost functions; i.e., $\mathbb{M}=\mathbb{X}=\mathbb{R}$, $c^e\left(m,u\right) = \left(m-u-b\right)^2$ and $c^d\left(m,u\right) = \left(m-u\right)^2$, where $b\in\mathbb{R}$ denotes a bias term which is common knowledge between the players. \ekRev{The encoder essentially wishes to introduce a certain amount of bias for the action taken by the decoder.} We assume that the encoder and the decoder announce their policies at the same time. Then, a pair of policies $(\gamma^{*,e}, \gamma^{*,d})$ is said to be a Nash equilibrium (e.g., \cite{basols99}) if
\begin{align}
\begin{split}
J^e(\gamma^{*,e}, \gamma^{*,d}) &\leq J^e(\gamma^{e}, \gamma^{*,d})  \text{ for all } \gamma^e \in \Gamma^e ,\\
J^d(\gamma^{*,e}, \gamma^{*,d}) &\leq J^d(\gamma^{*,e}, \gamma^{d})  \text{ for all }\gamma^d \in \Gamma^d ,
\label{eq:nashEquilibrium}
\end{split}
\end{align}
where $\Gamma^e$ and $\Gamma^d$ are the sets of all deterministic (and Borel measurable) functions from $\mathbb{M}$ to $\mathbb{X}$ and from $\mathbb{X}$ to $\mathbb{M}$, respectively. As observed from the definition in \eqref{eq:nashEquilibrium}, under a Nash equilibrium, each individual player chooses an optimal strategy given the strategy chosen by the other player. %\ekRev{The equilibrium here is also referred to as the Bayesian Nash equilibrium \cite{SignalingGames} and this equilibrium concept is introduced by Harsanyi \cite{Harsanyi68}.}

The quantized nature of Nash equilibria for cheap talk problem \cite{SignalingGames} motivates the following definition of a scalar quantizer.
\begin{definition}\label{def:quantizer}
An $N$-cell scalar quantizer, $q$, is a (Borel) measurable mapping from $\mathbb{M}=\mathbb{R}$ to the set $\{1,2,\dots,N\}$ characterized by a measurable partition $\{B_1,B_2,\dots,B_N\}$ such that $B_i=\{x\,|\,q(x)=i\}$ for $i=1,2,\dots,M$ and that bin probabilities are strictly positive. The $B_i$ are called the bins (or cells) of $q$. 
\end{definition}

Due to results obtained in \cite{SignalingGames} and \cite{tacWorkArxiv}, we know that the encoder policy consists of convex bins at a Nash equilibrium\footnote{We note that, unlike Crawford and Sobel's simultaneous Nash equilibrium formulation, if one considers a Stackelberg formulation (see \cite[p.~133]{basols99} for a definition), then the problem reduces to a classical communication problem since the encoder is then committed a priori and the equilibrium is not quantized; i.e., there exist affine equilibria \cite{tacWorkArxiv,dynamicGameArxiv,CedricWork,akyolITapproachGame,omerHierarchial}.}. Namely, at a Nash equilibrium, the encoder must employ quantization policies with the cells $B_i$ in Definition~\ref{def:quantizer} being intervals.

Now consider an equilibrium with $N$ bins. At this equilibrium, let $k$th bin be the interval $[\mm_{k-1},\mm_k)$ for $k=1,\dots,N$ where $\mm_0<\mm_1<\ldots<\mm_N$ denote bin edges. Let $l_k$ denote the length of the $k$th bin; i.e., $l_k=\mm_k-\mm_{k-1}$ for $k=1,2,\ldots,N$. Note that for an equilibrium with $N$ bins, we have $\mm_0=m_L$ and $\mm_N=m_U$ for sources with bounded support with $m_L$ and $m_U$ denoting lower or upper boundaries, respectively. In the case of sources with semi-unbounded support, we have $\mm_0=m_L$ and $\mm_N=+\infty$, or $\mm_0=-\infty$ and $\mm_N=m_U$ whereas in the case of sources with two-sided unbounded support, we have $\mm_0=-\infty$ and $\mm_N=+\infty$. Here, the encoder reports the interval its observation lies in. This can be represented by a quantization policy with $\gamma^e(m)=k$ when $[\mm_{k-1},\mm_k)$ for $k=1,\dots,N$. In this representation, the decoder knows that receiving $k$ means the encoder observed $m\in [\mm_{k-1},\mm_k)$. \ekRev{Due to \cite[Theorem~1]{SignalingGames} and \cite[Theorem~3.2]{tacWorkArxiv}, we have the following equilibrium conditions for our problem.} At an equilibrium, decoder's best response to encoder's action is characterized by
\begin{align} \label{centroid}
u_k = \mathbb{E}[M|\mm_{k-1}\leq M<\mm_k] 
\end{align}
for $k=1,\dots,N$; i.e., the optimal decoder action is the centroid for the corresponding bin. From the encoder's point of view, the best response of the encoder to decoder's action is determined by the nearest neighbor condition\footnote{Although this condition can be viewed as a nearest neighbor condition with a bias term, we simply use the term nearest neighbor condition as in the quantization theory literature.} as follows:
\begin{align} 
u_{k+1}-\mm_k = (\mm_k - u_k) - 2b \Leftrightarrow \mm_k = \frac{u_k + u_{k+1}}{2}+b
\label{eq:centroidBoundaryEq}
\end{align}
for $k=1,\dots,N-1$. Due to the definition of Nash equilibrium in \eqref{eq:nashEquilibrium}, these best responses in \eqref{centroid} and \eqref{eq:centroidBoundaryEq} must match each other, and only then can the equilibrium be characterized; i.e., for a given number of bins, the positions of the bin edges are chosen by the encoder, and the centroids are determined by the decoder. Alternatively, the problem can be considered as a quantization game in which the boundaries are determined by the encoder and the reconstruction values are determined by the decoder.

Based on the above, the problems we consider in this paper can be formulated more formally as follows:

\subsubsection{Number of Bins}
For a given finite (or infinite) $N$, does there exist an equilibrium with $N$ bins; i.e., is it possible to find optimal encoder actions (the boundaries of the bins) $\mm_0,\mm_1,\ldots,\mm_N$ and decoder actions (the centroids of the bins) $u_1,u_2,\ldots,u_N$ which satisfy \eqref{centroid} and \eqref{eq:centroidBoundaryEq} simultaneously? \ekRev{In relation to this problem, we introduce the following definition regarding the maximum possible number of bins in equilibrium for a given problem setup. 
\begin{definition}
For a given source density and a certain bias $b$, we define the maximum possible number of bins in equilibrium as follows:
\begin{align}
N_{\max}(b) \triangleq \sup&\{N\geq 1\,| \,\text{there exists an}\nonumber \\
&\hphantom{\{} \text{equilibrium with }N\text{ bins} \}.\label{def:Nmax}
\end{align}
\end{definition}
\begin{remark}
Note that $N_{\max}(b) < \infty$ implies that there is a finite upper bound on the number of bins in equilibrium, while if $N_{\max}(b) = \infty$, then there exist equilibria with an arbitrarily large number of bins.
\end{remark}
}

%\ekCom{Our results guarantee that there exists an equilibrium with $N$ bins for all $N\leq N_{\max}(b)$. However, I think a better definition would be to say that there exists an equilibrium with $N_{\max}(b)$ bins in the first item and that for all $N> N_{\max}(b)$, there does not exist a Nash equilibrium with $N$ bins in the second item.}

\subsubsection{Equilibrium Costs} 
An equilibrium is \textbf{more informative} (than another one) if it results in smaller expected costs for both the encoder and the decoder. Let $J^{e,N}$ and $J^{d,N}$ denote the encoder cost and the decoder cost, respectively, at an equilibrium with $N$ bins. Then, the aim is to see if $J^{d,N}>J^{d,N+1}$ holds or not for all finite $N$.\footnote{It suffices to consider the expected cost of one of the players since Remark~\ref{rem:equilibriumSelection} shows that for a given equilibrium with $N$ bins, we have that $J^{e,N}=J^{d,N}+b^2$.}

\subsubsection{Convergence to Equilibria}Consider an initial arbitrary set of ordered bin edges $m_0,m_1,\dots,m_N$ that does not necessarily lead to an equilibrium where $m_0$ and $m_N$ correspond to boundaries of the support for the source distribution. Suppose that the encoder and decoder iteratively compute their best responses by taking the centroid conditions and nearest neighbor conditions (i.e., \eqref{centroid} and \eqref{eq:centroidBoundaryEq}) into account by which bin edges $m_1,m_2,\dots,m_{N-1}$ and centroids $u_1,\dots,u_N$ are updated iteratively. Then, the aim is to see whether these bin edges and centroids converge to an equilibrium with $N$ bins as a result of best response iterations. This correspond to modified Lloyd's Method I \cite{LloydIT82} where the modification specific to our setting is due to the bias term in \eqref{eq:centroidBoundaryEq}. 

\subsection{Related Literature}

Cheap talk and signaling game problems find applications in networked control systems when a communication channel/network is present among competitive and non-cooperative decision makers \cite{basols99,misBehavingAgents}. Also, there have been a number of related results in the economics and control literature in addition to the seminal work by Crawford and Sobel, which are reviewed in \cite{tacWorkArxiv,dynamicGameArxiv} (see \cite{sobelSignal} for an extensive survey). We note that although Crawford and Sobel's simultaneous-move Nash formulation is considered in this paper, the signaling games literature also focuses on the sequential-move Stackelberg formulation, which is also referred to as the Bayesian persuasion problem in the economics literature \cite{bayesianPersuasion,CedricWork,akyolITapproachGame,omerHierarchial,LeTreustAllerton16,LeTreustJET19,LeTreustArxiv20,subjectiveBiasSIT}.

In addition to seminal work in \cite{SignalingGames}, there have been many contributions to \ekRev{signaling game} problems from a variety of aspects in the economics literature \cite{BattagliniEconometrica2002,ComparativeCheap07,MiuraGAB2014,ambrus2008multi,blume2007noisy,LongCheapTalk03,DynamicSIT14,KartikJET07,GescheGEB21,CheapTalkTwoSenderInformative}. For instance, a set of related works analyzes multidimensional cheap talk problems \cite{BattagliniEconometrica2002,MiuraGAB2014,ComparativeCheap07,ambrus2008multi}. Interestingly, in contrast to one-dimensional cheap talk model of Crawford and Sobel \cite{SignalingGames}, in certain scenarios, the sender may reveal certain information to the decoder fully for the cheap talk problems with multiple senders considered in \cite{MiuraGAB2014,BattagliniEconometrica2002,ambrus2008multi}. The work in \cite{blume2007noisy} considers a cheap talk setup where with some probability the decoder does not observe the message transmitted by the encoder and instead observes a message coming from a distribution which is statistically independent of the encoder's transmission and the source. This setup reduces to Crawford and Sobel's formulation \cite{SignalingGames} when the probability of error is zero. They show the quantized nature of the equilibria with convex bins (i.e., intervals) in such a noisy communication setting. The work in \cite{CheapTalkTwoSenderInformative} considers a cheap talk setup with two senders communicating with a receiver and an interesting observation is that more information revelation by one sender incentives more information revelation by the other sender. \ekRev{Furthermore, in \cite{KartikJET07}, the authors consider a strategic communication setting where the cost function of the sender depends on the conveyed message, which leads to a departure from the cheap talk setup of Crawford and Sobel \cite{SignalingGames}. In particular, it is costly for the sender to convey inaccurate information (i.e., there is a certain cost for lying). Unlike in \cite{SignalingGames}, there exists fully revealing equilibria under certain conditions in \cite{KartikJET07}.}

The quantized nature of the equilibrium connects game theory with the quantization theory. For a comprehensive survey regarding the history of  quantization and results on the optimality and convergence properties of different quantization techniques (including Lloyd's methods), we refer to \cite{quantizationSurvey}. In particular, \cite{Fleischer64} shows that, for sources with a log-concave density, Lloyd’s Method I converges to the unique optimal quantizer. It is shown in \cite{trushkin82} and \cite{kieffer83} that Lloyd’s Method I converges to the globally optimal quantizer if the source density is continuous and log-concave, and if the error weighting function is convex and symmetric. For sources with bounded support, the condition on the source is relaxed to include all continuous and positive densities in \cite{wu92}, and convergence of Lloyd’s Method I to a (possibly) locally optimal quantizer is proven. The number of bins of an optimal entropy-constrained quantizer is investigated in \cite{gyorgyLinder2003}, and conditions under which the number of bins is finite or infinite are presented. As an application to smart grids, \cite{csLloydMax} considers the design of signaling schemes between a consumer and an electricity aggregator with finitely many messages (signals); the best responses are characterized and the maximum number of messages (i.e., quantization bins) are found using Lloyd's Method II via simulation. The work in \cite{QuantizationGameTSP21} studies evolution of language in social networks by modeling the problem as a quantization game where information spreads over a network of strategic decision makers. The authors show convergence to equilibria via a distributed version of the Lloyd-Max algorithm for the considered setup.

The existence of multiple quantized equilibria necessitates a theory to specify which equilibrium point is the solution of a given game. Two different approaches can be taken to achieve a unique equilibrium. One of them reduces the multiplicity of equilibria by requiring that off-the-equilibrium-path beliefs satisfy an additional restriction (e.g., by shrinking the set of players' rational choices) \cite{eqSelectionRefinement}, \cite{sobelSignal}. As introduced in \cite{eqSelectionBook}, the other approach presents a theory that selects a unique equilibrium point for each finite game as its solution; i.e., one and only one equilibrium points out of the set of all equilibrium points of this kind (e.g., see \cite{eqSelectionSignaling} for the application). Our results have implications on an equilibrium selection criterion known as payoff dominant equilibria \cite{eqSelectionBook}. In particular, we show that for the signaling game setup with log-concave sources and quadratic cost structure, an equilibrium with more bins is more informative since an equilibrium with more bins leads to reduced expected costs for both of the players. \ekRev{The work \cite{SobelSelectionEconometrica08} introduces the notion of no incentive to separate (NITS) condition for selection among multiple equilibria for general cheap talk problems where they associate a special meaning to the smallest source realization in certain applications in economics.}

As mentioned earlier, signaling game problems are also investigated under Stackelberg equilibria where one of the players is committed a priori \cite{bayesianPersuasion,CedricWork,akyolITapproachGame,omerHierarchial,LeTreustAllerton16,LeTreustJET19,LeTreustArxiv20,VoraCDC20,VoraISIT20,subjectiveBiasSIT,AkyolITW15,AkyolISIT16}. \ekRev{For instance, the works in \cite{AkyolITW15,AkyolISIT16,akyolITapproachGame} consider problems under the Stackelberg equilibrium concept with a biased sender where the cost functions are quadratic, and the source and bias are modeled as jointly Gaussian random variables.} The problem of strategical coordination is considered in \cite{LeTreustAllerton16,LeTreustJET19,LeTreustArxiv20}; specifically, the information design and a point-to-point strategic source-channel coding problems (originated from the Bayesian persuasion game \cite{bayesianPersuasion}) between an encoder and a decoder with non-aligned utility functions is investigated under the Stackelberg equilibrium. In addition, \cite{VoraCDC20,VoraISIT20} investigate communication scenarios between a sender and a receiver such that the receiver is the Stackelberg leader and aims to recover the source sequence as correct as possible. Due to the strategic nature of the sender, not all the transmitted information is truthful. Under this setting, the authors investigate how much true information can be recovered by the receiver from such a sender. They propose the notion of information extraction capacity from the strategic sender, which quantifies the growth rate of the number of recovered sequences with the block-length.

\subsection{Contributions}
\begin{enumerate}[(i)]
\item For log-concave sources with two-sided unbounded support, we prove that for any $N\in\mathbb{N}$, there exists a unique equilibrium with $N$ bins regardless of the value of $b$, i.e., there is no \ekRev{finite} upper bound on the number of bins in equilibrium (Theorem~\ref{thm:LogConcaveTwoSidedExistence}).
\item For log-concave sources with semi-unbounded support, we show that when the support extends to $+\infty$ and $b<0$, or the support extends to $-\infty$ and $b>0$, there exists a \ekRev{finite} upper bound on the number of quantization bins in equilibrium and that for any $N$ less than or equal to this upper bound there exists a unique equilibrium with $N$ bins (Theorem~\ref{thm:LogConcaveOneSidedExistence}). Otherwise, for any $N\in\mathbb{N}$, there exists a unique equilibrium with $N$ bins.
\item We prove that an equilibrium with more bins is more informative for log-concave sources, i.e., the equilibrium costs of the encoder and the decoder decrease as the number of bins increases (Theorem~\ref{thm:LogConcaveCost}).
\item We show that for a source with a strictly log-concave density and two-sided unbounded support, if the encoder and the decoder iteratively compute their best responses starting from an initial set of bin edges with $N$ bins, then it is guaranteed that these bin edges converge to the unique equilibrium with $N$ bins regardless of the value of $b$ (Theorem~\ref{thm:convergence}).
\item For sources with semi-unbounded support extending to $+\infty$ and under certain assumptions, we prove that there exists a finite upper bound on the number of bins in equilibrium when $b<0$ whereas for $b>0$ there exist equilibria with infinitely many bins (Theorem~\ref{FinThm}).
\item For sources with two-sided unbounded support and under certain assumptions, we show that there exist equilibria with infinitely many bins (Theorem~\ref{thm:twoSidedInfinite}).
\item For log-concave sources, we prove that there exist equilibria with infinitely many bins in the cases when there does not exist a finite upper bound on the number of bins in equilibrium (Corollary~\ref{cor:infManyLogConcaveOneSided} and Corollary~\ref{cor:infManyLogConcaveTwoSided}).
\item For exponential sources, we obtain an upper bound on the number of bins in equilibrium for $b<0$ (Proposition~\ref{prop:expProp1}). On the other hand, for $b>0$, we prove that there exists a unique equilibrium with infinitely many bins (Theorem~\ref{thm:expExistence}). Furthermore, the equilibrium costs are shown to achieve their minimum at the equilibrium with infinitely many bins (Theorem~\ref{thm:expMoreInformative}). 
%\item For Gaussian sources, we show that there always exist equilibria with infinitely many bins regardless of the value of $b$ (Theorem~\ref{thm:gaussInfinity}).
\end{enumerate}

\section{Some Relevant Prior Results}\label{sec:PriorResults}

In this section, we review and discuss closely related prior results in the literature. In this paper, we touch upon all of these listed results and extend them. To begin with, in their seminal paper \cite{SignalingGames}, Crawford and Sobel prove the quantized nature of Nash equilibria under certain technical assumptions for sources with a density supported on $[0,1]$. It should be emphasized that the result of Crawford and Sobel holds for more general cost functions which satisfy certain conditions. These conditions hold for the quadratic criteria analyzed in this paper. In the next theorem, $u^e(m)$ and $u^d(m)$ are defined as $u^e(m) \triangleq \arg \min_u c^e(m,u)$ and $u^d(m) \triangleq \arg \min_u c^d(m,u)$ where $c^e(m,u)$ and $c^d(m,u)$ denote the cost function of the encoder and decoder, respectively, and the former term depends also on the bias term $b$. As noted earlier, here we only consider the Nash (simultaneous) setup, and not the Stackelberg (Bayesian persuasion) setup.

\begin{theorem}[Crawford and Sobel {\cite[Theorem~1]{SignalingGames}} ]\label{thm:CS}
Let $M$ be a real-valued random variable which admits a density supported on $[0,1]$. Suppose that $u^e(m)\neq u^d(m)$ for all $m$. Then, there exists at least one equilibrium with a quantization policy at the encoder where there are $N$ quantization bins with $1\leq N \leq N_{\max}(b)$ and $N_{\max}(b)\in \mathbb{N}$ denoting the upper bound on the number of bins.\ekRev{\footnote{\ekRev{In \cite{CSCorrectionEconometrica21}, the authors show that while this statement is correct, the proof in \cite{SignalingGames} relies on some incorrect statements and gives an example for which this proof fails. A correct version of the proof is given in \cite{CSCorrectionEconometrica21}.}}} Furthermore, an encoding policy at an equilibrium is equivalent to a quantized encoder policy in terms of the performance at the equilibrium. 
\end{theorem}

It is also possible to consider arbitrary scalar valued random variables and to show the quantized nature of Nash equilibria in this case, as well. Namely, the following result applies to arbitrary scalar valued random variables, not necessarily sources on $[0,1]$ that admit densities. Here, the costs functions are assumed to be quadratic, which is also the case in this paper.  

\begin{theorem}[Sar{\i}ta\c{s} et al. {\cite[Theorem~3.2 and Theorem~3.4]{tacWorkArxiv}}] \label{thm:SSTac}
Let $M$ be a real-valued random variable with an arbitrary probability measure. Let the strategy set of the encoder consists of the set of all measurable (deterministic) functions from $\mathbb{M}$ to $\mathbb{X}$. Then, an equilibrium encoder policy has to be quantized almost surely, that is, it is equivalent to a quantized policy for the encoder in the sense that the performance of any equilibrium encoder policy is equivalent to the performance of a quantized encoder policy. Furthermore, the quantization bins are convex. In addition, the quantized nature of equilibria does not necessarily hold when the source is vector-valued.
\end{theorem}

A set of related results involves multi-stage (dynamic) cheap talk game setup considered in \cite{dynamicGameArxiv}. In particular, \cite[Theorem~3]{dynamicGameArxiv} establishes the quantized nature of the last stage in a repeated cheap talk game and \cite[Theorem~5]{dynamicGameArxiv} makes further assumptions on the single-stage setup to prove that encoding policies at all stages must be quantization policies with a finite number of bins at each stage. In this latter result, it is assumed that there exists an upper bound on the number of bins at the equilibria in addition to an upper bound on the number of equilibria for a given number of bins. It is noted that this paper draws conclusions regarding these assumptions for log-concave sources. 

Moreover, our results in this paper are closely related to some foundational results by Kieffer \cite{kieffer83} where optimal quantization in the classical sense (i.e., in a team theoretic setup\footnote{A setup is referred to as team theoretic if all the decision makers in the system wish to minimize a common objective function, and thus share a common goal.}) is considered. This problem can be viewed as a communication problem where an encoder decides on quantization bins and a decoder determines reconstruction values where the encoder and the decoder wishes to minimize a common cost function. In this setting, \cite{kieffer83} shows that if the source is log-concave there exists a unique locally optimal quantization policy for a given number of bins under a set of assumptions on the common cost function. The uniqueness result makes the corresponding quantizer globally optimal for a given number of bins. 

\begin{theorem}[Kieffer \cite{kieffer83}]\label{thm:kieffer}
If the source is log-concave and the common cost function is convex, strictly increasing and continuously differentiable, then there exists a unique locally optimal quantizer and Lloyd-Max iterations (i.e., Lloyd's Method I) converge to this quantizer. 
\end{theorem}

Note that the special case of quadratic cost criteria with $b\neq 0$ is analyzed in this paper. For the case of $b=0$ leading to a team theoretic setup, Theorem~\ref{thm:kieffer} quoted above proves the uniqueness of the optimal quantizer for a given number of bins and in addition to convergence to this optimal quantizer via Lloyd-Max iterations. In that respect, this paper generalizes these uniqueness and convergence results to the game theoretic setup where $b\neq 0$. We will use Kieffer's analysis in some crucial steps of our paper.

\section{Nash Equilibria for Sources with Log-Concave Distributions}\label{sec:LogConcave}

Before presenting our results, we make the following definitions.

\begin{definition}\label{defn:logConcave}
A probability density function $f$ on $\mathbb{R}$ is said to be (strictly) log-concave if $\log f$ is a (strictly) concave function on the support of $f$ where the support is an interval. 
\end{definition}

We note that distributions such as Gaussian, exponential, Laplace, Rayleigh and uniform, which are commonly encountered in information and communication theoretic applications, are log-concave (see Section~\ref{sec:specialLogConcave} for more specific results for the cases where the source is exponential, Gaussian and \ekRev{uniform).} The reader is referred to \cite{bagnoli2005log} to see a list of common distributions which are log-concave or log-convex.

\begin{definition}\label{defn:support}
For a probability density function $f$, we say that:
\begin{enumerate}[(i)]
\item $f$ has a two-sided unbounded support if its support is $\mathbb{R}$.
\item $f$ has a semi-unbounded support if its support is either an interval in the form $(-\infty,m_U]$ or an interval in the form $[m_L,\infty)$ for some $m_U,m_L\in\mathbb{R}$.
\item $f$ has a bounded support if $f(x)=0$ for all $x\notin [m_L,m_U]$ for some $-\infty<m_L<m_U<\infty$. 
\end{enumerate}
\end{definition}

\subsection{Existence and Uniqueness of Equilibria}

In the following theorem, we present existence and uniqueness results for sources with a log-concave density and two-sided unbounded support. Note that an equilibrium $N=1$ bins is equivalent to the case where the encoder and decoder do not communicate, as remarked below.

\begin{remark}\label{rem:babbling}
If there is only one bin; i.e., $N=1$, then an equilibrium is called a non-informative (babbling) equilibrium \cite{SignalingGames}. Under such an equilibrium, the encoder transmits a message that is independent of the source and the decoder takes an action only by considering prior distribution of the source (since the received message is useless).
\end{remark}

\begin{theorem}\label{thm:LogConcaveTwoSidedExistence}
Consider a source that has a log-concave density with two-sided unbounded support. Then, \ekRev{$N_{\max}(b)=\infty$ and }for any $N\geq 1$, there exists a unique equilibrium with $N$ bins.
\end{theorem}
\begin{proof}
See Section~\ref{ProofLogConcaveTwoSidedExistence}.
\end{proof}

\begin{remark}
In the proof of Theorem~\ref{thm:LogConcaveTwoSidedExistence}, we employ the property that the mean of a source with a log-concave density is finite. This finite mean property follows from the fact that the tails of a log-concave density are at most exponential \cite[Lemma~1]{CuleSamworthEJS10}. In the following, we restate this result for the special case of densities defined on $\mathbb{R}$.
\end{remark}
\begin{lemma}[{\cite[Lemma~1]{CuleSamworthEJS10}} ] \label{lem:expTail}
For a log-concave density $f$ defined on $\mathbb{R}$, there exists $a>0$ and $b\in\mathbb{R}$ such that $f(x)\leq \exp(-a|x|+b)$ for all $x\in \mathbb{R}$.
\end{lemma}

\begin{remark}
It is noted that in the proof of Theorem~\ref{thm:LogConcaveTwoSidedExistence} the source distribution is assumed to be strictly log-concave. In fact, the result holds for distributions which are only log-concave. In that case, for the result to hold, it is required that the density is not log affine everywhere. This is indeed true for a log-concave distribution with two-sided unbounded support.
\end{remark}

\begin{remark}
For a source log-concave density with two-sided unbounded support, there also exist equilibria with infinitely many bins (see Corollary~\ref{cor:infManyLogConcaveTwoSided}). In order to prove this result, we show that a log-concave density satisfies the assumptions made in Section~\ref{sec:general} where the proof for existence of infinitely many bins essentially relies on Tychonoff's fixed-point theorem \cite{fixedPointBook}.
\end{remark}

Theorem~\ref{thm:LogConcaveTwoSidedExistence} reveals that there does not exist a \ekRev{finite} upper bound on the number of bins for an equilibrium considering log-concave sources with two-sided unbounded support. Namely, there exist countably infinite number of distinct equilibria for log-concave sources with two-sided unbounded support. 

\ekRev{In the proof of Theorem~\ref{thm:LogConcaveTwoSidedExistence}, we build on the idea that one can obtain equilibria with higher number of bins by varying the left-most or right-most bin edge depending on the sign of the bias $b$. A similar approach is taken in \cite[Theorem~1]{SignalingGames} for sources with a bounded support where the authors first prove the existence of an upper bound on the number of bin in equilibrium denoted by $N_{\max}(b)$ and then show the existence of an equilibrium with $N$ bins for each $N\in\{1,\dots,N_{\max}(b)\}$. In contrast, Theorem~\ref{thm:LogConcaveTwoSidedExistence} already reveals that there does not exist a finite upper bound on the number of bins in equilibrium in the case of log-concave sources with two-sided unbounded support. Note also that in the case of sources with a bounded support, the problem reduces to proving the existence of solutions to difference equations with given initial and terminal conditions, which do not exist in the case of sources with two-sided unbounded support. With these observations, we begin with proving the existence of an equilibrium with $N=2$ bins and then by varying the left-most or right-most bin edge we increment $N$ one by one. We also note that in the case of sources with a bounded support an additional monotonicity assumption (M) with regard to the behavior of bin edges guarantees the uniqueness of an equilibrium for a given number of bins as stated in \cite[Lemma~3]{SignalingGames}. In contrast, in Theorem~\ref{thm:LogConcaveTwoSidedExistence} of this paper, we prove the existence and uniqueness results together in the case of (log-concave) sources with two-sided unbounded support.}

The uniqueness result of Theorem~\ref{thm:LogConcaveTwoSidedExistence} is related to the classical results for optimal quantization in \cite{trushkin82} and \cite{kieffer83} where the aim is to find quantization bins and reconstruction values that minimize a common cost function (i.e., team theoretic setup). \ekRev{As stated in Theorem~\ref{thm:kieffer} of this paper, \cite{kieffer83} particularly focuses on optimal quantization of log-concave sources and proves the uniqueness in such a team theoretic setup.} % In that respect, Theorem~\ref{thm:LogConcaveTwoSidedExistence} generalizes the uniqueness result of \cite{kieffer83} \ekRev{for sources with two-sided unbounded support} to the signaling games setup under misaligned quadratic cost structure with a bias term appearing in the cost function of the encoder.

The existence and uniqueness results of Theorem~\ref{thm:LogConcaveTwoSidedExistence} can be generalized to sources with semi-unbounded support. In contrast, for sources with semi-unbounded support, there may be a \ekRev{finite} upper bound on the number of bins depending on the sign of $b$.

\begin{theorem}\label{thm:LogConcaveOneSidedExistence}
\begin{enumerate}[(i)]
\item Consider a source that has a log-concave density with a support on $[m_L,\infty)$ for some $m_L\in\mathbb{R}$. \ekRev{If $b>0$, then $N_{\max}(b)=\infty$, and for any $N\geq 1$, there exists a unique equilibrium with $N$ bins.}
\item Consider a source that has a log-concave density with a support on $[m_L,\infty)$ for some $m_L\in\mathbb{R}$. \ekRev{If $b<0$, then $N_{\max}(b)<\infty$, and for any $N$ satisfying $1\leq N \leq N_{\max}(b)$, there exists a unique equilibrium with $N$ bins.}
\item Consider a source that has a log-concave density with a support on $(-\infty,m_U]$ for some $m_U\in\mathbb{R}$. \ekRev{If $b<0$, then $N_{\max}(b)=\infty$, and for any $N\geq 1$, there exists a unique equilibrium with $N$ bins.}
\item Consider a source that has a log-concave density with a support on $(-\infty,m_U]$ for some $m_U\in\mathbb{R}$. \ekRev{If $b>0$, then $N_{\max}(b)<\infty$, and for any $N$ satisfying $1\leq N \leq N_{\max}(b)$, there exists a unique equilibrium with $N$ bins.}
\end{enumerate}
\end{theorem}

\begin{proof}
See Section~\ref{ProofLogConcaveOneSidedExistence}.
\end{proof}

\begin{remark}
In the case of semi-unbounded support, the result of Theorem~\ref{thm:LogConcaveOneSidedExistence} holds even for densities which are log affine everywhere on its support. In fact, the exponential distribution, which is investigated in more detail in the paper, is an example of such a distribution.
\end{remark}

\begin{remark}
For a source with a log-concave density and semi-unbounded support, in the cases when there does not exist a finite upper bound on the number of bins in equilibrium, there also exist equilibria with infinitely many bins (see Corollary~\ref{cor:infManyLogConcaveOneSided}). This result is obtained by showing that a log-concave density satisfies the assumptions made in Section~\ref{sec:general}.
\end{remark}

The following theorem specifies a necessary and sufficient condition for the existence of an informative equilibrium, which is proven in a similar manner to Theorem~\ref{thm:LogConcaveOneSidedExistence}.

\begin{theorem}\label{thm:logConcaveNoninformative}
\begin{enumerate}
\item[(i)] For a source that has a log-concave density with a support on $[m_L,\infty)$ for some $m_L\in\mathbb{R}$ and has a mean $\mathbb{E}[M]=\mu$, the only equilibrium is non-informative if $2b\leq -(\mu-m_L)$. Otherwise, there exists an (unique) equilibrium with two bins. 
\item[(ii)] For a source that has a log-concave density with a support on $(-\infty,m_U]$ for some $m_U\in\mathbb{R}$ and has a mean $\mathbb{E}[M]=\mu$, the only equilibrium is non-informative if $2b\geq (m_U-\mu)$. Otherwise, there exists an (unique) equilibrium with two bins. 
\end{enumerate}
\end{theorem}

\begin{proof}
See Section~\ref{ProofLogConcaveNoninformative}.
\end{proof}

We note that for sources with a density supported on $[0,1]$ Crawford and Sobel prove that for a given $b$ there exists a \ekRev{finite} upper bound on the number of bins expressed as $N_{\max}(b)$ and that for any $1\leq N \leq N_{\max}(b)$ there exists an equilibrium with $N$ bins in \cite[Theorem~1]{SignalingGames} (see Theorem~\ref{thm:CS} of this paper). This existence result is valid even for a source without a log-concave density. As mentioned earlier, \cite[Lemma~3]{SignalingGames} presents a uniqueness result for sources with a bounded support under an additional monotonicity assumption (M). \ekRev{In fact, \cite[Theorem~1]{Szalay12} shows that this monotonicity condition (M) holds for compactly supported log-concave source distributions, and thus, establishes the uniqueness of equilibrium for a given number of bins by using \cite[Lemma~3]{SignalingGames}. We also note that our result in Lemma~\ref{lem:monotonicity} when applied to compactly supported log-concave source distributions also reveals that the monotonicity condition (M) holds.}

%\begin{theorem}[{\ekRev{\cite[Theorem~1]{Szalay12}}}]\label{thm:LogConcaveBoundedExistence}
%For a log-concave density with a support on $[m_L,m_U]$ with $m_L<m_U$ and $m_L,m_U\in \mathbb{R}$, an equilibrium with $N$ bins is unique for $1\leq N\leq N_{\max}(b)$. 
%\end{theorem} 

\subsection{Equilibrium Costs}\label{sec:equilibriumSelection}
So far, the number of bins at the equilibria has been investigated for sources with two-sided unbounded support and semi-unbounded support. At this point, it is interesting to examine the behavior of the encoder's and decoder's expected costs in equilibrium with respect to number of bins. In the following theorem, we show that an equilibrium with $N$ bins payoff dominates an equilibrium with $K$ bins if $K<N$ \cite{eqSelectionBook}. In other words, an equilibrium with $N$ bins leads to a smaller expected cost for both of the player than an equilibrium with $K$ bins if $K<N$. Before presenting this result, we first make the following remark regarding the relation between the expected costs of the encoder and decoder in equilibrium.

\begin{remark}\label{rem:equilibriumSelection}
Note that at an equilibrium of this quantization game, the relation between the encoder cost and the decoder cost can be expressed as follows:
\begin{align*}
&J^e(\gamma^{*,e}, \gamma^{*,d}) = \sum_{i=1}^{N}\mathrm{Pr}(\mm_{i-1}\leq M<\mm_i)\nn\\
&\times\mathbb{E}\left[(M-\mathbb{E}[M|\mm_{i-1}\leq M<\mm_i]-b)^2|\mm_{i-1}\leq M<\mm_i\right] \nn\\
&=\sum_{i=1}^{N}\mathrm{Pr}(\mm_{i-1}\leq M<\mm_i)\Big(b^2\nn\\
&+\mathbb{E}\left[(M-\mathbb{E}[M|\mm_{i-1}\leq M<\mm_i])^2|\mm_{i-1}\leq M<\mm_i\right]\Big) \nn\\
&= J^d(\gamma^{*,e}, \gamma^{*,d})+b^2 .
\end{align*}
Thus, the difference between the expected cost of the encoder and the decoder in equilibrium is always $b^2$ regardless of the number of bins under the quadratic cost assumption. This implies that if an equilibrium with more bins induces smaller expected cost for one of the players, then it also induces smaller expected cost for the other player.
\end{remark}

\begin{theorem}\label{thm:LogConcaveCost}
For sources with a log-concave density, an equilibrium with more bins induces smaller expected costs for both of the players. In other words, denoting expected costs of the encoder and the decoder by $J^{e,N}$ and $J^{d,N}$ at the unique equilibrium with $N$ bins, we have $J^{e,N}<J^{e,K}$ and $J^{d,N}<J^{d,K}$ if $K<N$.
\end{theorem}
\begin{proof}
See Section~\ref{ProofLogConcaveCost}.
\end{proof}

We note that for sources with a support on $[0,1]$ Crawford and Sobel establish that an equilibrium with more bins induces smaller expected costs in \cite[Theorem~3]{SignalingGames} and \cite[Theorem~5]{SignalingGames} from the perspective of encoder and decoder, respectively, under the monotonicity assumption (M), which involves sources with a bounded support. On the other hand, Theorem~\ref{thm:LogConcaveCost} proves such an informativeness result for log-concave sources with two-sided unbounded support, and this result also generalizes to sources with semi-unbounded support as well as sources with a bounded support.

\ekRev{In the proof of Theorem~\ref{thm:LogConcaveCost}, we again use the idea of varying the left-most or right-most bin edge building on the approach employed in the proof of \cite[Theorem~3]{SignalingGames}, which considers sources with a bounded support and makes the monotonicity assumption (M) mentioned earlier. In our proof, we first show that it is possible to obtain an equilibrium with $N$ bins starting from an equilibrium with $N+1$ by varying the value of left-most or right-most depending on the sign of $b$ and then prove that the expected cost is monotonic during this procedure. Here, the log-concave source assumption ensures that the corresponding left-most or right-most bin edge is monotonic with respect to the number of bins in equilibrium. Moreover, the approach taken in the proof of Theorem~\ref{thm:LogConcaveCost} builds also on the proof of Theorem~\ref{thm:LogConcaveTwoSidedExistence} with a slight modification which enables us to observe the monotonic behavior of the expected costs.}

Theorem~\ref{thm:LogConcaveCost} shows that an equilibrium with more bins leads to a better expected cost in equilibrium for both of the players. When $b=0$ (i.e., team theoretic setup), the problem reduces to the classical optimal quantization problem under quadratic cost structure and the result of Theorem~\ref{thm:LogConcaveCost} is in fact valid also when $b=0$. In other words, the classical optimal quantization setup with the quadratic cost criterion is a special case of Theorem~\ref{thm:LogConcaveCost}.

\begin{remark}\label{rem:exante}
In Theorem~\ref{thm:LogConcaveCost}, we compare expected costs of the players in equilibrium and prove that these costs are monotonically decreasing with respect to number of bins. This directly implies that the decoder always prefers an equilibrium with more bins. In that respect, our result is a generalization of \cite[Theorem~3]{SignalingGames} for log-concave sources. On the other hand, our result implies that the encoder ex ante (before observing a realization of the source) prefers an equilibrium with more bins. This is a generalization of \cite[Theorem~5]{SignalingGames} for log-concave sources. In that respect, our results reveal that if we compare (ex ante) expected costs of the encoder and decoder in equilibrium among all possible Nash equilibria, the one with the maximum possible number of bins gives the smallest expected cost, i.e. the most informative equilibrium.
\end{remark}

\begin{remark}
Without a log-concave assumption, such a monotonic behavior of the expected costs in equilibrium with respect to number of bins may not hold in general. For instance, we provide a simple example where there exists an equilibrium with $N=3$ bins for which the expected costs in equilibrium is worse than an equilibrium with $N=2$ bins. Although we provide an example with a discrete distribution for simplicity, one can also construct an example with a continuous density. 
\end{remark}

\begin{example}
Let the bias term be $b=0.9$. Suppose that the distribution of the source is given by the following discrete distribution: 
\begin{align}
p_M(m) = 
\begin{cases}
0.8& \text{if }m=0\\
0.05& \text{if }m\in\{2,4,6,8\}\\
0 & \text{otherwise}
\end{cases}.
\end{align}
Consider an encoding policy with $N=2$ bins where the quantization bins are given by $[0,m_1)$ and $[m_1,8]$ with $m_1=3.9588$. The corresponding centroids are given by $u_1=\mathbb{E}[M|M<m_1] =  0.1176$ and $u_2=\mathbb{E}[M|m_1\leq M] =  6$. Notice that the nearest neighbor condition $m_1=(u_1+u_2)/2+b$ is satisfied with these centroids. This leads to an expected cost value of $J^d=0.5882$ for the decoder at the characterized equilibrium. Now consider an encoding policy with $N=3$ bins where the quantization bins are given by $[0,m_1)$, $[m_1,m_2)$ and $[m_2,8]$ with $m_1=4.0667$ and $m_2=7.9$. The corresponding centroids are given by $u_1=\mathbb{E}[M|M<m_1] =  0.3333$, $u_2=\mathbb{E}[M|m_1\leq M<m_2] =  6$ and $u_3=\mathbb{E}[M|m_2\leq M] =  8$. With these centroids, the nearest neighbor conditions are satisfied, i.e., $m_1=(u_1+u_2)/2+b$ and  $m_2=(u_2+u_3)/2+b$. This leads to an expected cost value of $J^d=0.9$ for the decoder at the characterized equilibrium. Therefore, the characterized equilibrium with two bins payoff dominates the characterized equilibrium with three bins, which is in contrast to the result of Theorem~\ref{thm:LogConcaveCost} involving log-concave sources.
\end{example}

%\ekRev{\begin{remark}
%The result of Theorem~\ref{thm:LogConcaveCost} is referred to as the payoff dominant equilibrium selection criterion \cite{eqSelectionBook}. Due to the discussion in Remark~\ref{rem:exante}, such a criterion may not be desirable in certain scenarios as it requires a coordination for equilibrium selection among the players before the encoder observes the source realization. With this observation, \cite{SobelSelectionEconometrica08} proposes another equilibrium selection criterion for sources with a bounded support. This criterion is called no incentive to separate (NITS) condition, which is interpreted as follows: At an equilibrium that satisfies the NITS condition, when the encoder observes the smallest source realization, the encoder weakly prefers playing the equilibrium strategy over the full revelation strategy. This criterion is motivated by  \cite{SobelSelectionEconometrica08}. In fact, due to \cite[Proposition~3]{SobelSelectionEconometrica08} and \cite[Theorem~1]{Szalay12}, if the source is log-concave supported on a bounded interval, then only the unique equilibrium with the largest number of bins (i.e., with $N_{\max}(b)$ bins) satisfies the NITS condition.
%\end{remark}}

\subsection{Convergence to Equilibria}

It is interesting to investigate if the best response iterations converge to the unique equilibrium. Before presenting our result, we note that in the classical optimal quantization setup with aligned cost structure (i.e., team theoretic setup) and a log-concave density, the uniqueness of a fixed point and convergence to the unique fixed point via fixed point iterations are proven in \cite{kieffer83} (see Theorem~\ref{thm:kieffer} of this paper). In the following theorem, we present a signaling games counterpart of such a convergence result where the proof is based on the result of \cite{kieffer83}. In particular, when modified Lloyd-Max iterations (i.e., by using \eqref{centroid} and \eqref{eq:centroidBoundaryEq}) are performed starting from an initial set of bin edges, the unique equilibrium (for the given number of bins) is reached for strictly log-concave sources with two-sided unbounded support. These iterations in the classical team theoretic setting is referred to as Lloyd's Method~I \cite{LloydIT82} where bin edges and centroids are updated in a parallel fashion as described below.\footnote{Note that parallel updates in Lloyd's Method~I indeed models the scenario where the encoder and the decoder update their policies iteratively in our game theoretic setting. Lloyd's Method~II \cite{LloydIT82} is another heuristic technique to obtain a quantization policy in communication theoretic settings where bin edges and centroids are updated serially by taking the nearest neighbor and centroid conditions into account, i.e., given a value for the first centroid, compute the first bin edge and then compute the second centroid, and so on.}

A modified Lloyd-Max iteration can be defined as follows: Let $m_1<\dots<m_{N-1}$ be an initial set of bin edges and denote $\boldsymbol{m}\triangleq [m_1,\dots,m_{N-1}]^T$. Given these bin edges, the decoder first determines its best response by computing the centroids via
\begin{align}
u_k = \mathbb{E}[M|\mm_{k-1}\leq M<\mm_k]
\end{align}
for $k=1,\dots,N$, where $m_0=m_L$ and $m_N=m_U$ remain the same during these iterations. We denote this operation by $\mathrm{BR}_d(\boldsymbol{m})= \boldsymbol{u}$ where $\boldsymbol{u}\triangleq [u_1,\dots,u_{N}]^T$. Then, the encoder computes the nearest neighbors for the resulting centroids $\boldsymbol{u}$ via
\begin{align}
m_k=\frac{u_k+u_{k+1}}{2}+b
\end{align}
for $k=1,\dots,N-1$. We denote this operation by $\mathrm{BR}_e(\boldsymbol{u})= \boldsymbol{m}$. With these best response characterizations, a modified Lloyd-Max iteration is defined as
\begin{align}
T(\boldsymbol{m}) \triangleq \mathrm{BR}_e(\mathrm{BR}_d(\boldsymbol{m})).\label{eq:LloydMaxIter}
\end{align}
The operation in \eqref{eq:LloydMaxIter} is applied iteratively to update an initial set of bin edges. In the following theorem, we show that by repeating these iterations, it is guaranteed that a set of bin edges that leads to an equilibrium with $N$ bins is obtained under a strict log-concave source assumption.

\begin{theorem}\label{thm:convergence}
For a strictly log-concave source with two-sided unbounded support, best response iterations (i.e., modified Lloyd-Max iterations) through \eqref{eq:LloydMaxIter} starting with $N$ bins always converge to the unique equilibrium with $N$ bins.
\end{theorem}
\begin{proof}
See Section~\ref{sec:convergenceProof}.
\end{proof}

\begin{remark}
We note that \cite{kieffer83} proves convergence in a team theoretic setup even for log-concave sources which are not necessarily strictly log-concave. To prove this result, \cite{kieffer83} uses the fact that the common cost is strictly decreasing as long as the corresponding set of bin edges is not a fixed point. Since there is no common cost in our signaling games setup, such an approach is not feasible.
\end{remark}

In the case of sources with semi-unbounded support, convergence to the unique equilibrium is guaranteed depending on the sign of $b$.
\begin{theorem}\label{thm:convergenceSemiInf}
\begin{enumerate}[(i)]
\item Consider a source that has a strictly log-concave density with a support on $[m_L,\infty)$ for some $m_L\in\mathbb{R}$. If $b>0$, then best response iterations through \eqref{eq:LloydMaxIter} starting with $N$ bins always converge to the unique equilibrium with $N$ bins. 
\item Consider a source that has a strictly log-concave density with a support on $(-\infty,m_U]$ for some $m_U\in\mathbb{R}$. If $b<0$, then best response iterations through \eqref{eq:LloydMaxIter} starting with $N$ bins always converge to the unique equilibrium with $N$ bins. 
\end{enumerate}
\end{theorem}
\begin{proof}
The result essentially follows from Theorem~\ref{thm:convergence} after observing that under the given assumptions, the bin edges are always greater than the lower boundary or larger than the upper boundary during best response iterations. 
\end{proof}

For sources with semi-unbounded support, if the sign of $b$ does not satisfy the conditions of Theorem~\ref{thm:convergenceSemiInf}, convergence is not guaranteed, see the following remark. This is in contrast to the case of no bias \cite{kieffer83}.

\begin{remark}
For sources with semi-unbounded support, even if there exists an equilibrium with $N$ bins for a given bias $b$, an initial set of bin edges forming a partition with $N$ bins may not converge to the unique equilibrium with $N$ bins. For instance, for a source with a support on $[m_L,\infty)$, if the initial values for the first two bin edges (i.e., $m_1$ and $m_2$ with $m_L<m_1<m_2$) are close to $m_L$ and the bias is negative (i.e., $b<0$), then after the first iteration, the resulting first bin edge may be less than the value of $m_L$. As a result, the sender reduces the number of bin edges during the best response iterations. This implies that even though iterations start with $N$ bins, they may converge to an equilibrium with $K$ bins with $K<N$. In this case, it is guaranteed that best response iterations converge to an informative equilibrium with at least two bins (assuming it exists).   
\end{remark}
A similar result also holds for sources with bounded support.
\begin{remark}
For sources with bounded support, best response iterations starting with $N$ bins may not converge to the unique equilibrium with $N$ bins regardless of the value or the sign of $b$. Nevertheless, best response iterations always converge to an informative equilibrium with at least two bins (assuming it exists).
\end{remark}

\section{Nash Equilibria for Sources with More General Distributions}\label{sec:general}

In this section, the number of bins at the equilibria is investigated for sources with more general distributions satisfying certain assumptions. First, sources with semi-unbounded support are considered, then the results on the distributions with two-sided unbounded support are presented. 

\subsection{Sources with Semi-Unbounded Support}

Before the analysis, we make the following assumption on the source\footnote{Even though we consider sources supported on $[a,\infty)$, the results in this section can be extended for sources supported on the interval $(-\infty,a]$.}:
\begin{assumption}\label{assum:semiInfinite}
The distribution of the source $M$ satisfies the following conditions:	
\begin{enumerate}[(i)]
\item \textit{(Continuous density)} The source admits a continuous density. 
\item \textit{(Finite mean with semi-unbounded support)} The source is supported on the interval $[a,\infty)$ and $\mathbb{E}[M]=\mu$ with $\mu\in\mathbb{R}$.
\item \textit{(Monotonicity of the centroid at the tail)} There exist $K\geq a$ and $\eta\geq0$ such that for any $t \geq K$, we have $\mathbb{E}[M|M\geq t] \leq t + \eta$.
\item \textit{(Monotonicity of the pdf at the tail)} $M$ has a monotonically decreasing pdf for $M\geq K$; in particular, for any $t \geq K$ and $h\geq 0$, we have $\mathbb{E}[M| t \leq M \leq t+h] \leq t+h/2 $.
\end{enumerate}
\end{assumption}

\begin{theorem} \label{FinThm}
Under Assumption~\ref{assum:semiInfinite}, at an equilibrium,
\begin{enumerate}[(i)]
\item \ekRev{if $b<0$, there can be at most $\lfloor \frac{\eta + (K-a)}{2|b|}+3 \rfloor$ bins and the bins in the interval $[K,\infty)$ have monotonically increasing bin-lengths.}
\item if $b>0$, there exist equilibria with infinitely many bins. 
\end{enumerate}
\end{theorem}
\begin{proof}
See Section~\ref{ProofFinThm}.
% \ekCom{In the above statement, I have corrected two minor issues. First, the proof in fact shows that $N \leq \lfloor \frac{K-a+\eta}{2|b|}+3\rfloor$ while in the previous version the statement was $N \leq \lfloor \frac{K-a+\eta}{2|b|}+2\rfloor$. Second, the proof shows that the bin lengths are increasing for bins in $M\geq K$ and not for other bins.}
\end{proof}

Although Theorem~\ref{FinThm} provides an upper bound on the number of bins at the equilibria, it does not give a necessary condition for the non-informative equilibrium. The following theorem provides the details:
\begin{theorem} \label{FinThm2}
Under Assumption~\ref{assum:semiInfinite}, if $b\leq-{K+\eta-a\over2}$, there cannot be any informative equilibria; i.e., there exist only non-informative equilibria.
\end{theorem}
\begin{proof}
See Section~\ref{ProofFinThm2}.
\end{proof}

\subsection{Sources with Two-Sided Unbounded Support}

Before the analysis, we make the following assumption on the source:
\begin{assumption}\label{assum:doubleInfinite}
The distribution of the source $M$ satisfies the following conditions:	
\begin{enumerate}[(i)]
\item \textit{(Continuous density)} The source admits a continuous density. 
\item \textit{(Finite mean with two-sided unbounded support)} The source is supported on the interval $(-\infty,\infty)$ and $\mathbb{E}[M]=\mu$ with $\mu\in\mathbb{R}$.
\item \textit{(Monotonicity at the positive tail)} There exist real $K$ and $\eta\geq0$ such that for any $t \geq K$, we have that $\mathbb{E}[M|M\geq t] \leq t + \eta$, and $M$ has a monotonically decreasing pdf for $M\geq K$; in particular, for any $t \geq K$ and $h\geq 0$, we have that $\mathbb{E}[M| t \leq M \leq t+h]  \leq t+h/2$.
\item \textit{(Monotonicity at the negative tail)} There exist real $S\leq K$ and $\nu\geq0$ such that for any $t \leq S$, we have that $\mathbb{E}[M|M\leq t] \geq t - \nu$, and $M$ has a monotonically increasing pdf for $M\leq S$; in particular, for any $t \leq S$ and $h\geq 0$, we have that $\mathbb{E}[M| t-h \leq M \leq t] \geq t +h/2$.  
\end{enumerate}
\end{assumption}

The following result proves the bounds on the number of bins and monotonicity of bin-lengths for the intervals $(-\infty,S]$ and $[K,\infty)$ depending on the sign of the bias $b$:
\begin{proposition} \label{prop:twoSidedMonotonicity}
Under Assumption~\ref{assum:doubleInfinite}, in an equilibrium,
\begin{enumerate}[(i)]
\item if $b>0$, the number of bins in the interval $(-\infty,S]$ is upper bounded, and the bin-lengths are monotonically decreasing for those bins. 
\item if $b<0$, the number of bins in the interval $[K,\infty)$ is upper bounded, and the bin-lengths are monotonically increasing for those bins.
\end{enumerate}
\end{proposition}
\begin{proof}
See Section~\ref{ProofTwoSidedMonotonicity}.
\end{proof}

\ekRev{For sources with two-sided unbounded support satisfying Assumption~\ref{assum:doubleInfinite}, there does not exist a finite upper bound on the number of bins in equilibrium as shown in the following:}
\begin{theorem} \label{thm:twoSidedInfinite}
Under Assumption~\ref{assum:doubleInfinite}, there exist equilibria with infinitely many bins. 
\end{theorem}
\begin{proof}
See Section~\ref{ProofTwoSidedInfinite}.
\end{proof}

\subsection{Existence of Equilibria with Infinitely Many Bins for Sources with Log-Concave Distributions}

In Theorem~\ref{thm:LogConcaveTwoSidedExistence} and Theorem~\ref{thm:LogConcaveOneSidedExistence}, we investigate existence of equilibria with finitely many bins for sources with a log-concave density. In fact, in the case when there does not exist a finite upper bound on the number of bins in equilibrium, the existence of equilibria with infinitely many bins can be shown by using the results in this section. Towards that goal, we show that the assumptions made in this section are satisfied for a log-concave density. 

\begin{lemma}\label{lem:logConcaveMoreGenaral}
\begin{enumerate}[(i)]
\item A log-concave density with a support on $[a,\infty)$ for $a\in\mathbb{R}$ satisfies Assumption~\ref{assum:semiInfinite}. 
\item A log-concave density with two-sided unbounded support satisfies Assumption~\ref{assum:doubleInfinite}. 
\end{enumerate}

\end{lemma}
\begin{proof}
The following considers sources with semi-unbounded support and the proof for two-sided unbounded support is similar. We know that a log-concave density must be continuous. In addition, due to Lemma~\ref{lem:expTail}, tails of a log-concave density are at most exponential. This implies that the mean must be finite. Moreover, since $(\mathbb{E}[M|M\geq t] - t) $ is decreasing in $t$ \cite{bagnoli2005log}, it follows that $\mathbb{E}[M|M\geq t] \leq t+(\mu-a) $ where $\mu=\mathbb{E}[M]$. Finally, the fourth assumption follows from the fact that a log-concave density is unimodal \cite{IbragimovUnimodal56}. 
\end{proof}

The following is a corollary of Theorem~\ref{FinThm} and Lemma~\ref{lem:logConcaveMoreGenaral}.
\begin{corollary}\label{cor:infManyLogConcaveOneSided}
Consider a source with a log-concave density supported on $[a,\infty)$ for some $a\in\mathbb{R}$. Then, there always exist equilibria with infinitely many bins.
\end{corollary}

The following is a corollary of Theorem~\ref{thm:twoSidedInfinite} and Lemma~\ref{lem:logConcaveMoreGenaral}.

\begin{corollary}\label{cor:infManyLogConcaveTwoSided}
Consider a source with a log-concave density with two-sided unbounded support. Then, there always exist equilibria with infinitely many bins.
\end{corollary}

\begin{remark}
It is seen that the class of distributions satisfying Assumption~\ref{assum:semiInfinite} or Assumption~\ref{assum:doubleInfinite} are more general distributions than the class of log-concave distributions. For instance, Assumption~\ref{assum:semiInfinite} is satisfied when log-concavity is assumed only at the tail with an additional continuity and finite mean assumptions. Therefore, one can construct a density which is not log-concave but satisfies Assumption~\ref{assum:semiInfinite} or Assumption~\ref{assum:doubleInfinite}. 
\end{remark}

\section{Nash Equilibria for Sources with Special Log-Concave Distributions: Exponential, Gaussian and Uniform Cases}\label{sec:specialLogConcave}

\ekRev{In Section~\ref{sec:LogConcave}}, we have investigated the cheap talk problem for general log-concave sources. In this section, we now focus on special distributions with a log-concave density and present more specific results.

\subsection{Exponential Distribution} \label{sec:exp}

In this subsection, the source is assumed to be exponential and the number of bins at the equilibria is investigated. Before delving into the technical results, we observe the following fact:
\begin{fact} \label{fact:exponential}
Let $M$ be an exponentially distributed random variable with a positive parameter $\lambda$, i.e., the probability density function (pdf) of $M$ is $f(m)=\lambda\me^{-\lambda m}$ for $m\geq0$. The expectation and the variance of an exponential random variable conditioned on the interval $[a,b]$ are $\mathbb{E}[M|a<M<b]={1\over\lambda}+a-{b-a\over\me^{\lambda (b-a)}-1}$ and $\mathrm{Var}\left(M|a<M<b\right)={1\over\lambda^2} - {(b-a)^2\over\me^{\lambda (b-a)}+\me^{-\lambda (b-a)}-2}$, respectively.
\end{fact}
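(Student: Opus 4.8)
The plan is to reduce the computation to the moments of an exponential random variable truncated to an interval of the form $[0,L]$, and then evaluate the two resulting elementary integrals by integration by parts. Assume $0\le a<b$ and set $L\triangleq b-a$. First I would invoke the memoryless property of the exponential distribution: conditioned on $M>a$, the shifted variable $M-a$ is again exponential with parameter $\lambda$, and since $\mathrm{Pr}(M<b\mid M>a)=1-\me^{-\lambda L}$, conditioned on $a<M<b$ the variable $T\triangleq M-a$ has the truncated density $g(t)=\lambda\me^{-\lambda t}/(1-\me^{-\lambda L})$ on $[0,L]$. Because $\mathbb{E}[M\mid a<M<b]=a+\mathbb{E}[T]$ and $\mathrm{Var}(M\mid a<M<b)=\mathrm{Var}(T)$ (variance is translation invariant), it suffices to compute $\mathbb{E}[T]$ and $\mathbb{E}[T^2]$.

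For the mean, integration by parts gives $\int_0^L \lambda t\,\me^{-\lambda t}\,dt = -L\me^{-\lambda L}+(1-\me^{-\lambda L})/\lambda$; dividing by the normalizing constant $1-\me^{-\lambda L}$ yields $\mathbb{E}[T]=\frac1\lambda-\frac{L\me^{-\lambda L}}{1-\me^{-\lambda L}}=\frac1\lambda-\frac{L}{\me^{\lambda L}-1}$, and adding $a$ and substituting $L=b-a$ gives the claimed expression for $\mathbb{E}[M\mid a<M<b]$.

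For the variance, a second integration by parts gives $\int_0^L \lambda t^2\,\me^{-\lambda t}\,dt = -L^2\me^{-\lambda L}-\tfrac{2L}\lambda\me^{-\lambda L}+\tfrac{2}{\lambda^2}(1-\me^{-\lambda L})$, so that $\mathbb{E}[T^2]=\frac{2}{\lambda^2}-\frac{L^2\me^{-\lambda L}}{1-\me^{-\lambda L}}-\frac{2L\me^{-\lambda L}}{\lambda(1-\me^{-\lambda L})}$. Subtracting $(\mathbb{E}[T])^2$, the two terms linear in $L$ cancel, leaving $\mathrm{Var}(T)=\frac{1}{\lambda^2}-\frac{L^2\me^{-\lambda L}}{1-\me^{-\lambda L}}-\frac{L^2\me^{-2\lambda L}}{(1-\me^{-\lambda L})^2}$. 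Putting the last two terms over the common denominator $(1-\me^{-\lambda L})^2$ collapses the numerator to $L^2\me^{-\lambda L}$, so $\mathrm{Var}(T)=\frac1{\lambda^2}-\frac{L^2\me^{-\lambda L}}{(1-\me^{-\lambda L})^2}$; finally the identity $(1-\me^{-\lambda L})^2=\me^{-\lambda L}(\me^{\lambda L/2}-\me^{-\lambda L/2})^2=\me^{-\lambda L}(\me^{\lambda L}+\me^{-\lambda L}-2)$ converts this into $\frac1{\lambda^2}-\frac{L^2}{\me^{\lambda L}+\me^{-\lambda L}-2}$, which is the claimed formula with $L=b-a$.

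There is no genuine obstacle here: Fact~\ref{fact:exponential} is a self-contained elementary computation. The only points requiring a little care are bookkeeping of the normalizing constant $1-\me^{-\lambda L}$ throughout, and the final algebraic simplification — noticing that the sum of the two $\me^{-\lambda L}$-terms in $\mathrm{Var}(T)$ telescopes and rewriting $(1-\me^{-\lambda L})^{-2}$ in terms of $\me^{\lambda L}+\me^{-\lambda L}-2$.
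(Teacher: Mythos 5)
Your computation is correct: the reduction via memorylessness to a truncated $\mathrm{Exp}(\lambda)$ on $[0,L]$ with $L=b-a$, the two integrations by parts, and the final simplification using $(1-\me^{-\lambda L})^2=\me^{-\lambda L}(\me^{\lambda L}+\me^{-\lambda L}-2)$ all check out and recover exactly the stated formulas. The paper itself states Fact~\ref{fact:exponential} without proof as a standard elementary fact, so there is nothing to compare against; your argument is a complete and appropriate justification.
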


Since the exponential distribution is log-concave, the following result is a corollary of Theorem~\ref{thm:LogConcaveOneSidedExistence}.

\begin{corollary}\label{cor:expExistence}
\ekRev{Consider an exponentially distributed source with parameter $\lambda$. If $b>0$, then $N_{\max}(b)=\infty$, and for any $N\geq1$, there exists a unique equilibrium with $N$ bins. On the other hand, if $b<0$, then $N_{\max}(b)<\infty$, and for any $N$ satisfying $1\leq N\leq N_{\max}(b)$, there exists a unique equilibrium with $N$ bins.}
\end{corollary}

In the case of a negative bias, the number of bins at the equilibrium is upper bounded, as stated in Corollary~\ref{cor:expExistence}, and the following proposition provides an upper bound on the number of bins in this case. In fact, this proposition presents an alternative proof for the fact that the number of bins at the equilibrium is upper bounded when $b<0$. Here, $\lfloor x\rfloor$ denotes the largest integer less than or equal to $x$.

\begin{proposition} \label{prop:expProp1}
Suppose $M$ is exponentially distributed with parameter $\lambda$. Then, for $b<0$, any Nash equilibrium has at most $\lfloor -\frac{1}{2b\lambda} + 1\rfloor$ bins.
\end{proposition}
\begin{proof}
See Section~\ref{sec:expProp1}.
\end{proof}

Next, we investigate the structure of the bin edges at the equilibrium. 

\begin{proposition}\label{prop:expProp2}
Consider an exponentially distributed source with parameter $\lambda$. Let $N\geq 1$ be given and suppose that the parameters $b$ and $\lambda$ are such that there exists an equilibrium with $N$ bins. Then, at this unique equilibrium, the following holds: 
\begin{enumerate}[(i)]
\item The bin-lengths $l_1=m_1$, $l_2=m_2-m_1$, $\dots$, $l_{N-1}=m_{N-1}-m_{N-2}$ satisfy the following:
\begin{subequations}
\begin{gather}
g(l_{N-1})={2\over\lambda}+2b ,\label{eq:compactRecursionLast}\\
g(l_k) 	= {2\over\lambda}+2b-h(l_{k+1}), \quad\text{for } k=1,2,\dots,N-2,\label{eq:compactRecursion}
\end{gather}
\end{subequations}
where $g(x) \triangleq \frac{xe^{\lambda x}}{e^{\lambda x}-1}$ and $h(x) \triangleq \frac{x}{e^{\lambda x}-1}$. 
\item The bin-lengths are monotonically increasing, i.e., $l_1< l_2< \dots< l_{N-1}$.
\end{enumerate}
\end{proposition}
\begin{proof}
See Section~\ref{sec:expProp2}.
\end{proof}

When the upper bound on the number of bins is investigated further, it is possible to derive conditions for the existence of equilibria with two or more bins. For instance, Theorem~\ref{thm:logConcaveNoninformative} gives a necessary and sufficient condition for the existence of informative equilibria for general log-concave sources in terms of the mean. In the special case of exponential sources, this translates to the condition $b>-{1\over2\lambda}$, which ensures the existence of equilibria with two bins. The following theorem states this result and further refines it for the existence of an equilibrium with three bins where the result is obtained by characterizing the equilibrium. In addition, it is possible to construct an equilibrium with infinitely many bins when $b>0$, as proven in the following theorem.

\begin{theorem}\label{thm:expExistence}
When the source has an exponential distribution with parameter $\lambda$, the following holds:
\begin{enumerate}[(i)]
\item There exists an equilibrium with at least two bins if and only if $b>-{1\over2\lambda}$. Otherwise, the only equilibrium is non-informative. 
\item There exists an equilibrium with at least three bins if and only if $b>-{1\over2\lambda}{e-2\over e-1}$.
\item For $b>0$, there exists a unique equilibrium with infinitely many bins. In particular, all bins must have a length of $l^*$, where $l^*$ is the solution to $g(l^*) = {2\over\lambda}+2b-h(l^*)$, with $g(x) = \frac{xe^{\lambda x}}{e^{\lambda x}-1}$ and $h(x) = \frac{x}{e^{\lambda x}-1}$.
\end{enumerate}
\end{theorem}
\begin{proof}
See Section~\ref{sec:expExistence}.
\end{proof}

Theorem~\ref{thm:LogConcaveCost} shows that an equilibrium with more bins is more informative for any log-concave density as the equilibrium costs of the encoder and the decoder monotonically decreases with the number of bins. The following extends this result considering an equilibrium with finitely many bins when $b>0$.

\begin{theorem} \label{thm:expMoreInformative}
When the source has an exponential distribution with parameter $\lambda$, the smallest expected equilibrium costs are attained with the maximum possible number of bins:
\begin{enumerate}[(i)]
\item for equilibria with $K$ and $N$ bins where $N>K$, the equilibrium with $N$ bins induces smaller expected costs in equilibrium for both of the players regardless of the value of $b$.
\item for $b>0$, the equilibrium with infinitely many bins yields the smallest expected costs in equilibrium for both of the players. 
\end{enumerate} 
\end{theorem}

\begin{proof}
See Section~\ref{sec:expMoreInformative}.
\end{proof}

Theorem~\ref{thm:expMoreInformative} implies that the unique equilibrium with maximum number of bins payoff dominates all other equilibria \cite{eqSelectionBook}.

\subsection{Gaussian Distribution}\label{sec:Gaussian}

Let $M$ be a Gaussian random variable with mean $\mu$ and variance $\sigma^2$; i.e., $M\sim\mathcal{N}(\mu,\sigma^2)$. Let $\phi(m)={1\over\sqrt{2\pi}}e^{-{m^2\over2}}$ be the pdf of a standard Gaussian random variable, and let $\Phi(b)=\int_{-\infty}^{b}\phi(m){\rm{d}}m$ be its cumulative distribution function (cdf). Then, the expectation of a truncated Gaussian random variable is the following:
\begin{fact}\label{fact:Gauss}
The mean of a Gaussian random variable $M\sim\mathcal{N}(\mu,\sigma^2)$ conditioned on the interval $[a,b]$ is $\mathbb{E}[M|a<M<b]=\mu - \sigma{\phi({b-\mu\over\sigma})-\phi({a-\mu\over\sigma})\over\Phi({b-\mu\over\sigma})-\Phi({a-\mu\over\sigma})}$.
\end{fact}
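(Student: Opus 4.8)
The plan is to evaluate the conditional expectation directly from its definition as a ratio of integrals against the $\mathcal{N}(\mu,\sigma^2)$ density, and then normalize to the standard Gaussian by a linear change of variables. Writing $f(m) = \frac{1}{\sigma}\phi\!\left(\frac{m-\mu}{\sigma}\right)$ for the density (and noting that conditioning on $a<M<b$ is the same as truncation to $[a,b]$, since the endpoints carry no mass), we have
\[
\mathbb{E}[M \mid a<M<b] = \frac{\int_a^b m\, f(m)\,\mathrm{d}m}{\int_a^b f(m)\,\mathrm{d}m}.
\]
Substituting $z=(m-\mu)/\sigma$, so that $m=\mu+\sigma z$ and $\mathrm{d}m=\sigma\,\mathrm{d}z$, and abbreviating $a'=(a-\mu)/\sigma$, $b'=(b-\mu)/\sigma$, the denominator becomes $\int_{a'}^{b'}\phi(z)\,\mathrm{d}z = \Phi(b')-\Phi(a')$, which already matches the denominator in the claimed expression, and the numerator becomes $\int_{a'}^{b'}(\mu+\sigma z)\phi(z)\,\mathrm{d}z = \mu\big(\Phi(b')-\Phi(a')\big) + \sigma\int_{a'}^{b'} z\,\phi(z)\,\mathrm{d}z$.

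The single computational ingredient is the elementary identity $\phi'(z) = -z\,\phi(z)$, immediate from $\phi(z)=\frac{1}{\sqrt{2\pi}}\me^{-z^2/2}$, which gives $\int_{a'}^{b'} z\,\phi(z)\,\mathrm{d}z = -\big(\phi(b')-\phi(a')\big)$. Dividing numerator by denominator then yields
\[
\mathbb{E}[M \mid a<M<b] = \mu - \sigma\,\frac{\phi(b')-\phi(a')}{\Phi(b')-\Phi(a')},
\]
which is exactly the stated formula after substituting back $a'=(a-\mu)/\sigma$ and $b'=(b-\mu)/\sigma$. One should also note that $\Phi(b')>\Phi(a')$ for every $a<b$ (since $\phi>0$ everywhere), so the conditioning event has positive probability and the ratio is well defined.

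I do not expect any genuine obstacle: this is a textbook property of truncated Gaussians, and the argument is a one-line change of variables together with the identity $\phi'=-\,\mathrm{id}\cdot\phi$. The only place that warrants care is the sign bookkeeping when turning $\int z\,\phi\,\mathrm{d}z$ into a difference of $\phi$-values, since that is precisely what produces the minus sign in front of $\phi(b')-\phi(a')$ in the final expression; everything else is routine.
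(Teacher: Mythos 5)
Your derivation is correct and complete: the paper states this as a Fact without proof (it is a standard property of truncated Gaussians), and your argument --- standardizing via $z=(m-\mu)/\sigma$ and integrating $z\,\phi(z)$ using $\phi'(z)=-z\,\phi(z)$ --- is exactly the textbook computation the paper implicitly relies on. The sign bookkeeping and the well-definedness remark about $\Phi(b')>\Phi(a')$ are both handled properly, so nothing is missing.
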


Since the Gaussian pdf is log-concave with two-sided unbounded support, we state the following result as a corollary of Theorem~\ref{thm:LogConcaveTwoSidedExistence}, \ekRev{Corollary~\ref{cor:infManyLogConcaveTwoSided}} and Theorem~\ref{thm:LogConcaveCost}.
\begin{corollary}\label{cor:gauss}
When the source has a Gaussian distribution as $M\sim\mathcal{N}(\mu,\sigma^2)$, for any $N\geq 1$, there always exists a unique equilibrium with $N$ bins regardless of the value of $b$. \ekRev{In addition, there always exist equilibria with infinitely many bins.} Moreover, the expected costs of the encoder and the decoder in equilibrium \ekRev{decrease} when more bins are used.
\end{corollary}

Since the pdf of a Gaussian random variable is symmetrical about its mean $\mu$, and monotonically decreasing in the interval $[\mu,\infty)$, the following can be obtained in a similar manner to Proposition~\ref{prop:twoSidedMonotonicity}:

\begin{proposition} \label{prop:gaussMonotonicity}
Consider the unique equilibrium with $N$ bins for a Gaussian source $M\sim\mathcal{N}(\mu,\sigma^2)$. Then,
\begin{enumerate}[(i)] 
\item if $b<0$, the bin-lengths in the interval $[\mu,\infty)$ are monotonically increasing and the number of bins in this interval is upper bounded by $\sigma\over2|b|$,
\item if $b>0$, the bin-lengths in the interval $(-\infty,\mu]$ are monotonically decreasing and the number of bins in this interval is upper bounded by $\sigma\over2b$.
\end{enumerate}
\end{proposition}
\begin{proof}
See Section~\ref{ProofGaussMonotonicity}.
\end{proof}

%We know that that for any finite $N$, there always exists an equilibrium with $N$ bins, independently of $b$. In fact, we may ask whether there always exists an equilibrium with infinitely many bins. The following theorem answers this question:
%
%\ekCom{The following result follows from Theorem~\ref{thm:twoSidedInfinite}. So, we may omit this result.}
%\begin{theorem}\label{thm:gaussInfinity}
%For the Gaussian source $M\sim\mathcal{N}(\mu,\sigma^2)$, there exist equilibria with infinitely many bins. 
%\end{theorem}
%
%\begin{proof}
%See Section~\ref{ProofGaussInfinity}.
%\end{proof}

\ekRev{As Corollary~\ref{cor:gauss} states, there exist equilibria with infinitely many bins when the source is Gaussian. The following states a property related to the bin-lengths for such equilibria.}

\begin{remark}\label{rem:gaussInfBinLength}
At the equilibrium with infinitely many bins, as the bin edges get very large in absolute value (i.e., $m_i\rightarrow\infty$ for $b>0$ and $m_i\rightarrow-\infty$ for $b<0$ as $i\to\infty$), the bin-lengths converge to $2|b|$.
\end{remark}
\begin{proof}
See Section~\ref{ProofGaussInfBinLength}.
\end{proof}

\subsection{Uniform Distribution}\label{sec:uniform}

As an example for source with a compactly supported density, we restate results by Crawford and Sobel \cite[Section~4]{SignalingGames} for the interesting scenario with a uniformly distributed source. The following result provides an expression for the upper bound on the number of bins. Here, $\lceil x\rceil$ denotes the smallest integer greater than or equal to $x$.

\begin{theorem}[Crawford and Sobel \cite{SignalingGames}]\label{thm:uniformExistenceUniqueness}
When $M$ is uniformly distributed in the interval $[0,1]$, the upper bound on the number of bins at the equilibrium is given by $N_{\max}(b)=\bigg\lceil -\frac{1}{2}+\frac{1}{2}\Big(1+\frac{2}{|b|}\Big)^{1/2}\bigg\rceil$, and for any $N\in\{1,\dots,N_{\max}(b)\}$, there exists a unique equilibrium with $N$ bins. In particular, when $|b|\geq 1/4$, the only equilibrium is non-informative with a single bin. 
\end{theorem}

Since the uniform distribution can be viewed to be log-concave, an equilibrium with more bins induces smaller expected costs for both the encoder and decoder compared to an equilibrium with less bins, as stated in the following:

\begin{theorem}[Crawford and Sobel \cite{SignalingGames}]\label{thm:uniformCost}
When $M$ is uniformly distributed in the interval $[0,1]$, the expected costs of the encoder and the decoder in equilibrium decrease as the number of bins increases.
\end{theorem}

One way of proving Theorem~\ref{thm:uniformCost} is to express the costs at the equilibrium and analyze the behavior of the cost with respect to the number of bins, as mentioned in \cite{SignalingGames}. Nevertheless, \cite{SignalingGames} shows this result by using the fact that the monotonicity condition (M) is satisfied for the uniform source scenario.

\section{Concluding Remarks}

In this paper, the Nash equilibria of cheap talk have been investigated for general log-concave sources. It has been shown that for any finite $N$, there exists a unique equilibrium with $N$ bins for sources with two-sided unbounded support. Similarly, it has been proven that for sources with semi-unbounded support, when the support extends to $+\infty$ and $b>0$, or the support extends to $-\infty$ and $b<0$, for any finite $N$, there exists a unique equilibrium with $N$ bins. On the other hand, in the converse case for sources with semi-unbounded support, it has been proven that there exists a \ekRev{finite} upper bound on the number of bins in an equilibrium. Furthermore, it has been shown that, as the number of bins increases, the expected costs of the encoder and the decoder in equilibrium \ekRev{decrease}, i.e., a more informative equilibrium is obtained. Moreover, we have proven that best response iterations converge to the unique equilibrium for sources with strictly log-concave density and two-sided unbounded support, generalizing the classical convergence results in optimal quantization via Lloyd's Method I. In addition, it has been shown that for exponential sources there exists a unique equilibrium with infinitely many bins when $b>0$. Moreover, sources which are not necessarily log-concave but satisfy certain assumptions have also been considered. For these source, the existence of a \ekRev{finite} upper bound on the number of bins has been investigated, and a \ekRev{finite} upper bound has been derived whenever it exists. For the case that there is no \ekRev{finite} upper bound, the existence of equilibria with infinitely many bins has been established. 

\section{Proofs}

\subsection{Proof of Theorem~\ref{thm:LogConcaveTwoSidedExistence}}\label{ProofLogConcaveTwoSidedExistence}

We first discuss the existence of an equilibrium with a single bin. A quantizer with a single bin means that the encoder does not reveal any information related to the source. The optimal receiver action in this case becomes the mean of the source. This pair of encoding and decoding policies leads to a Nash equilibrium. In this case, since the encoder does not convey information related to the source, the resulting equilibrium is a babbling equilibrium. Moreover, such an equilibrium always exists regardless of whether the source distribution is log-concave or not.

In order to prove that there exists an equilibrium with $N$ bins where $N\geq 2$, we take the following approach. We first show the existence of an equilibrium with two bins and then show that one can obtain an equilibrium with $N+1$ bins starting from an equilibrium with $N$ bins. These reveal that there exists an equilibrium with any finite number of bins. In addition, while obtaining the existence result, we also prove the uniqueness of an equilibrium for a given number of bins. In the following proof, we assume that the source is strictly log-concave. Nevertheless, our result also holds for a source with a log-concave density. In this case, one of the inequalities in Lemma~\ref{lem:monotonicity} must be strict for the result to hold and this is indeed the case for a density with two-sided unbounded support.

To begin with, we restate the centroid condition and nearest neighbor condition that need to be satisfied at an equilibrium. An ordered set of bin edges $m_1<\dots<m_{N-1}$ leads to an equilibrium with $N$ bins if the following conditions are satisfied:
\begin{gather}
\int_{-\infty}^{m_1} (u_1-m)f(m) dm = 0,\label{eq:thm1CentroidFirstBin} \\
\int_{m_i}^{m_{i+1}} (u_{i+1}-m)f(m) dm = 0,\quad \text{for }i=1,\dots,N-2,\label{eq:thm1CentroidithBin}\\
\int_{m_{N-1}}^{\infty} (u_N-m)f(m) dm = 0,\label{eq:thm1CentroidLastBin}\\
u_{i+1} = 2m_{i}-u_{i}-2b,\quad \text{for }i=1,\dots,N-1 \label{eq:thm1Indifference}
\end{gather}
where $u_1,\dots,u_N$ denote receiver actions. The aim is to show the existence of bin edges $m_1,\dots,m_{N-1}$ and receiver actions $u_1,\dots,u_N$ that satisfy \eqref{eq:thm1CentroidFirstBin}-\eqref{eq:thm1Indifference} for any $N\geq 2$ and also to prove that these bin edges and receiver actions are indeed unique. To that aim, we need some auxiliary results which are presented in the following lemmas. The first lemma derives a set of inequalities which always holds for a strictly log-concave density.  

\begin{lemma} \label{lem:inequalities}
For a random variable $M$ with a strictly log-concave density $f$, the following inequalities are always satisfied:
\begin{align}
\mathrm{Pr}(M<u) > &(u-\mathbb{E}[M | M < u]) f(u)\quad\text{for any }u\in\mathbb{R} 
\label{eq:Lemma1FirstBin}\\
\mathrm{Pr}(M>u) > &(\mathbb{E}[M | u< M]-u) f(u)\quad\text{for any }u\in\mathbb{R}
\label{eq:Lemma1LastBin}\\
\mathrm{Pr}(v<M<u)&> (u-\mathbb{E}[M | v< M < u])f(u) \nonumber \\  
&+ (\mathbb{E}[M | v< M < u]-v)f(v)
\label{eq:Lemma1ithBin}
\end{align}
for any $u,v\in\mathbb{R}$ with $v<u$.
\end{lemma}
Before presenting the proof, we note that a variant of the inequalities \eqref{eq:Lemma1FirstBin} and \eqref{eq:Lemma1LastBin} can be found in the literature considering log-concave densities with bounded support \cite{bagnoli2005log}. On the other hand, these inequalities need to be derived for the general case of densities with two-sided unbounded support for our purpose. Moreover, it is required to prove the additional property in \eqref{eq:Lemma1ithBin}, which is crucial in obtaining the monotonic behavior of the bin edges $m_2,\dots,m_N$ with respect to a change in $m_1$ in Lemma~\ref{lem:monotonicity}.

\begin{proof}
We note that the inequalities \eqref{eq:Lemma1FirstBin} and \eqref{eq:Lemma1LastBin} follow from \eqref{eq:Lemma1ithBin} and Lemma~\ref{lem:expTail}, where the latter states that a log-concave density has at most exponential tails. This exponential tail property implies that $uf(u)\to 0$ as $u\to\infty$ and that $vf(v)\to 0$ as $v\to-\infty$. In addition, since a source with a log-concave density has a finite mean, for a given $v\in\mathbb{R}$, we have that $(\mathbb{E}[M | v< M < u])f(u)\to 0$ as $u\to\infty$. Similarly, for a given $u\in\mathbb{R}$, we have that $(\mathbb{E}[M | v< M < u])f(v)\to 0$ as $v\to-\infty$. Hence, by taking $u\to\infty$ in \eqref{eq:Lemma1ithBin}, the inequality in \eqref{eq:Lemma1LastBin} is proven. Similarly, by taking $v\to-\infty$ in \eqref{eq:Lemma1ithBin}, we obtain the inequality in \eqref{eq:Lemma1FirstBin}. Therefore, in order to prove the lemma, it suffices to prove that \eqref{eq:Lemma1ithBin} holds. In the proof, we use the property that for a differentiable and strictly log-concave density $f$, we have that $\frac{f'(x)}{f(x)}$ is strictly monotonically decreasing in $x$, which directly follows from the definition of being log-concave. In the following, we first prove the result under an additional differentiable density assumption and then show that the result holds for any log-concave density.

It is seen that the inequality in \eqref{eq:Lemma1ithBin} is equivalent to 
\begin{align}
g(u) &\triangleq (F(u)-F(v))^2 +(u-v)\Big(F(v)f(u)-F(u)f(v)\Big) \nonumber \\
&-f(u)\int_{v}^{u} F(m)dm 
+f(v)\int_{v}^{u} F(m)dm
>0. \label{eq:defg}
\end{align}
For fixed $v$, the derivative of $g(u)$ with respect to $u$ is given by
\begin{align}
g'(u)
&= f(u)(F(u)-F(v)) - f'(u)\int_{v}^{u} F(m) dm \nonumber \\
&\hphantom{=}+(u-v) \Big(F(v)f'(u)-f(u)f(v)\Big) \nonumber \\
&=
\int_{v}^{u}
h(m) 
dm,\label{eq:derivativeOfg}
\end{align}
where 
\begin{align*}
h(m) \triangleq 
f(u)f(m) - f'(u)F(m) + F(v)f'(u) - f(u)f(v).
\end{align*}
Next, we show that the integrand $h(m)$ is always positive for $m>v$:
\begin{align}
h(m) 
&= f(u)\big(f(m)-f(v)\big)
- f'(u) \int_{v}^{m}f(t) dt \nonumber \\
&= f(u)\big(f(m)-f(v)\big)
- f(u) \frac{f'(u)}{f(u)} \int_{v}^{m}f(t) dt  \nonumber \\
&> f(u)\big(f(m)-f(v)\big)
- f(u)  \int_{v}^{m}\frac{f'(t)}{f(t)}f(t) dt  \nonumber \\
& = f(u)\big(f(m)-f(v)\big)
- f(u) \big(f(m)-f(v)\big) = 0, \label{eq:ineqforh}
\end{align}
where the inequality follows from the fact that $f$ is a strictly log-concave density. This implies that $g'(u)>0$ when $u>v$. Since $g(u)=0$ when $u=v$ and $g'(u)>0$ when $u>v$, it follows that $g(u)>0$ for all $u>v$. This shows that \eqref{eq:Lemma1ithBin} is satisfied for all $u>v$.

Now, we show that the result holds for any strictly log-concave density. Since $f$ is log-concave, it follows that $\log f$ is left and right differentiable. This implies that $f$ is left and right differentiable, as well. Denote the left and right derivative of $f$ by $f'_{-}$ and $f'_{+}$, respectively. Let $\log f(x) \triangleq \ell(x)$ for notational convenience. Since $f$ is strictly log-concave, we have that $\ell'_{-}(x)= \frac{f'_{-}(x)}{f(x)}$ and $\ell'_{+}(x)= \frac{f'_{+}(x)}{f(x)}$ are strictly monotonically decreasing in $x$. We also know that $g$ defined in \eqref{eq:defg} has left and right derivatives since $f$ has left and right derivatives. Then, by taking the left derivative of $g$ with respect to $u$ for fixed $v$, we get
\begin{align*}
g'_{-}(u) = \int_{v}^{u} \Big( &f(u)f(m) -f'_{-}(u)F(m)\\
& + F(v)f'_{-}(u)-f(u)f(v) \Big)dm.
\end{align*}
Then, a similar analysis as in \eqref{eq:ineqforh} yields $g'_{-}(u)>0$ when $u>v$, where we use the fact that $\frac{f'_{-}(x)}{f(x)}$ is strictly monotonically decreasing and that $\int_v^m f'_{-}(t)dt = f(m)-f(v)$ since $\log f$, and therefore $e^{\log f}$, is an absolutely continuous function in the interior of the support of $f$ (see e.g. \cite[p.~34-35]{niculescu2018convex} for the absolute continuity of convex/concave functions). Also, we analogously obtain $g'_{+}(u)>0$ when $u>v$. We also know that $g(u)=0$ if $u=v$. Hence, we get $g(u)>0$ for all $u>v$, as desired.
\end{proof}

Next, we use the inequalities derived in Lemma~\ref{lem:inequalities} to establish a monotonicity property of centroids and subsequent bin edges with respect to a shift in the first bin edge. Before presenting the lemma, for notational convenience, we make the following definitions which specify a collection consisting of a set of bin edges either in an ascending or descending order such that the nearest neighbor conditions with the corresponding centroids are satisfied at all bin edges except at the last bin edge. 

\begin{definition}
\begin{enumerate}
\item[(i)] Let $m_1<m_2<\cdots<m_{N}$, $u_1 = \mathbb{E}[M|M<m_1]$ and $u_i = \mathbb{E}[M|m_{i-1}\leq M <m_i]$ for $i=2,\dots,N$. Define $\mathcal{M}_A^N$ as the collection of all $\{m_i\}_{i=1}^N$ for which the nearest neighbor conditions at the boundaries $m_1,\dots,m_{N-1}$ are satisfied with the corresponding centroids $u_1,\dots,u_N$, i.e., $2m_i = u_i+u_{i+1}+2b$ for $i=1,\dots,N-1$.
\item[(ii)] Let $m_1>m_2>\cdots>m_N$, $u_1 = \mathbb{E}[M|m_1\leq M]$ and $u_i = \mathbb{E}[M|m_i\leq M <m_{i-1}]$ for $i=2,\dots,N$. Define $\mathcal{M}_D^N$ as the collection of all $\{m_i\}_{i=1}^N$ for which the nearest neighbor conditions at the boundaries $m_1,\dots,m_{N-1}$ are satisfied with the corresponding centroids $u_1,\dots,u_N$, i.e., $2m_i = u_i+u_{i+1}+2b$ for $i=1,\dots,N-1$. 
\end{enumerate}
\end{definition}

Notice that when a set of bin edges belongs to $\mathcal{M}_A^N$ or $\mathcal{M}_D^N$, it does not necessarily mean that this set of bin edges leads to an equilibrium. For this set of bin edges to form an equilibrium, the nearest neighbor condition at the last bin edge should also be satisfied. In our proof, we show that by varying the value of the first bin edge, it is possible to satisfy the last nearest neighbor condition. In fact, in the proof, it will be apparent that there exists a unique value for the first bin edge that leads to an equilibrium with $N$ bins and these unique values are monotone in $N$. This monotonicity property of the bin edges is also used to prove monotonically decreasing behavior of the expected costs with respect to the number of bins in Theorem~\ref{thm:LogConcaveCost}.

\begin{lemma}\label{lem:monotonicity}
Suppose that there exists a set of bin edges $\{m_i\}_{i=1}^N$ which belongs to $\mathcal{M}_A^N$ (resp. $\mathcal{M}_D^N$) with the corresponding set of centroids $\{u_i\}_{i=1}^N$. Let $\bar{u}_{N+1} = \mathbb{E}[M|m_N\leq M]$ (resp. $\bar{u}_{N+1} = \mathbb{E}[M|M < m_N]$) denote the centroid of the last bin and let $\tilde{u}_{N+1} = 2 m_N -u_N-2b$ denote the receiver action that satisfies the nearest neighbor condition at the bin edge $m_N$. Then, the following hold:
\begin{gather}
\frac{dm_i}{du_i} > 1,\quad \text{for }i=1,\dots,N,\label{eq:Lemma2FirstIneq}\\
\frac{du_{i+1}}{dm_i}> 1,\quad \text{for }i=1,\dots,N-1,\label{eq:Lemma2SecondIneq}\\
\frac{d\tilde{u}_{N+1}}{dm_N} > 1,\label{eq:Lemma2ThirdIneq}\\
\frac{dm_N}{d\bar{u}_{N+1}} > 1.\label{eq:Lemma2FourthIneq}
\end{gather}
\end{lemma}

\begin{proof}
We prove the result for the case of $\{m_i\}_{i=1}^N \in \mathcal{M}_A^N$ and the case of $\{m_i\}_{i=1}^N \in \mathcal{M}_D^N$ can be proven in a similar manner. The centroid condition requires that
\begin{gather}
\int_{-\infty}^{m_1} (u_1-m)f(m) dm = 0, \label{eq:Lemma2FirstCentroid}\\
\int_{m_i}^{m_{i+1}} (u_{i+1}-m)f(m) dm = 0,\quad \text{for }i=1,\dots,N-1,\label{eq:Lemma2ithCentroid}\\
\int_{m_N}^{\infty} (\bar{u}_{N+1}-m)f(m) dm = 0.\label{eq:Lemma2LastCentroid}
\end{gather}
Since it is assumed that the nearest neighbor conditions at $m_1,\dots,m_{N-1}$ are satisfied, we have
\begin{align}
u_{i+1} = 2m_{i}-u_{i}-2b,\quad \text{for }i=1,2,\dots,N-1.\label{eq:Lemma2Indifference}
\end{align}
By differentiating \eqref{eq:Lemma2FirstCentroid} with respect to $u_1$, we get
\begin{align}
F(m_1) + u_1f(m_1)\frac{dm_1}{du_1} - \frac{d}{du_1}  \int_{-\infty}^{m_1}mf(m)dm =0, \label{eq:Lemma2FirstIntegral}
\end{align}
where the integral is finite since $\mathbb{E}[M]$ is finite. From \eqref{eq:Lemma2FirstIntegral}, it follows that
\begin{align}
\frac{dm_1}{du_1} 
&= 
\frac{\int_{-\infty}^{m_1}f(m)dm}{(m_1-u_1)f(m_1)}>1,
\end{align}
where $\frac{dm_1}{du_1}$ is finite and the inequality follows from \eqref{eq:Lemma1FirstBin} in Lemma~\ref{lem:inequalities}. Then, $\frac{dm_1}{du_1} >1$ and \eqref{eq:Lemma2Indifference} imply that $\frac{du_2}{dm_1}>1$. Next, we show that if $\frac{du_{i+1}}{dm_i}>1$, then $\frac{dm_{i+1}}{du_{i+1}}>1$  by using \eqref{eq:Lemma2ithCentroid}. Differentiating \eqref{eq:Lemma2ithCentroid} with respect to $u_{i+1}$ leads to 
\begin{align}
\int_{m_i}^{m_{i+1}} f(m)dm 
=&(u_{i+1}-m_i) f(m_i) \frac{dm_i}{du_{i+1}}\nonumber \\
&+(m_{i+1}-u_{i+1}) f(m_{i+1}) \frac{dm_{i+1}}{du_{i+1}}.\label{eq:Lemma2Eq1}
\end{align}
From Lemma~\ref{lem:inequalities}, we know that 
\begin{align}
\int_{m_i}^{m_{i+1}} f(m)dm 
>&(u_{i+1}-m_i) f(m_i)  \nonumber \\
&+(m_{i+1}-u_{i+1}) f(m_{i+1}).\label{eq:Lemma2Eq2}
\end{align}
By combining \eqref{eq:Lemma2Eq1} and \eqref{eq:Lemma2Eq2}, we get
\begin{align}
&(u_{i+1}-m_i) f(m_i) \frac{dm_i}{du_{i+1}} 
+(m_{i+1}-u_{i+1}) f(m_{i+1}) \frac{dm_{i+1}}{du_{i+1}} \nn \\
&\hphantom{PHANT}>(u_{i+1}-m_i) f(m_i)  
+(m_{i+1}-u_{i+1}) f(m_{i+1}) \nn \\
&\Leftrightarrow 
\left( \frac{dm_i}{du_{i+1}} -1\right) (u_{i+1}-m_i) f(m_i) \nn \\
&\hphantom{PHANT}>\left( 1-\frac{dm_{i+1}}{du_{i+1}}\right)(m_{i+1}-u_{i+1}) f(m_{i+1}).\label{eq:Lemma2Eq3}
\end{align}
From \eqref{eq:Lemma2Eq3}, it is seen that $\frac{du_{i+1}}{dm_i}>1$ leads to $\frac{dm_{i+1}}{du_{i+1}}>1$ since
$(u_{i+1}-m_i) f(m_i)>0$ and $(m_{i+1}-u_{i+1}) f(m_{i+1})>0$. Next, $\frac{dm_{i+1}}{du_{i+1}}>1$ and \eqref{eq:Lemma2Indifference} imply that $\frac{du_{i+2}}{dm_{i+1}}>1$. By iteratively using \eqref{eq:Lemma2Indifference}, \eqref{eq:Lemma2Eq1} and \eqref{eq:Lemma2Eq2} in this manner, one can show \eqref{eq:Lemma2FirstIneq} and \eqref{eq:Lemma2SecondIneq}. Moreover, $\frac{dm_N}{du_N}>1$ and $\tilde{u}_{N+1} = 2m_N-u_N-2b$ lead to \eqref{eq:Lemma2ThirdIneq}. Finally, differentiating \eqref{eq:Lemma2LastCentroid} with respect to $\bar{u}_{N+1}$, we get
\begin{align}
\frac{dm_N}{d\bar{u}_{N+1}} = 
\frac{\int_{m_N}^{\infty}f(m)dm}{(\bar{u}_{N+1}-m_N)f(m_N)}>1,
\end{align}
where $\frac{dm_N}{d\bar{u}_{N+1}}$ is finite and the inequality follows from Lemma~\ref{lem:inequalities}.
\end{proof}

Now, equipped with Lemma~\ref{lem:monotonicity}, we are ready to prove the theorem. We first prove the existence of a unique equilibrium with two bins. In this case, for a bin edge $m$, the centroids of the resulting bins are denoted by $u_1(m)=\mathbb{E}[M|M< m]$ and $u_2(m)=\mathbb{E}[M|m\leq M]$. The point where the sender is indifferent between these receiver actions is given by
\begin{align}
\tilde{m}(m)= \frac{u_1(m)+u_2(m)}{2} +b.\label{eq:mtilde}
\end{align}
For an equilibrium, it is required that $m=\tilde{m}(m)$. Lemma~\ref{lem:monotonicity} states that $\frac{dm}{du_1}>1$ and $\frac{dm}{du_2}>1$ hold due to the fact that the source distribution follows a strictly log-concave density and these imply that $\frac{dm}{d\tilde{m}}>1$ due to \eqref{eq:mtilde}. This last property is crucial in obtaining the uniqueness of the equilibrium. In the following, we construct an interval such that $\tilde{m}(\cdot)$ maps this interval into itself and then use the Brouwer’s fixed point theorem \cite{guide2006infinite} to show the existence of an equilibrium. Towards that goal, let $m^{\alpha}$ be a bin edge for which $m^\alpha \neq \tilde{m}(m^\alpha)$. Without loss of generality, suppose that $m^\alpha <\tilde{m}(m^\alpha)$. Let $m^\beta\triangleq u_2(m^\alpha)-m^\alpha+\mu+2b$. Note that $u_1(m^\alpha)$ and $u_2(m^\alpha)$ are finite for a finite $m^\alpha$ as the source has a finite mean. Observe that $m^\alpha<m^\beta$ holds since 
\begin{align*}
m^\alpha &< 2\tilde{m}(m^\alpha)-m^\alpha \\
&= u_2(m^\alpha)+u_1(m^\alpha)+2b-m^\alpha \\
&< u_2(m^\alpha)+\mu+2b-m^\alpha=m^\beta,
\end{align*}
where the first inequality is by assumption and the second inequality is due to $u_1(m)=\mathbb{E}[M|M<m]<\mathbb{E}[M|M<\infty]=\mu$. We also have $\tilde{m} (m^\beta) <m^\beta$ as shown in the following:
\begin{align*}
2(\tilde{m}(m^\beta)-m^\beta) 
&= u_2(m^\beta)-m^\beta +u_1(m^\beta)-m^\beta + 2b\\
&< u_2(m^\alpha)-m^\alpha +u_1(m^\beta)-m^\beta+2b\\
&< u_2(m^\alpha)-m^\alpha +\mu+2b-m^\beta=0,
\end{align*}
where the first inequality follows from the facts that $m^\alpha<m^\beta$ and $u_2(m)-m$ is monotonically decreasing in $m$ due to \eqref{eq:Lemma2FourthIneq} in Lemma~\ref{lem:monotonicity}, and the second inequality is due to $u_1(m)=\mathbb{E}[M|M<m]<\mathbb{E}[M|M<\infty]=\mu$. By using $\tilde{m} (m^\beta) <m^\beta$ together with the the fact that $\tilde{m}(m)$ is monotone in $m$, we have $\tilde{m}(m)\in(m^\alpha,m^\beta)$ for any $m\in[m^\alpha,m^\beta]$. Therefore, by Brouwer’s fixed point theorem \cite{guide2006infinite}, there exists $m\in (m^\alpha,m^\beta)$ such that $m=\tilde{m}(m)$ holds since $\tilde{m}(m)$ is continuous and maps the interval $[m^\alpha,m^\beta]$ into itself. The uniqueness, on the other hand, follows from $\frac{dm}{d\tilde{m}}>1$, which makes $(\tilde{m}(m)-m)$ monotonically decreasing in $m$. If the initial choice of bin edge $m^\alpha$ satisfies $\tilde{m}(m^\alpha)<m^\alpha$, a similar procedure can be applied. This shows the existence of a unique equilibrium with two bins.

Next, the aim is to prove that it is possible to obtain an equilibrium with $N+1$ bins starting from an equilibrium with $N$ bins. Depending on whether $b$ is positive or negative, we need to take slightly different approaches. In the following, we first focus on the case where $b$ is negative. Let 
\begin{align}
K_A(x) = \max \Big\{k \;|\; &\text{there exists }x=m_1 < \dots <m_{k-1}\nonumber \\ 
&\text{ with }\{m_i\}_{i=1}^{k-1}\in \mathcal{M}_A^{k-1} \Big\}.\label{eq:defK}
\end{align}
Denote the bin edges that attain the maximum in \eqref{eq:defK} for a given $x$ by $m_1^x,\dots,m_{K_A(x)-1}^x$ and denote the centroids with these bin edges by $u_1^x\triangleq\mathbb{E}[M|M<m_1^x]$ and $u_i^x\triangleq\mathbb{E}[M|m_{i-1}^x\leq M<m_i^x]$ for $i=2,\dots,K_A(x)-1$. In the following, an equilibrium with $N+1$ bins is obtained by continuously decreasing $m_1^x=x$ starting from its value for the equilibrium with $N$ bins. Let $m_i^A(N)$ for $i=1,\dots,N-1$ denote bin edges for an equilibrium with $N$ bins where these bin edges are sorted in an ascending order, i.e., $m_1^A(N)<\cdots<m_{N-1}^A(N)$.

To begin with, we show that when $m_1^x<m_1^A(N)$, there exists $m_1^x<\dots<m_N^x$ such that the nearest neighbor conditions at the boundaries $m_1^x,\dots,m_{N-1}^x$ hold with the corresponding centroids, i.e., $x<m_1^A(N)$ implies that $K_A(x)\geq N+1$. Towards that goal, we first show that for all $m_1^x<m_1^A(N)$ there exist $m_1^x<\cdots<m_{N-1}^x$ for which the nearest neighbor conditions at the boundaries $m_1^x,\dots,m_{N-2}^x$ are satisfied. These nearest neighbor conditions are in fact satisfied when $m_1^x=m_1^A(N)$ and our aim is to show the existence of boundaries which still satisfy these nearest neighbor conditions when $m_1^x$ is decreased. These nearest neighbor conditions require that
\begin{align}
u_{i+1}^x -m_i^x = m_i^x - u_i^x-2b,\quad\text{for }i=1,\dots,N-2. \label{eq:Thm1Indifferent}
\end{align}
It is always guaranteed that $u_1^x<m_1^x$ since $u_1^x=\mathbb{E}[M|M< m_1^x]$. From $u_1^x<m_1^x$, $b<0$ and \eqref{eq:Thm1Indifferent}, it follows that $u_1^x<m_2^x$. Then, it is always possible to find $m_2^x<m_2^A(N)$ such that the centroid of the bin $[m_1^x,m_2^x)$ is given by $u_2^x$, where $m_2^x<m_2^A(N)$ is due to the fact that $m_2^x$ is monotonically increasing in $m_1^x$. This implies that while $m_1^x$ is decreased $m_1^x<m_2^x$ is guaranteed. Next, via a similar argument, $u_i^x<m_i^x$, $b<0$ and \eqref{eq:Thm1Indifferent} imply that there exist $m_i^x<m_{i+1}^x<m_{i+1}^A(N)$ such that the nearest neighbor condition at the boundary $m_i^x$ holds for $i=2,\dots,N-2$. Now, we show the existence of the last bin edge $m_N^x$ for which the nearest neighbor condition at $m_{N-1}^x$ holds. Let $\bar{u}_N^x = \mathbb{E}[M | m_{N-1}^x \leq M ]$ and $\tilde{u}_N^x =2m_{N-1}^x-u_{N-1}^x-2b$ where the former term corresponds to the centroid of the last bin and the latter term is the receiver action that makes the nearest neighbor condition at $m_{N-1}^x$ hold. We know that $\bar{u}_N^x=\tilde{u}_N^x$ when $m_1^x=m_1^A(N)$ as this bin edge leads to an equilibrium with $N$ bins. Since $0< \frac{d\bar{u}_N^x}{dx} < \frac{d\tilde{u}_N^x}{dx}$ by Lemma~\ref{lem:monotonicity}, it follows that $\tilde{u}_N^x<\bar{u}_N^x$ for any $m_1^x<m_1^A(N)$. Moreover, from $u_{N-1}^x<m_{N-1}^x$, $\tilde{u}_N^x =2m_{N-1}^x-u_{N-1}^x-2b$ and $b<0$, it is guaranteed that $m_{N-1}^x<\tilde{u}_N^x$. Therefore, for any $m_1^x<m_1^A(N)$, there exists $m_N^x$ such that $\tilde{u}_N^x = \mathbb{E}[M|m_{N-1}^x\leq M < m_N^x]$. These establish the existence of boundaries $m_1^x,\dots,m_N^x$ for which the nearest neighbor conditions at $m_1^x,\dots,m_{N-1}^x$ are satisfied when $m_1^x<m_1^A(N)$, i.e., $x<m_1^A(N)$ leads to $K_A(x)\geq N+1$.

In the previous analysis, the existence of $m_1^x<\dots<m_N^x$ with $\{m_i^x\}_{i=1}^N\in \mathcal{M}_A^N$ is established when $m_1^x<m_1^A(N)$. Now, we prove that by continuously decreasing $m_1^x$ starting from $m_1^A(N)$, it is possible to reach at a point where this set of bin edges $\{m_i^x\}_{i=1}^N$ forms an equilibrium, i.e., the nearest neighbor conditions at all boundaries are satisfied. Since $m_1^x=m_1^A(N)$ leads to an equilibrium with $N$ bins and all of the subsequent bin edges move in a monotone manner with respect to $m_1^x$ by Lemma~\ref{lem:monotonicity}, it follows that $m_N^x\to\infty$ as $m_1^x$ approaches $m_1^A(N)$ from the left. This observation ensures that it is possible to find a left neighborhood of $m_1^A(N)$ such that a value of $m_1^x$ in this neighborhood leads to $\bar{u}_{N+1}^x<\tilde{u}_{N+1}^x$ where $\bar{u}_{N+1}^x = \mathbb{E}[M | m_{N}^x \leq M ]$ and $\tilde{u}_{N+1}^x =2m_{N}^x-u_{N}^x-2b$. In order to see this, we write
\begin{align}
\tilde{u}_{N+1}^x-\bar{u}_{N+1}^x 
&= 2m_N^x - u_N^x - 2b -\bar{u}_{N+1}^x \nonumber \\
&= m_N^x - u_N^x -(\bar{u}_{N+1}^x-m_N^x) - 2b.\label{eq:uTildeMinusuBar}
\end{align}
Since $(\mathbb{E}[M|x\leq M]-x)$ is a decreasing function of $x$ by Lemma~\ref{lem:monotonicity}, the term $(\bar{u}_{N+1}^x-m_N^x)$ decreases as $m_1^x$ increases. Moreover, since $u_N^x\rightarrow \mathbb{E}[M|m_{N-1}^A(N)<M]$ as $m_1^x$ approaches $m_1^A(N)$ from the left and $u_N^x$ is monotonically increasing in $m_1^x$, $u_N^x$ is upper bounded by $\mathbb{E}[M|m_{N-1}^A(N)<M]$. Hence, the last three terms in \eqref{eq:uTildeMinusuBar} can be upper bounded by a finite value. Thus, by choosing a value of $m_1^x$ sufficiently close to $m_1^A(N)$, one can make $m_N^x$ sufficiently large, which leads to $\tilde{u}_{N+1}^x-\bar{u}_{N+1}^x>0$. As a result, it holds that $\tilde{u}_{N+1}^x>\bar{u}_{N+1}^x$ for $m_1^x$ at a left neighborhood of $m_1^A(N)$. On the other hand, a sufficiently small value of $m_1^x$ yields $\tilde{u}_{N+1}^x<\bar{u}_{N+1}^x$. To see this, observe that since $\frac{d\tilde{u}_{N+1}^x}{dm_1^x}>1$ by Lemma~\ref{lem:monotonicity}, $\tilde{u}_{N+1}^x$ can be made sufficiently small by taking $m_1^x$ sufficiently small. However, $\bar{u}_{N+1}^x$ is lower bounded by the mean. Thus, a sufficiently small value of $m_1^x$ makes $\tilde{u}_{N+1}^x<\bar{u}_{N+1}^x$. In addition, we know that $(\tilde{u}_{N+1}^x-\bar{u}_{N+1}^x)$ is continuous and monotonically increasing in $m_1^x$. Therefore, there exists a particular value of $m_1^x$ for which $\tilde{u}_{N+1}^x=\bar{u}_{N+1}^x$ holds, and hence, this particular value leads to an equilibrium with $N+1$ bins.

\begin{figure}
\centering
\includegraphics[width=0.95\linewidth]{./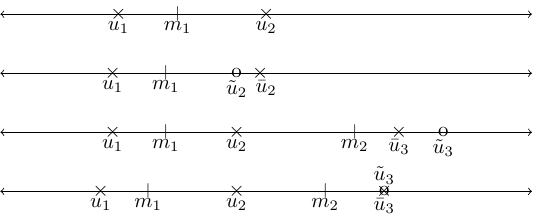}
\caption{Illustration of the proof technique while showing the existence of equilibria with more bins when $b<0$. The first plot shows an equilibrium with two bins. In order to obtain an equilibrium with three bins, we decrease the value of bin edge $m_1$. In the second plot, $u_1$ is the centroid of the bin $(-\infty,m_1)$ and $\bar{u}_2$ is the centroid of the bin $[m_1,\infty)$, and $\tilde{u}_2$ is where the nearest neighbor condition with $u_1$ is satisfied at $m_1$. Since $\tilde{u}_2<\bar{u}_2$, we need to introduce a second bin edge $m_2$ so that $\tilde{u}_2$ is the centroid of the bin $[m_1,m_2)$ and this is shown in the third plot. With this bin edge, we obtain a new centroid $\bar{u}_3$ of the bin $[m_2,\infty)$. In order for this centroid to coincide with $\tilde{u}_3$, we decrease $m_1$ further. This leads to the final plot where an equilibrium with three bins is obtained.}
\label{fig:proof}
\end{figure}

To sum up, it is shown that there always exists an equilibrium with two bins and that if there exists an equilibrium with $N$ bins, it is possible to construct an equilibrium with $N+1$ bins by continuously decreasing the left-most bin edge. Fig.~\ref{fig:proof} illustrates this technique to obtain equilibria with more bins. These results imply that for any $N\geq 2$, there exists an equilibrium with $N$ bins. Moreover, the proof for the existence also reveals that an equilibrium with $N$ bins is unique. To see this, observe that $K_A(x)$ is a piecewise constant and monotonically non-increasing function with jumps by one at discontinuities where these discontinuities lead to equilibrium scenarios, i.e., when $x\in [m_1^A(N+1),m_1^A(N))$, we have that $K_A(x)=N+1$ for $N\in\{2,3,\dots\}$. Hence, this monotonic behavior of $K_A(x)$ ensures the uniqueness of an equilibrium for a given number of bins.

Finally, we briefly describe the generalization of the existence and uniqueness results to the case of $b>0$. We take bin edges in a descending order, i.e., $m_1>\dots>m_{N-1}$, and continuously increase $m_1$ starting from $m_1^D(N)$ to obtain an equilibrium with $N+1$ bins where $m_1^D(N)>\cdots>m_{N-1}^D(N)$ denote bin edges for an equilibrium with $N$ bins. Let
\begin{align}
K_D(x) = \max \Big\{k \;|\; &\text{there exists }x=m_1 > \dots >m_{k-1} \nonumber \\
&\text{ with }\{m_i\}_{i=1}^{k-1}\in \mathcal{M}_D^{k-1} \Big\}.\label{eq:defKPositiveb}
\end{align}
Denote the bin edges that attain the maximum in \eqref{eq:defKPositiveb} for a given $x$ by $m_1^x,\dots,m_{K_D(x)-1}^x$ and denote the centroids with these bin edges by $u_1^x\triangleq\mathbb{E}[M|m_1^x\leq M]$ and $u_i^x\triangleq\mathbb{E}[M|m_{i}^x\leq M<m_{i-1}^x]$ for $i=2,\dots,K_D(x)-1$. Via a similar reasoning, it can be shown that when $m_1^x>m_1^D(N)$ there exists $m_1^x>\cdots>m_N^x$ such that the nearest neighbor conditions at $m_1^x,\dots,m_{N-1}^x$ are satisfied. Moreover, there exists a right neighborhood of $m_1^D(N)$ such that a value of $m_1^x$ in this neighborhood leads to $\tilde{u}_{N+1}^x<\bar{u}_{N+1}^x$ where $\bar{u}_{N+1}^x = \mathbb{E}[M | M<m_{N}^x ]$ and $\tilde{u}_{N+1}^x =2m_{N}^x-u_{N}^x-2b$. On other hand, a sufficiently large value of $m_1^x$ results in $\tilde{u}_{N+1}^x>\bar{u}_{N+1}^x$. Since $(\tilde{u}_{N+1}^x-\bar{u}_{N+1}^x)$ is continuous and monotonically increasing in $m_1^x$, there exists a particular value of $m_1^x$ leading to an equilibrium with $N+1$ bins by making $\tilde{u}_{N+1}^x=\bar{u}_{N+1}^x$. These imply that $K_D(x)$ is a piecewise constant monotonically non-decreasing function with jumps by one at discontinuities which correspond to equilibrium scenarios, i.e., we have that $K_D(x) = N+1$ when $x\in (m_1^D(N),m_1^D(N+1)]$ for $N\in \{2,3,\dots\}$. Therefore, for any $N\geq 2$, there exists a unique equilibrium with $N$ bins.  
\hspace*{\fill}\qed

\subsection{Proof of Theorem~\ref{thm:LogConcaveOneSidedExistence}}\label{ProofLogConcaveOneSidedExistence}

For the cases when there is no finite upper bound on the number of bins, the existence and uniqueness results can be obtained in a similar manner to Theorem~\ref{thm:LogConcaveTwoSidedExistence}. Namely, for a density with a support on $[m_L,\infty)$ and $b>0$, the approach is to increase the value of the right-most bin edge to construct an equilibrium with $N+1$ bins starting from an equilibrium with $N$ bins. On the other hand, for a density with a support on $(-\infty,m_U]$ and $b<0$, the value of the left-most bin edge is decreased to obtain equilibria with large number of bins.

We now prove that there exists a finite upper bound on the number of bins in the case of support on $[m_L,\infty)$ and $b<0$. Note that if there does not exist an informative equilibrium, then the upper bound on the number of bins is equal to one. Suppose that the bias term is such that there exists at least one informative equilibrium. Let $m_1$ be the right-most bin edge and $u_1=\mathbb{E}[M|m_1\leq M]$. For the nearest neighbor condition at $m_1$ to be satisfied, there must be a receiver action $\tilde{u}_2 = 2m_1-u_1-2b$. We know that $(\mathbb{E}[M|m_1\leq M]-m_1)$ monotonically decreases as $m_1$ increases due to Lemma~\ref{lem:monotonicity} and approaches zero as $m_1\to \infty$. Therefore, existence of an informative equilibrium and monotonic behavior of $(\mathbb{E}[M|m_1\leq M]-m_1)$ guarantee that there exists a particular value of $m_1$, say $\bar{m}_1$, which makes $u_1-\bar{m}_1=\mathbb{E}[M|\bar{m}_1\leq M]-\bar{m}_1 = -2b$. Since $(\mathbb{E}[M|m_1\leq M]-m_1)$ monotonically decreases as $m_1$ increases, we have $(m_1-\mathbb{E}[M|m_1\leq M]-2b)>0$ for all $m_1> \bar{m}_1$. This implies that for all $m_1>\bar{m}_1$, we have $(m_1-\mathbb{E}[M|m_1\leq M]-2b)=\tilde{u}_2-m_1>0$. Therefore, there cannot be an equilibrium with a right-most bin edge $m_1$ satisfying $m_1>\bar{m}_1$ since it is required to find $m_2<m_1$ such that the centroid of the bin $[m_2,m_1)$ is given by $\tilde{u}_2>m_1$, which is not possible. Thus, there exists an upper bound on the value of the right-most bin edge in equilibrium, which reveals that the number of bins at an equilibrium is upper bounded. Via a construction similar to Theorem~\ref{thm:LogConcaveTwoSidedExistence}, the existence and uniqueness of an equilibrium with number of bins $N\in\{1,\dots,N_{\text{max}}\}$ can be proven. The result can also be extended to the case of support on $(-\infty,m_U]$ and $b>0$.
\hspace*{\fill}\qed

\subsection{Proof of Theorem~\ref{thm:logConcaveNoninformative}}\label{ProofLogConcaveNoninformative}
\begin{enumerate}[(i)]
\item Suppose first that $2b\leq -(\mu-m_L)$. Recall that an equilibrium with two bins requires that $\tilde{m}(m)=m$ holds for a bin edge $m$ where
\begin{align}
\tilde{m}(m) = \frac{\mathbb{E}[M|m\leq M] + \mathbb{E}[M|M<m]}{2}+b.
\end{align}
Note that for an equilibrium with two bins we need $m>m_L$ as otherwise there is no information conveyed related to the source. If we take $m_L=m$, we get
\begin{align}
\tilde{m}(m_L) ={\mu + m_L\over 2} +b \leq m_L,
\end{align}
where the inequality is by assumption. Moreover, we know that $(m-\tilde{m}(m))$ is strictly monotonically increasing in $m$ due to Lemma~\ref{lem:monotonicity}. Therefore, for any bin edge $m>m_L$, we have $m>\tilde{m}(m)$. Therefore, there does not exist a bin edge that leads to an equilibrium with two bins. On the other hand, when $2b>-(\mu-m_L)$, it is possible to construct an interval for which $\tilde{m}(\cdot)$ maps this interval into itself in a similar manner to Theorem~\ref{thm:LogConcaveTwoSidedExistence} and then apply Brouwer’s fixed point theorem \cite{guide2006infinite}.
\item The proof is similar to (i). 
\end{enumerate}

\subsection{Proof of Theorem~\ref{thm:LogConcaveCost}}\label{ProofLogConcaveCost}

We first focus on the case of $b<0$ and then generalize the result to the case of $b>0$. Theorem~\ref{thm:LogConcaveTwoSidedExistence} obtains an equilibrium with $N+1$ bins starting from an equilibrium with $N$ bins by continuously decreasing the value of the left-most bin edge. In the following analysis, we instead start from an equilibrium with $N+1$ bins and continuously decrease the value of the right-most bin edge to reach an equilibrium with $N$ bins. Then, we show that the objective function for the decoder is monotonically decreasing in the value of the right-most bin edge. Note that the expected costs of the encoder and decoder in equilibrium differ only by a constant as pointed out in Remark~\ref{rem:equilibriumSelection}. Thus, it suffices to prove the monotonic behavior for the expected cost of the decoder.  

Recall that $m_1^D(N)>\cdots>m_{N-1}^D(N)$ denote the bin edges arranged in a descending order for the unique equilibrium with $N$ bins. We first show that $m_{1}^D(N)<m_{1}^D(N+1)$. When $m_1^x=m_1^D(N)$, we have that $\tilde{u}_{N}^x=\bar{u}_N^x$ where $\bar{u}_N^x=\mathbb{E}[M|M<m_{N-1}^x]$ and $\tilde{u}_{N}^x=2m_{N-1}^x-u_{N-1}^x-2b$. Since Lemma~\ref{lem:monotonicity} implies that $\frac{dm_{i+1}^x}{dm_i^x}>1$ and $\frac{du_{i+1}^x}{dm_i^x}>1$ hold for $i=1,\dots,N-2$, it is guaranteed that $m_{N-1}^x<\dots<m_1^x$ holds as $m_1^x$ is decreased. Moreover, by Lemma~\ref{lem:monotonicity}, we have $0< \frac{d\bar{u}_N^x}{d\tilde{m}_1^x} <\frac{d\tilde{u}_N^x}{d\tilde{m}_1^x}$. Thus, $m_1^x<m_1^D(N)$ leads to $\tilde{u}_N^x<\bar{u}_N^x$. These observations imply that $x<m_1^D(N)$ leads to $K_D(x)\leq N$. Since $K_D(x)=N+1$ when $x=m_1^D(N+1)$, it follows that $m_{1}^D(N)<m_{1}^D(N+1)$. Now, we take $m_1^x=m_1^D(N+1)$ and decrease until $m_1^x=m_1^D(N)$. Since all the subsequent bin edges decrease monotonically as $m_1^x$ is decreased, it follows that 
$m_i^x\rightarrow m_i(N)$ for $i=2,\dots,N-1$ and $m_{N}^x\rightarrow -\infty$ as $m_1^x$ approaches $ m_1^D(N)$ from the right. Namely, in order to obtain an equilibrium with $N$ bins, one can decrease $m_1^x$ starting from its value for the unique equilibrium with $N+1$ bins, i.e., $m_1^D(N+1)$, until its value for the unique equilibrium with $N$ bins, i.e., $m_1^D(N)$. 

Next, the aim is to show that when $m_1^x\in (m_1^D(N),m_1^D(N+1))$ the objective function of decoder is monotonically decreasing in $m_1^x$ where the sender employs the bin edges $m_N^x<\cdots<m_1^x$ with $m_1^x=x$ such that the nearest neighbor conditions at the boundaries $m_1^x,\dots,m_{N-1}^x$ hold with the centroids $u_1^x,\dots,u_{N+1}^x$ computed based on these bin edges. With these bin edges, the objective function of the receiver can be written as
\begin{align*}
J^d(x) =& 
\int_{m_1^x}^{\infty} (u_1^x-m)^2f(m)dm  \nonumber \\
&+\sum_{i=1}^{N-1}\int_{m_{i+1}^x}^{m_i^x} (u_{i+1}^x-m)^2f(m)dm \nonumber \\
&+\int_{-\infty}^{m_N^x} (u_{N+1}^x-m)^2f(m)dm,
\end{align*} 
where $u_1^x$, $u_{N+1}^x$ and $u_i^x$ denote the centroid of the bins $[m_1^x,\infty)$, $(-\infty,m_N^x)$ and $[m_{i}^x,m_{i-1}^x)$ for $i=2,\dots,N$, respectively. If we take the derivative with respect to $x$, we get
\begin{align*}
\frac{d J^d(x)}{dx}
=\sum_{i=1}^N \frac{dm_i^x}{dx}f(m_i^x) \Big(
(u_{i+1}^x-m_i^x)^2 - (u_i^x-m_i^x)^2
\Big).
\end{align*}
Since the nearest neighbor conditions at $m_1^x,\dots,m_{N-1}^x$ are satisfied, the following holds:
\begin{align*}
(u_{i+1}^x-m_i^x)^2 - (u_i^x-m_i^x)^2 = (u_i^x-u_{i+1}^x) 2b <0
\end{align*}
for $i=1,\dots,N-1$, where the inequality follows from $b<0$. Moreover, if $m_1^x=m_1^D(N+1)$, then
\begin{align*}
(u_{N+1}^x-m_N^x)^2 - (u_N^x-m_N^x)^2 = (u_N^x-u_{N+1}^x) 2b<0.
\end{align*}
As $m_1^x$ decreases, the term $(u_{N+1}^x-m_N^x)^2 = (m_N^x-\mathbb{E}[M|M<m_N^x])^2$ decreases by Lemma~\ref{lem:monotonicity}. On the other hand, $(u_N^x-m_N^x)^2$ increases as $m_1^x$ decreases because $\frac{dm_N^x}{du_N^x}>1$ by Lemma~\ref{lem:monotonicity}. These ensure that when $m_1^x$ is decreased, we still have
\begin{align}
(u_{N+1}^x-m_N^x)^2 - (u_N^x-m_N^x)^2<0.
\end{align}
As a result, since $f(m_i^x)$ and $\frac{dm_i^x}{dx}$ are all positive and $\big(
(u_{i+1}^x-m_i^x)^2 - (u_i^x-m_i^x)^2
\big)$ are all negative, it follows that $\frac{dJ^d(x)}{dx}<0$. This shows that the expected cost monotonically increases as $x=m_1^x$ decreases. Since we obtain the equilibrium with $N$ bins by decreasing $m_1^x$, the equilibrium with $N+1$ bins induces smaller expected cost than the equilibrium with $N$ bins.   

% CONTINUE FROM HERE

We now generalize the result to the case of $b>0$ where we take bin edges in an ascending order, i.e., $m_1^x<\cdots<m_{N}^x$. We know that $m_1^A(N+1)<m_1^A(N)$ in this case, which can be proven in a similar manner to the case of $b<0$. Due to the monotonicity of the subsequent bin edges with respect to the first bin edge, we know that  $m_i^x\rightarrow m_i^A(N)$ for $i=2,\dots,N-1$ and $m_N^x\rightarrow \infty$ as $m_1^x$ approaches $m_1^A(N)$ from the left. Therefore, by increasing $m_1^x$ from $m_1^A(N+1)$ to $m_1^A(N)$, one can obtain the unique equilibrium with $N$ bins starting from the unique equilibrium with $N+1$ bins. 

If we write the objective function of the decoder as a function of $m_1^x=x$, we get 
\begin{align*}
J^d(x) =& \int_{-\infty}^{m_1^x} (u_1^x-m)^2f(m)dm \\
&+ \sum_{i=1}^{N-1}\int_{m_i^x}^{m_{i+1}^x} (u_{i+1}^x-m)^2f(m)dm\\
&+ \int_{m_N^x}^{\infty} (u_{N+1}^x-m)^2f(m)dm.
\end{align*} 
Taking the derivative with respect to $x$ yields
\begin{align*}
\frac{dJ^d(x)}{dx}=\sum_{i=1}^N \frac{dm_i^x}{dx}f(m_i^x) \Big(
(u_i^x-m_i^x)^2-(u_{i+1}^x-m_i^x)^2 
\Big). 
\end{align*}
Since the nearest neighbor conditions at $m_1^x,\dots,m_{N-1}^x$ are satisfied, we have that
\begin{align}
(u_i^x-m_i^x)^2-(u_{i+1}^x-m_i^x)^2  = (u_{i+1}^x-u_i^x) 2b >0
\end{align}
for $i=1,\dots,N-1$. Moreover, when $m_1^x=m_1^A(N+1)$, we have
\begin{align*}
(u_N^x-m_N^x)^2-(u_{N+1}^x-m_N^x)^2  = (u_{N+1}^x-u_N^x) 2b>0.
\end{align*}
As $m_1^x$ increases, the term $(u_{N+1}^x-m_N^x)^2 = (\mathbb{E}[M|m_N^x<M]-m_N^x)^2$ decreases by Lemma~\ref{lem:monotonicity}. On the other hand, $(u_N^x-m_N^x)^2$ increases as $m_1^x$ increases since $\frac{dm_N^x}{du_N^x}>1$ by Lemma~\ref{lem:monotonicity}. These imply that when $m_1^x$ is increased, we still have
\begin{align}
(u_{N+1}^x-m_N^x)^2 - (u_N^x-m_N^x)^2>0.
\end{align}
As a result, we get $\frac{dJ^d(x)}{dx}>0$ since $f(m_i^x)$, $\frac{dm_i^x}{dx}$ and $\big(
(u_i^x-m_i^x)^2-(u_{i+1}^x-m_i^x)^2 
\big)$ are all positive. This shows that the expected cost monotonically increases as $m_1^x$ increases. Since we obtain the equilibrium with $N$ bins by increasing $x=m_1^x$, the equilibrium with $N+1$ bins induces smaller expected cost than the equilibrium with $N$ bins.   
\hspace*{\fill}\qed

\subsection{Proof of Theorem~\ref{thm:convergence}}\label{sec:convergenceProof}
Given an initial set of bin edges $\boldsymbol{m} = [m_1,\dots,m_{N-1}]^T$ with $m_1<\cdots<m_{N-1}$, a modified Lloyd-Max iteration defined in \eqref{eq:LloydMaxIter} is expressed as
\begin{align*}
&T_1(\boldsymbol{m})=\frac{\mathbb{E}[M|M<m_1] + \mathbb{E}[M|m_1\leq M< m_2]}{2}+b\\
&T_2(\boldsymbol{m})=\\
&\frac{\mathbb{E}[M|m_1\leq M< m_2] + \mathbb{E}[M|m_2\leq M< m_3]}{2}+b\\
&\hphantom{T_1(\boldsymbol{m})..}\vdots\\
&T_{N-1}(\boldsymbol{m})\\
&=\frac{\mathbb{E}[M|m_{N-2}\leq M<m_{N-1}] + \mathbb{E}[M|m_{N-1}\leq M]}{2}+b.
\end{align*}
Notice that as a result of an iteration it is guaranteed that the resulting bin edges maintain the order, i.e., $T_1(\boldsymbol{m})<\cdots<T_{N-1}(\boldsymbol{m})$. Thus, after each iteration, the corresponding edges form a valid partition. Let $\lVert \boldsymbol{x}\rVert=\max_i |x_i|$. Let $\boldsymbol{m}$ and $\tilde{\boldsymbol{m}}$ be two set of bin edges, which do not necessarily lead to an equilibrium, with $m_1<\cdots<m_{N-1}$ and $\tilde{m}_1<\cdots<\tilde{m}_{N-1}$. Observe that
\begin{align}
\big\lVert T(\boldsymbol{m})-T(\tilde{\boldsymbol{m}}) \big\rVert 
&= \big\lVert \big(T(\boldsymbol{m})-b\big) - \big(T(\tilde{\boldsymbol{m}})-b\big) \big\rVert\nn \\
&< \big\lVert \boldsymbol{m}-\tilde{\boldsymbol{m}} \big\rVert \label{eq:convergenceIneq}
\end{align}
where the inequality follows from \cite[Lemma~6]{kieffer83} since $(T(\boldsymbol{m})-b)$ may be viewed as a fixed point iteration considering the team theoretic setup with no bias. Notice that the inequality in \eqref{eq:convergenceIneq} is strict as the source distribution is assumed to be strictly log-concave. We also know that there exists a unique fixed point of $T(\cdot)$ from Theorem~\ref{thm:LogConcaveTwoSidedExistence}. Let $\boldsymbol{m}^*$ denote the unique fixed point. By utilizing \eqref{eq:convergenceIneq}, we get
\begin{align*}
\big\lVert T(\boldsymbol{m})-\boldsymbol{m}^* \big\rVert  
= \big\lVert T(\boldsymbol{m})-T(\boldsymbol{m}^*) \big\rVert 
< \big\lVert \boldsymbol{m}-\boldsymbol{m}^* \big\rVert,
\end{align*}
which implies that $\lVert T^n(\boldsymbol{m})-\boldsymbol{m}^* \rVert$ is a non-negative and monotonically decreasing sequence. Hence, the limit $\lim_{n\to\infty}\lVert T^n(\boldsymbol{m})-\boldsymbol{m}^* \rVert$ exists. In the following, we show that this limit is zero. To prove this, suppose otherwise, that is,  $\lim_{n\to\infty}\lVert T^n(\boldsymbol{m})-\boldsymbol{m}^* \rVert=c$ for some $c>0$. Since $\{ T^n(\boldsymbol{m})\}$ is bounded, there exists a convergent subsequence, i.e., $T^{n_k}(\boldsymbol{m}) \to \tilde{\boldsymbol{m}}$ as $k\to\infty$ where $\tilde{\boldsymbol{m}}$ denotes the limit point. Now, observe that
\begin{align*}
c 
&= \lim_{n\to\infty} \lVert T^n(\boldsymbol{m})-\boldsymbol{m}^* \rVert\\
&= \lim_{k\to\infty} \lVert T^{n_k}(\boldsymbol{m})-\boldsymbol{m}^* \rVert\\
&= \lVert \lim_{k\to\infty} T^{n_k}(\boldsymbol{m})-\boldsymbol{m}^* \rVert\\
&= \lVert \tilde{\boldsymbol{m}}-\boldsymbol{m}^* \rVert,
\end{align*} 
where the third equality is due to the continuity of the norm. By using \eqref{eq:convergenceIneq}, we get $\lVert T(\tilde{\boldsymbol{m}})-\boldsymbol{m}^* \rVert<\lVert \tilde{\boldsymbol{m}}-\boldsymbol{m}^* \rVert=c$. This contradicts with the assumption that the sequence $\lVert T^n(\boldsymbol{m})-\boldsymbol{m}^*\rVert$ converges to $c>0$ as a monotonically decreasing sequence. Thus, $\lVert T^n(\boldsymbol{m})-\boldsymbol{m}^* \rVert$ converges to zero as $n$ goes to infinity, which implies that the fixed point iterations converge to the unique fixed point.
\hspace*{\fill}\qed

\subsection{Proof of Theorem~\ref{FinThm}} \label{ProofFinThm}
\begin{enumerate}[(i)]
\item Note that $N \leq \lfloor 1 + \frac{K-a}{2|b|}\rfloor$ is obtained for $\mm_{N-1}<K$ since the distance between the optimal decoder actions (reconstruction values) must be at least $2|b|$, and the proof is completed. Thus, it can be assumed that $\mm_{N-1}\geq K$. Then, $u_N = \mathbb{E}[M|M\geq \mm_{N-1}] \leq \mm_{N-1} + \eta$ holds by Assumption~\ref{assum:semiInfinite}-(iii), and it follows that 		
\begin{align*}
\eta \geq u_{N} - \mm_{N-1} 	&= (\mm_{N-1} - u_{N-1}) - 2b\\
&\ge (u_{N-1} - \mm_{N-2}) - 2b\\
&=  (\mm_{N-2} - u_{N-2}) - 2(2b)\\
&\hphantom{\;\;}\vdots\\
&\ge u_{k+1} - \mm_k -(N-k-1)(2b)\\
&\ge -(N-k-1)(2b),
\end{align*}
where $\mm_k$ is the smallest bin edge greater than or equal to $K$, and the inequalities follow from Assumption~\ref{assum:semiInfinite}-(iv). Thus, these inequalities show that the bin-lengths are monotonically increasing for the bins in the interval $[K,\infty)$. Moreover, it follows that the number of bins in the interval $(\mm_k,\infty)$ is upper bounded by $N-k \leq \lfloor \frac{\eta}{2|b|}+1\rfloor$. Regarding the interval $[a,K)$, there can be at most $\lfloor 1 + \frac{K-a}{2|b|}\rfloor$ bins. Further, due to the definition of $\mm_k$, there can be at most one bin in $[K,\mm_k]$, which implies $k\leq 1+\lfloor 1 + \frac{K-a}{2|b|}\rfloor$. Thus, we get $N \leq \lfloor \frac{K-a+\eta}{2|b|}+3\rfloor$.

\item Assume that $K$ is in the $r$-th bin; i.e., $\mm_{r-1}\leq K<\mm_r$. First, consider the bins on the right-side of the $r$-th bin; i.e., the $k$-th bin for $k>r$; in other words, the bin is in the interval $[K,\infty)$. Due to (\ref{centroid}), (\ref{eq:centroidBoundaryEq}) and Assumption~\ref{assum:semiInfinite}-(iii), the following is obtained:
\begin{align}
\mm_{k+1} - u_{k+1} &= u_{k+2}-\mm_{k+1} + 2b \nn\\
&= \mathbb{E}[M| \mm_{k+1} \leq M < \mm_{k+2}]-\mm_{k+1} + 2b \nn\\
&\leq \mathbb{E}[M| \mm_{k+1} \leq M < \infty]-\mm_{k+1} + 2b \nn\\
&\leq \eta+2b, 
\end{align}
which implies $2b\leq\mm_{k+1} - u_{k+1}\leq\eta+2b$.\footnote{Note that non-strict inequalities are preferred over the strict ones in the proof in order to obtain \ekRev{a convex and compact set}, which will be used in the fixed-point theorem to show the existence of an equilibrium.} \ekRev{In addition, from Assumption~\ref{assum:semiInfinite}-(iii), we have
\begin{align}
u_{k+1}&=\mathbb{E}[M| \mm_{k} \leq M < \mm_{k+1}]\nn\\
&\leq \mathbb{E}[M| \mm_{k} \leq M <\infty]\nn\\
&\leq m_k+\eta,
\end{align}	
which implies that $0\leq u_{k+1}-\mm_{k}\leq\eta$. Thus, $2b\leq\mm_{k+1}-\mm_{k}\leq2\eta+2b$ is obtained.}

\ekRev{Now consider the bins in the interval $[a,K)$; i.e., the $k$-th bin for $k\leq r$. If $r=1$, it suffices the consider the length of the left-most bin edge as other bins are in the interval $[K,\infty)$. In this case, we get $\mm_{1} - u_{1}=u_{2}-\mm_{1}+2b\leq\eta+2b$ and $u_{1}=\mathbb{E}[M| a \leq M < \mm_{1}]\leq \mathbb{E}[M| a \leq M < \infty]=\mu$, which imply $2b\leq m_1-m_0 \leq \mu+\eta+2b-a$. If $r\geq2$, then $\mm_{r} - u_{r}=u_{r+1}-\mm_{r}+2b\leq \eta+2b$. If we utilize $u_{r}-\mm_{r-1}=\mm_{r-1} - u_{r-1}-2b\Rightarrow u_r=2\mm_{r-1} - u_{r-1}-2b\leq 2K-a-2b$, we obtain $\mm_{r} \leq u_{r}+\eta+2b \leq 2K-a+\eta$. Thus, we get $\mm_{r}-\mm_{0}\leq 2K+\eta-2a$, which provides an upper bound for the lengths of the left-most bins. Hence, for any bin, the length of the bin, $l_k\triangleq\mm_k-\mm_{k-1}$, satisfies $2b\leq l_k\leq\max\{2\eta+2b,\mu+\eta+2b-a,2K+\eta-2a\}\triangleq\Delta$. Based on the bin-lengths, the bin edges can be represented by $\mm_{i}=a+\sum_{j=1}^{i}l_{j}$. Observe that the set $\mathscr{K} \triangleq [a+2b, a+\Delta] \times [a+2(2b), a+2\Delta] \times [a+3(2b), a+3\Delta] \times \cdots$ (where $\{m_1, m_2, m_3, \cdots\} \in \mathscr{K}$) is a convex and compact set by Tychonoff's theorem \cite{fixedPointBook}. Furthermore, at the equilibrium, the best responses of the encoder and the decoder in \eqref{centroid} and \eqref{eq:centroidBoundaryEq} can be combined to define a mapping as follows:
\begin{align}
\mathbf{\mm}
\triangleq\begin{bmatrix}
\mm_{1} \\
\mm_{2} \\
\vdots \\
\mm_{k}\\
\vdots
\end{bmatrix} 
=
\begin{bmatrix}
{u_1+u_2\over2} \\
{u_2+u_3\over2} \\
\vdots \\
{u_{k}+u_{k+1}\over2} \\
\vdots
\end{bmatrix}+b \triangleq\mathscr{T}(\mathbf{\mm}),
\label{eq:fixedPointInf}
\end{align}
where $u_k=\mathbb{E}[M|m_{k-1}\leq M < m_k]$ for $k=1,2,\dots$. Since the source density is continuous, the mapping $\mathscr{T}(\mathbf{\mm}):\mathscr{K}\rightarrow\mathscr{K}$ is continuous under the point-wise convergence, and hence, under the product topology (the result follows by incrementing the dimension of the product one-by-one and showing the continuity at each step). Further, since an (countably) infinite product of real intervals is a locally convex vector space, $\mathscr{K}$ is a convex and compact space. Hence, there exists a fixed point for the mapping $\mathscr{T}$ such that $\mathbf{\mm^*}=\mathscr{T}(\mathbf{\mm^*})$ by Tychonoff's fixed-point theorem \cite{fixedPointBook}, which implies that there exists an equilibrium with infinitely many bins.}

\end{enumerate}
\hspace*{\fill}\qed

\subsection{Proof of Theorem~\ref{FinThm2}} \label{ProofFinThm2}
For $t\geq K$, $\mathbb{E}[M|M\ge t]-t\leq\eta$ holds due to Assumption~\ref{assum:semiInfinite}-(iii). If $t<K$, observe the following:
\begin{align*}
\mathbb{E}[M|M\ge t]-t \leq \mathbb{E}[M|M\ge K]-a \leq K+\eta-a .
\end{align*}
Thus, for any $t\geq a$, it holds that $\mathbb{E}[M|M\ge t]\leq t+K+\eta-a$. Suppose that there is an equilibrium with at least two bins. For the $k$-th bin, it holds that
\begin{align*}
&m_k = \frac{u_k + u_{k+1}}{2}+b \nn\\
&= \frac{\mathbb{E}[M| \mm_{k-1} \leq M < \mm_{k}] + \mathbb{E}[M| \mm_{k} \leq M < \mm_{k+1}]}{2}+b \\
&< \frac{\mm_{k} + \mathbb{E}[M| \mm_{k} \leq M < \infty]}{2}+b \\
&< \frac{\mm_{k} + \mm_{k}+K+\eta-a}{2}+b \\
&\Rightarrow b>-{K+\eta-a\over2} ,
\end{align*}
which proves the statement.
\hspace*{\fill}\qed

\subsection{Proof of Proposition~\ref{prop:twoSidedMonotonicity}} \label{ProofTwoSidedMonotonicity}
\begin{enumerate} 
\item[(i)] Let $\mm_s$ be the greatest bin edge less than or equal to $S$; i.e., $m_s\leq S<m_{s+1}$. Then, noting $\mm_0=-\infty$ and $u_1=\mathbb{E}[M|M\leq \mm_{1}]$, observe the following:
\begin{align*}
\nu \geq \mm_{1} - u_1 &= u_2 - \mm_1 + 2b \nn\\
&\geq  \mm_2 - u_2 + 2b \nn\\
&= u_3 - \mm_2 + 2(2b) \nn\\
&\hphantom{\;\;}\vdots \nn\\
&\geq \mm_s - u_s + (s-1)(2b) \nn\\
&\geq 2(s-1)b \nn\\
&\Rightarrow s\leq 1 + {\nu\over2b},
\end{align*}
which shows that the number of bins in the interval $(-\infty,S]$ is upper bounded. \ekRev{Furthermore, these inequalities reveal that for $m<S$, the bin-lengths are monotonically decreasing.}
\item[(ii)] Let $\mm_k$ be the smallest bin edge greater than or equal to $K$; i.e., $m_{k-1}<K\leq m_{k}$, and $\mm_r$ be the right-most bin edge; i.e., $\mm_{r+1}=\infty$ and $u_{r+1}=\mathbb{E}[m|\mm_r\leq M<\infty]$. Then, observe the following:
\begin{align*}
\eta \geq u_{r+1}-\mm_r &= \mm_{r}-u_{r} - 2b \nn\\
&\geq  u_r-\mm_{r-1} - 2b \nn\\
&= \mm_{r-1}-u_{r-1} - 2(2b) \nn\\
&\hphantom{\;\;}\vdots \nn\\
&\geq u_{k+1}-\mm_k - (r-k)(2b) \nn\\
&\geq - (r-k)(2b) \nn\\
&\Rightarrow r-k+1\leq 1+{\eta\over2|b|},
\end{align*}
which shows that the number of bins in the interval $[K,\infty)$ is upper bounded. \ekRev{Furthermore, from these inequalities, it follows that for $m>K$, the bin-lengths are monotonically increasing.}
\end{enumerate}
\hspace*{\fill}\qed

\subsection{Proof of Theorem~\ref{thm:twoSidedInfinite}} \label{ProofTwoSidedInfinite}
The proof requires separate analyses for the positive and the negative $b$ values. As defined in Proposition~\ref{prop:twoSidedMonotonicity}, let $\mm_s$ be the greatest bin edge less than or equal to $S$; i.e., $m_s\leq S<m_{s+1}$, $\mm_k$ be the smallest bin edge greater than or equal to $K$; i.e., $m_{k-1}<K\leq m_{k}$, and $\mm_r$ be the right-most bin edge; i.e., $\mm_{r+1}=\infty$ and $u_{r+1}=\mathbb{E}[m|\mm_r\leq M<\infty]$.
\begin{enumerate}[(i)]
\item First, assume a positive bias; i.e., $b>0$. For the left-most bin-edge $\mm_1$, consider the following cases:
\begin{enumerate}
\item \underline{$\mm_1\geq K$} : Since $u_2-\mm_1\leq\eta$ and $\mm_1-u_1=u_2-\mm_1+2b$, we have $\mm_1\leq u_1+\eta+2b$. Further, since
\begin{align}
&u_1=\mathbb{E}[M|-\infty<M<\mm_1]\nn\\
&=\mathsf{Pr}(M\leq K|-\infty<M<\mm_1) \nn\\
&\qquad\qquad\times\mathbb{E}[M|-\infty<M\leq K]\nn\\
&\qquad+\mathsf{Pr}(K< M<\mm_1|-\infty<M<\mm_1)\nn\\
&\qquad\qquad\times\mathbb{E}[M|K< M<\mm_1] \nn\\
&\leq\mathsf{Pr}(M\leq K|-\infty<M<\mm_1)\nn\\
&\qquad\qquad\times\mathbb{E}[M|-\infty<M\leq K]\nn\\
&\qquad+\mathsf{Pr}(K< M<\mm_1|-\infty<M<\mm_1)\nn\\
&\qquad\qquad\times\mathbb{E}[M|K< M<\infty] \nn\\
&\leq \mathsf{Pr}(M\leq K|-\infty<M<\mm_1)K \nn\\
&\qquad+ \mathsf{Pr}(K< M<\mm_1|-\infty<M<\mm_1) (K+\eta) \nn\\
&\leq K+\eta,
\end{align} 
we get $\mm_1\leq K+2\eta+2b$. In this case, for any $t\geq 2$, it holds that $\mm_{t}-u_t=u_{t+1}-\mm_t+2b\leq\eta+2b \Rightarrow 2b\leq\mm_{t}-u_t\leq\eta+2b$ and $0\leq u_t-\mm_{t-1}\leq\eta$. Thus, the bin-lengths satisfy $2b\leq l_{t}=\mm_{t}-\mm_{t-1}\leq2\eta+2b$. This shows that these bounds on the bin-lengths hold for every bin in the interval $[K,\infty)$.
\item \underline{$S<\mm_1<K$} : If $\mm_2<K$, then the upper bound on the length of the second bin is $\mm_2-\mm_1\leq K-S$. Now, consider the case when $\mm_2\geq K$ holds. Due to the equilibrium conditions in \eqref{centroid} and \eqref{eq:centroidBoundaryEq}; i.e., ${u_1+u_{2}\over2}+b=\mm_1$ and ${u_{2}+u_{3}\over2}+b=\mm_{2}$, we have ${u_{2}\over2}+b=\mm_1-{u_{1}\over2}=\mm_{2}-{u_{3}\over2}\Rightarrow 2(\mm_{2}-\mm_1)=u_{3}-u_{1} \Rightarrow \mm_{2}-\mm_1=(u_{3}-\mm_{2})+(\mm_1-u_{1})\leq \eta+K-u_1$. Since 
\begin{align}
&u_{1}=\mathbb{E}[M|-\infty<M<\mm_{1}] \nn\\
&=\mathsf{Pr}(-\infty<M\leq S|-\infty<M<\mm_{1}) \nn\\
&\qquad\qquad\times\mathbb{E}[M|-\infty<M\leq S]\nn\\
&\qquad+\mathsf{Pr}(S<M<\mm_{1}|-\infty<M<\mm_{1})\nn\\
&\qquad\qquad\times\mathbb{E}[M|S<M<\mm_{1}] \nn\\
&\geq\mathsf{Pr}(M\leq S|M<\mm_{1})(S-\nu)\nn\\
&\qquad+\mathsf{Pr}(S<M<\mm_{1}|M<\mm_{1})S \nn\\
&\geq S-\nu 
\end{align} 
holds, we have $\mm_{2}-\mm_1\leq \eta+K-S+\nu=K-S+\eta+\nu$, which is the upper bound of the length of the second bin.	

\item \underline{$\mm_1\leq S$} : Since there can be at most one bin in the interval $[\mm_s,S]$, at most ${K-S\over2b}$ bins in the interval $(S,K)$ and at most one bin in the interval $[K,\mm_k]$, we have $k\leq s + {K-S\over2b}+2 \leq {\nu\over2b} + {K-S\over2b} + 3$ by Proposition~\ref{prop:twoSidedMonotonicity}-(i). Furthermore, for any bin in $(-\infty,S]$, $\nu\geq\mm_t-u_t=u_{t+1}-\mm_t+2b \Rightarrow 2b\leq\mm_{t+1}-\mm_t\leq2\nu-2b$ holds (note that if $\nu<2b$, then Case-(a) applies; i.e., there is not any bin in $(-\infty,S]$). \newline
If there is not any bin edge between $\mm_s$ and $\mm_k$; i.e., $k=s+1$, then, due to the equilibrium conditions \eqref{centroid} and \eqref{eq:centroidBoundaryEq}; i.e., ${u_s+u_{s+1}\over2}+b=\mm_s$ and ${u_k+u_{k+1}\over2}+b=\mm_k$, which imply ${u_{s+1}\over2}+b={u_{k}\over2}+b=\mm_s-{u_{s}\over2}=\mm_k-{u_{k+1}\over2} \Rightarrow 2(\mm_k-\mm_s)=u_{k+1}-u_{s} \Rightarrow \mm_k-\mm_s=(u_{k+1}-\mm_k)+(\mm_s-u_{s}) \leq \eta+\nu$. Since $\mm_s\leq S < K\leq\mm_k$, we obtain $\mm_s\geq K-\eta-\nu$ and $\mm_k\leq S+\eta+\nu$. Note that, under this case, the bounds on the length of the bin containing the interval $(S,K)$ are $K-S\leq\mm_k-\mm_s\leq \eta+\nu$. \newline
If there is at least one bin edge between $\mm_s$ and $\mm_k$, we have $\mm_s\leq S < \mm_{s+1}<K$. Due to the equilibrium conditions \eqref{centroid} and \eqref{eq:centroidBoundaryEq}, ${u_s+u_{s+1}\over2}+b=\mm_s$ and ${u_{s+1}+u_{s+2}\over2}+b=\mm_{s+1} \Rightarrow {u_{s+1}\over2}+b=\mm_s-{u_{s}\over2}=\mm_{s+1}-{u_{s+2}\over2}\Rightarrow 2(\mm_{s+1}-\mm_s)=u_{s+2}-u_{s} \Rightarrow \mm_{s+1}-\mm_s=(u_{s+2}-\mm_{s+1})+(\mm_s-u_{s})$. Since 
\begin{align}
&u_{s+2}=\mathbb{E}[M|\mm_{s+1}\leq M<\mm_{s+2}] \nn\\
&\leq \mathbb{E}[M|\mm_{s+1}\leq M<\infty] \nn\\
&=\mathsf{Pr}(\mm_{s+1}\leq M<K|\mm_{s+1}\leq M<\infty)\nn\\
&\qquad\qquad\times\mathbb{E}[M|\mm_{s+1}\leq M<K]\nn\\
&\qquad+\mathsf{Pr}(K\leq M<\infty|\mm_{s+1}\leq M<\infty)\nn\\
&\qquad\qquad\times\mathbb{E}[M|K\leq M<\infty] \nn\\
&\leq\mathsf{Pr}(\mm_{s+1}\leq M<K|\mm_{s+1}\leq M<\infty)K\nn\\
&\qquad+\mathsf{Pr}(K\leq M<\infty|\mm_{s+1}\leq M<\infty)(K+\eta) \nn\\
&\leq K+\eta 
\end{align} 
holds, we have $\mm_s=2\mm_{s+1}-u_{s+2}-(\mm_s-u_{s})\geq 2S - (K+\eta)-\nu \Rightarrow \mm_s\geq 2S-K-\eta-\nu$. Similar to the analysis of the left-end of the interval $(S,K)$ above, we can analyze the right-end of the interval as follows: For the right-end of the interval, we have $S<\mm_{k-1}<K\leq \mm_{k}$. Due to the equilibrium conditions \eqref{centroid} and \eqref{eq:centroidBoundaryEq}; i.e., ${u_k+u_{k+1}\over2}+b=\mm_k$ and ${u_{k-1}+u_{k}\over2}+b=\mm_{k-1}$, which imply ${u_{k}\over2}+b=\mm_k-{u_{k+1}\over2}=\mm_{k-1}-{u_{k-1}\over2}\Rightarrow 2(\mm_{k}-\mm_{k-1})=u_{k+1}-u_{k-1} \Rightarrow \mm_{k}-\mm_{k-1}=(u_{k+1}-\mm_{k})+(\mm_{k-1}-u_{k-1})$. Since 
\begin{align}
&u_{k-1}=\mathbb{E}[M|\mm_{k-2}\leq M<\mm_{k-1}] \nn\\
&\geq \mathbb{E}[M|-\infty< M<\mm_{k-1}] \nn\\
&=\mathsf{Pr}(-\infty< M<S|-\infty< M<\mm_{k-1})\nn\\
&\qquad\qquad\times\mathbb{E}[M|-\infty< M<S]\nn\\
&\qquad+\mathsf{Pr}(S\leq M<\mm_{k-1}|-\infty< M<\mm_{k-1})\nn\\
&\qquad\qquad\times\mathbb{E}[M|S\leq M<\mm_{k-1}] \nn\\
&\geq\mathsf{Pr}( M<S|M<\mm_{k-1})(S-\nu)\nn\\
&\qquad+\mathsf{Pr}(S\leq M<\mm_{k-1}|-\infty< M<\mm_{k-1})S \nn\\
&\geq S-\nu 
\end{align} 
holds, we have $\mm_k=(u_{k+1}-\mm_{k})+2\mm_{k-1}-u_{k-1}\leq \eta+2K-(S-\nu) \Rightarrow \mm_k\leq 2K-S+\eta+\nu$. Therefore, the total length of the bins that contain the interval $(S,K)$ with the minimal number of bins (i.e., $\mm_k-\mm_s$) is bounded by $K-S\leq\mm_k-\mm_s\leq3K-3S+2\eta+2\nu$. 			
\end{enumerate}
For the interval $(-\infty,S]$, there are at most $\lfloor1+{\nu\over2b}\rfloor$ bins as shown in Proposition~\ref{prop:twoSidedMonotonicity} (and the length of the bins is bounded above by $2\nu-2b$, as shown in Case-(c) above). Therefore, since we have $\mm_1\leq K+2\eta+2b$ and $\mm_s\geq \min\{K-\eta-\nu,\, 2S-K-\eta-\nu\}=2S-K-\eta-\nu$, we obtain $K+2\eta+2b\geq\mm_1\geq2S-K-\eta-\nu-\lfloor1+{\nu\over2b}\rfloor(2\nu-2b)$. Furthermore, for any bin, the bin-length $l_t\triangleq\mm_t-\mm_{t-1}$ is between $\Delta_s\triangleq\min\{2b, K-S\}<l_t<\max\{2\nu-2b,2\eta+2b,\eta+\nu,K-S+\eta+\nu,3K-3S+2\eta+2\nu\}\triangleq\Delta_k$. Based on the left-most bin-edge and the bin-lengths, the bin edges can be represented by $\mm_{i}=\mm_1+\sum_{j=1}^{i}l_{j}$ for $i>1$. Similar to the proof of Theorem~\ref{FinThm}, the set $\{\mm_s, l_{1}, l_{2}, \cdots\} \in [2S-K-\eta-\nu-\lfloor1+{\nu\over2b}\rfloor(2\nu-2b), K+2\eta+2b] \times [\Delta_s, \Delta_k] \times [\Delta_s, \Delta_k] \times \cdots$ is a convex and compact set by Tychonoff's theorem \cite{fixedPointBook}. \ekRev{After defining a mapping similar to that in \eqref{eq:fixedPointInf}, it is concluded from Tychonoff's fixed-point theorem \cite{fixedPointBook} that there exists a fixed point. This proves that there exists an equilibrium with infinitely many bins.}

\item Now, assume a negative bias; i.e., $b<0$. A similar approach can be taken for this case, but for the completeness, we include the full proof below. \newline
For the right-most bin edge $\mm_r$, consider the following cases:
\begin{enumerate}
\item \underline{$\mm_r<S$} : Since $\mm_r-u_{r}\leq\nu$ and $u_{r+1}-\mm_r=\mm_r-u_r-2b$, we have $\mm_r\geq u_{r+1}-\nu+2b$. Further, since
\begin{align}
u_{r+1}&=\mathbb{E}[M|\mm_r\leq M<\infty]\nn\\
&=\mathsf{Pr}(\mm_r\leq M<S|\mm_r\leq M<\infty)\nn\\
&\qquad\qquad\times\mathbb{E}[M|\mm_r\leq M<S]\nn\\
&\qquad+\mathsf{Pr}(S\leq M<\infty|\mm_r\leq M<\infty)\nn\\
&\qquad\qquad\times\mathbb{E}[M|S\leq M<\infty] \nn\\
&\geq\mathsf{Pr}(\mm_r\leq M<S|\mm_r\leq M<\infty)\nn\\
&\qquad\qquad\times\mathbb{E}[M|-\infty\leq M<S]\nn\\
&\qquad+\mathsf{Pr}(S\leq M<\infty|\mm_r\leq M<\infty)\nn\\
&\qquad\qquad\times\mathbb{E}[M|S\leq M<\infty] \nn\\
&\geq \mathsf{Pr}(\mm_r\leq M<S|M\geq\mm_r)(S-\nu) \nn\\
&\qquad+ \mathsf{Pr}(M\geq S|M\geq\mm_r)S \nn\\
&\geq S-\nu ,
\end{align} 
$\mm_r\geq S-2\nu+2b$ is obtained. Under this case, for any bin, it holds that $u_{t+1}-\mm_t=\mm_{t}-u_t-2b\leq\nu-2b \Rightarrow -2b\leq u_{t+1}-\mm_t\leq\nu-2b$ and $0\leq \mm_{t}-u_t\leq\nu$. Thus, the bin-lengths satisfy $-2b\leq l_{t+1}=\mm_{t+1}-\mm_{t}\leq2\nu-2b$. Actually, these bounds on the bin-lengths hold for every bin in the interval $(-\infty,S]$.
\item \underline{$S<\mm_r<K$} : If $\mm_{r-1}>S$, then the upper bound of the length of the bin before the last bin is $\mm_r-\mm_{r-1}\leq K-S$. Now, consider the $\mm_{r-1}\leq S$ case. Due to the equilibrium conditions \eqref{centroid} and \eqref{eq:centroidBoundaryEq}; i.e., ${u_r+u_{r+1}\over2}+b=\mm_r$ and ${u_{r-1}+u_{r}\over2}+b=\mm_{r-1}$, which imply ${u_{r}\over2}+b=\mm_r-{u_{r+1}\over2}=\mm_{r-1}-{u_{r-1}\over2}\Rightarrow 2(\mm_{r}-\mm_{r-1})=u_{r+1}-u_{r-1} \Rightarrow \mm_{r}-\mm_{r-1}=(u_{r+1}-\mm_{r})+(\mm_{r-1}-u_{r-1})$. Since 
\begin{align}
u_{r+1}&=\mathbb{E}[M|\mm_r\leq M<\infty] \nn\\
&=\mathsf{Pr}(\mm_r\leq M<K|\mm_r\leq M<\infty) \nn\\
&\qquad\qquad\times\mathbb{E}[M|\mm_r\leq M<K]\nn\\
&\qquad+\mathsf{Pr}(K\leq M<\infty|\mm_r\leq M<\infty)\nn\\
&\qquad\qquad\times\mathbb{E}[M|K\leq M<\infty] \nn\\
&\leq\mathsf{Pr}(\mm_r\leq M<K|M\geq\mm_{r})K\nn\\
&\qquad+\mathsf{Pr}(M\geq K|M\geq\mm_{r})(K+\eta) \nn\\
&\leq K+\eta 
\end{align} 
holds, we have $\mm_{r}-\mm_{r-1}\leq K+\eta-S+\nu=K-S+\eta+\nu$, which is the upper bound of the length of the bin before the last bin.	
\item \underline{$\mm_r>K$} : For any bin in $[K,\infty)$, $\nu\geq u_{t+1}-\mm_t=\mm_t-u_t-2b \Rightarrow -2b\leq u_{t+1}-\mm_t\leq\nu$ and $0\leq \mm_t-u_t\leq\nu+2b$, which results in $-2b\leq\mm_{t+1}-\mm_t\leq2\eta+2b$ (if $\eta<-2b$, then Case-(a) applies; i.e., there is not any bin in $[K,\infty)$). \newline
If there is not any bin edge between $\mm_s$ and $\mm_k$; i.e., $k=s+1$, the analysis for $b>0$ holds: $\mm_s\geq K-\eta-\nu$, $\mm_k\leq S+\eta+\nu$ and $K-S\leq\mm_k-\mm_s\leq \eta+\nu$. Similarly, if there is at least one bin edge between $\mm_s$ and $\mm_k$, the analysis for $b>0$ holds again: $\mm_s\geq 2S-K-\eta-\nu$, $\mm_k\leq 2K-S+\eta+\nu$ and $K-S\leq\mm_k-\mm_s\leq3K-3S+2\eta+2\nu$. 
\end{enumerate}
For the interval $[K,\infty)$, there are at most $\lfloor1-{\eta\over2b}\rfloor$ bins as shown in Proposition~\ref{prop:twoSidedMonotonicity} (and the length of the bins is bounded above by $2\eta+2b$, as shown in Case-(a) above). Therefore, since we have $\mm_r\geq S-2\nu+2b$ and $\mm_k\geq \max\{S+\eta+\nu,\, 2K-S+\eta+\nu\}=2K-S+\eta+\nu$, we obtain $S-2\nu+2b\leq\mm_r\leq2K-S+\eta+\nu+\lfloor1-{\eta\over2b}\rfloor(2\eta+2b)$. Furthermore, for any bin, the bin-length $l_t\triangleq\mm_t-\mm_{t-1}$ is between $\Delta_s\triangleq\min\{-2b, K-S\}<l_t<\max\{2\nu-2b,2\eta+2b,\eta+\nu,K-S+\eta+\nu,3K-3S+2\eta+2\nu\}\triangleq\Delta_k$. Based on the right-most bin-edge and the bin-lengths, the other bin edges can be represented by $\mm_{r-i}=\mm_r-\sum_{j=1}^{i}l_{r-j}$. Similar to the proof of Theorem~\ref{FinThm}, the set $\{\mm_r, l_{r-1}, l_{r-2}, \cdots\} \in [S-2\nu+2b, 2K-S+\eta+\nu+\lfloor1-{\eta\over2b}\rfloor(2\eta+2b)] \times [\Delta_s, \Delta_k] \times [\Delta_s, \Delta_k] \times \cdots$ is a convex and compact set by Tychonoff's theorem \cite{fixedPointBook}. \ekRev{After defining a mapping similar to that in \eqref{eq:fixedPointInf}, it follows from Tychonoff's fixed-point theorem \cite{fixedPointBook} that there exists an equilibrium with infinitely many bins.}

\end{enumerate}
\hspace*{\fill}\qed

\subsection{Proof of Proposition~\ref{prop:expProp1}}\label{sec:expProp1}
Since $u_N = \mathbb{E}[M|\mm_{N-1}\leq M] = \mm_{N-1} + {1\over\lambda}$, it follows that
\begin{align*}
\frac{1}{\lambda} = u_{N} - \mm_{N-1} 	&= (\mm_{N-1} - u_{N-1}) - 2b\\
&> (u_{N-1} - \mm_{N-2}) - 2b\\
&=  (\mm_{N-2} - u_{N-2}) - 2(2b)\\
&\;\;\vdots\\
&> u_1-\mm_0 - (N-1)(2b)\\
&> -(N-1)(2b) .
\end{align*}
Here, the inequalities follow from the fact that the exponential pdf is monotonically decreasing. Thus, for $b<0$,
\begin{align*}
\frac{1}{\lambda} > -(N-1)2b
\Rightarrow
&(N-1)2b > \frac{1}{-\lambda}\\
\Rightarrow
&(N-1) <\frac{1}{-2b\lambda}\\
\Rightarrow
& N \leq \bigg\lfloor -{1\over2b\lambda} + 1 \bigg \rfloor \;,
\end{align*}
is obtained, which also proves that the number of bins at the equilibrium is upper bounded. 
\hspace*{\fill}\qed

\subsection{Proof of Proposition~\ref{prop:expProp2}}\label{sec:expProp2}
For the last two bins, we have
\begin{align}
\mm_{N-1}&={u_{N-1}+u_N\over2}+b\nn\\&= {\left(\mm_{N-2}+{1\over\lambda}-{l_{N-1}\over\me^{\lambda l_{N-1}}-1}\right)+\left(\mm_{N-1}+{1\over\lambda}\right)\over2}+b\nn\\
&\Rightarrow  {l_{N-1}}{\me^{\lambda l_{N-1}}\over\me^{\lambda l_{N-1}}-1}={2\over\lambda}+2b. \label{eq:bin_N-1}
\end{align}
For the other bins; i.e., the $k$-th bin for $k=1,2,\ldots,N-2$, observe the following:    
\begin{align}
&u_{k+1} - \mm_k = \mm_k - u_k - 2b \nn\\
&\hphantom{u_{k+1} - \mm_k} = (\mm_k-\mm_{k-1}) - (u_k-\mm_{k-1}) - 2b \nn\\
&\Rightarrow  {1\over\lambda}-{l_{k+1}\over\me^{\lambda l_{k+1}}-1} = l_k - {1\over\lambda}+{l_{k}\over\me^{\lambda l_{k}}-1}-2b \nn\\
&\Rightarrow  l_{k}{\me^{\lambda l_{k}}\over\me^{\lambda l_k}-1} = {2\over\lambda}+2b - {l_{k+1}\over\me^{\lambda l_{k+1}}-1} .
\label{eq:bin_k}
\end{align}
From \eqref{eq:bin_N-1} and \eqref{eq:bin_k}, the bin-lengths at the equilibrium can be written in a recursive form as in \eqref{eq:compactRecursionLast} and \eqref{eq:compactRecursion}.

The proof for monotonically increasing bin-lengths is based on induction. Before the induction step, in order to use \eqref{eq:compactRecursion}, we examine the structure of $g(x) = \frac{xe^{\lambda x}}{e^{\lambda x}-1}$ and $h(x) = \frac{x}{e^{\lambda x}-1}$. First note that both functions are continuous and differentiable on $[0,\infty)$. Further, $g$ is a positive, strictly increasing and unbounded function on $\mathbb{R}_{\geq0}$, and $h$ is a positive and strictly decreasing function on $\mathbb{R}_{\geq0}$. Now, notice the following properties:
\begin{align}
\begin{split}
g(l_k)&=h(l_k)+l_k ,\\
g(l_k)&=l_{k}{\me^{\lambda l_{k}}\over\me^{\lambda l_k}-1}>l_k,\\
h(l_k) &= {l_{k}\over\me^{\lambda l_{k}}-1} <{l_k\over\lambda l_k} = {1\over\lambda},
\label{eq:gh_Properties}
\end{split}
\end{align}
where the last inequality follows from $\me^y\geq y+1$ with equality if and only if $y=0$. Now consider the length of the $(N-2)$nd bin. By utilizing the properties in \eqref{eq:gh_Properties} on the recursion in \eqref{eq:compactRecursion}, we get
\begin{align}
g(l_{N-2}) &= {2\over\lambda}+2b-h(l_{N-1}) = g(l_{N-1})-h(l_{N-1}) = l_{N-1} \nn\\ &\Rightarrow l_{N-1} = g(l_{N-2}) =l_{N-2}+h(l_{N-2}) \nn\\
&\Rightarrow  l_{N-2} < l_{N-1} < l_{N-2}+{1\over\lambda}.
\end{align}
Similarly, for the $(N-3)$rd bin, the following relations hold:
\begin{align}
&g(l_{N-3}) = {2\over\lambda}+2b-h(l_{N-2}) 
\nn\\&\qquad= g(l_{N-2})+h(l_{N-1})-h(l_{N-2}) = l_{N-2}+h(l_{N-1}) \nn\\
&g(l_{N-3})=l_{N-2}+h(l_{N-1})<l_{N-2}+h(l_{N-2})=g(l_{N-2}) \nn\\&\qquad\Rightarrow l_{N-3}<l_{N-2}\nn\\
&l_{N-2}<l_{N-2}+h(l_{N-1})=g(l_{N-3})\nn\\&\qquad=l_{N-3}+h(l_{N-3})<l_{N-3}+{1\over\lambda}  \nn\\
&\qquad\Rightarrow  l_{N-3} < l_{N-2} < l_{N-3}+{1\over\lambda} .
\end{align}
Now suppose that $l_{N-1}>l_{N-2}>\ldots>l_{k}$ is obtained. Then, consider the $(k-1)$st bin:
\begin{align}
&g(l_{k-1}) = {2\over\lambda}+2b-h(l_{k}) = g(l_{k})+h(l_{k+1})-h(l_{k}) \nn\\&\qquad = l_{k}+h(l_{k+1}) \nn\\
&g(l_{k-1})=l_{k}+h(l_{k+1})<l_{k}+h(l_{k})=g(l_{k}) \Rightarrow l_{k-1}<l_{k}\nn\\
&l_{k}<l_{k}+h(l_{k+1})=g(l_{k-1})=l_{k-1}+h(l_{k-1})<l_{k-1}+{1\over\lambda}  \nn\\
&\qquad\Rightarrow  l_{k-1} < l_{k} < l_{k-1}+{1\over\lambda} .
\end{align}
Thus, the bin-lengths form a monotonically increasing sequence. 
\hspace*{\fill}\qed

\subsection{Proof of Theorem~\ref{thm:expExistence}}\label{sec:expExistence}
\underline{Proof of (i):} Consider the two bins $[0,\mm_1)$ and $[\mm_1,\infty)$. Then, the centroids of the bins (the decoder actions) are $u_1=\mathbb{E}[M|M<\mm_1]={1\over\lambda}-{\mm_1\over\me^{\lambda\mm_1}-1}$ and $u_2=\mathbb{E}[M|\mm_1\leq M]={1\over\lambda}+\mm_1$. In view of \eqref{eq:centroidBoundaryEq}, an equilibrium with these two bins exists if and only if
\begin{align}
\mm_1 &= {u_1+u_2\over2}+b = {{1\over\lambda}-{\mm_1\over\me^{\lambda\mm_1}-1}+{1\over\lambda}+\mm_1\over2}+b \nn\\ &\Rightarrow {\mm_1\over2}{\me^{\lambda\mm_1}\over\me^{\lambda\mm_1}-1}={1\over\lambda}+b \nn\\ &\Rightarrow \me^{\lambda\mm_1}\left({1\over\lambda}+b-{\mm_1\over2}\right) = {1\over\lambda}+b .
\label{eq:twoBinsEq}
\end{align}
Note that in \eqref{eq:twoBinsEq}, $\mm_1=0$ is always a solution; however, in order to have an equilibrium with two bins, we need a non-zero solution to \eqref{eq:twoBinsEq}; i.e., $\mm_1>0$. For this purpose, the Lambert $W$-function will be used. Although the Lambert $W$-function is defined for complex variables, we restrict our attention to the real-valued $W$-function; i.e., the $W$-function is defined as
\begin{align*}
W(x\me^x) &= x \text{ for } x\geq 0 ,\nn\\
W_0(x\me^x) &= x \text{ for } -1\leq x<0 ,\nn\\
W_{-1}(x\me^x) &= x \text{ for } x\leq -1 .
\end{align*}
As it can be seen, for $x\geq0$, $W(x\me^x)$ is a well-defined single-valued function. However, for $x<0$, $W(x\me^x)$ is doubly valued, e.g., when $W(x\me^x)\in(-{1\over\me},0)$, there exist $x_1$ and $x_2$ that satisfy $x_1\me^{x_1}=x_2\me^{x_2}$ where $x_1\in(-1,0)$ and $x_2\in(-\infty,-1)$. In order to differentiate these values, the principal branch of the Lambert $W$-function is defined to represent the values greater than $-1$; e.g., $x_1=W_0(x_1\me^{x_1})=W_0(x_2\me^{x_2})$. Similarly, the lower branch of the Lambert $W$-function represents the values smaller than $-1$; e.g, $x_2=W_{-1}(x_1\me^{x_1})=W_{-1}(x_2\me^{x_2})$. Further, for $x=-1$, two branches of the $W$-function coincide; i.e., $-1=W_0(-{1\over\me})=W_{-1}(-{1\over\me})$. Regarding the definition above, by letting $t\triangleq2\lambda\left({\mm_1\over2}-{1\over\lambda}-b\right)$, the solution of \eqref{eq:twoBinsEq} can be found as follows:
\begin{align}
\me^{t+2+2\lambda b}\left(-t\over2\lambda\right) &= {1\over\lambda}+b \Rightarrow t\me^t = -(2+2\lambda b)\me^{-(2+2\lambda b)} \nn\\ &\Rightarrow t = W\left(-(2+2\lambda b)\me^{-(2+2\lambda b)}\right) .
\label{eq:twoBinsLambert}
\end{align}
Note that, in \eqref{eq:twoBinsLambert}, depending on the values of $-(2+2\lambda b)$, the following cases can be distinguished: 
\begin{enumerate}
\item [(i)] \underline{$-(2+2\lambda b) \geq 0$} : $t\me^t = -(2+2\lambda b)\me^{-(2+2\lambda b)} \Rightarrow t = -(2+2\lambda b) \Rightarrow \mm_1=0$, which implies a non-informative equilibrium; i.e., an equilibrium with only one bin.
\item [(ii)] \underline{$-1 < -(2+2\lambda b) < 0$} : Since $t\me^t = -(2+2\lambda b)\me^{-(2+2\lambda b)}$, there are two possible solutions: 
\begin{enumerate}
\item If $t=W_0\left(-(2+2\lambda b)\me^{-(2+2\lambda b)}\right)=-(2+2\lambda b)$, we have $\mm_1=0$, as in the previous case.
\item If $t=W_{-1}\left(-(2+2\lambda b)\me^{-(2+2\lambda b)}\right) \Rightarrow t<-1 \Rightarrow -1>t=2\lambda\left({\mm_1\over2}-{1\over\lambda}-b\right)=\lambda\mm_1-2-2\lambda b > \lambda\mm_1-1 \Rightarrow \lambda\mm_1<0$, which is not possible.
\end{enumerate} 
\item [(iii)] \underline{$-(2+2\lambda b) = -1$} : Since $t\me^t = -(2+2\lambda b)\me^{-(2+2\lambda b)}$, there is only one solution, $t=-(2+2\lambda b)=-1\Rightarrow\mm_1=0$, which implies that the equilibrium is non-informative.
\item [(iv)] \underline{$-(2+2\lambda b) < -1$} : Since $t\me^t = -(2+2\lambda b)\me^{-(2+2\lambda b)}$, there are two possible solutions:
\begin{enumerate}
\item If $t=W_{-1}\left(-(2+2\lambda b)\me^{-(2+2\lambda b)}\right)=-(2+2\lambda b)$, we have $\mm_1=0$; i.e., an equilibrium with only one bin.
\item If $t=W_0\left(-(2+2\lambda b)\me^{-(2+2\lambda b)}\right)$, we have $-1<t<0 \Rightarrow -1<\lambda\mm_1-2-2\lambda b<0 \Rightarrow {1\over\lambda}+2b<\mm_1<{2\over\lambda}+2b$. Thus, if we have ${1\over\lambda}+2b>0 \Rightarrow b>-{1\over2\lambda}$, then $\mm_1$ must be positive, which implies the existence of an equilibrium with two bins.
\end{enumerate} 
\end{enumerate}
Thus, as long as $b\leq-{1\over2\lambda}$, there exists only one bin at the equilibrium; i.e., there exist only non-informative equilibria; and the equilibrium with two bins can be achieved only if $b>-{1\over2\lambda}$. In this case, $\mm_1={1\over\lambda}W_0\left(-(2+2\lambda b)\me^{-(2+2\lambda b)}\right)+2\left({1\over\lambda}+b\right)$. Note that, since $-1<W_0(\cdot)<0$, the boundary between two bins lies within the interval ${1\over\lambda}+2b<\mm_1<{2\over\lambda}+2b$.

\underline{Proof of (ii):} In order to have an equilibrium with at least three bins, $l_{N-2}>0$ must be satisfied. If we let $c_{k}\triangleq {2\over\lambda}+2b - {l_{k+1}\over\me^{\lambda l_{k+1}}-1}$, the solution to \eqref{eq:bin_k} is
\begin{align}
l_{k} = {1\over\lambda}W_0\left(-\lambda c_{k}\me^{-\lambda c_{k}}\right)+c_{k}. \label{eq:bin_k_lambert}
\end{align}
From \eqref{eq:bin_k_lambert}, if $-\lambda c_{N-2}<-1$ is satisfied, then the solution to $l_{N-2}$ will be positive. Then,
\begin{align}
&-\lambda c_{N-2} = -\lambda\left({2\over\lambda}+2b-h(l_{N-1})\right)\nn\\
&=-\lambda\left(g(l_{N-1})-h(l_{N-1})\right)=-\lambda l_{N-1} <-1 \nn\\
&\Rightarrow l_{N-1}={1\over\lambda}W_0\left(-(2+2\lambda b)\me^{-(2+2\lambda b)}\right)\nn\\
&\qquad\qquad\qquad\qquad\qquad+2\left({1\over\lambda}+b\right)>{1\over\lambda} \nn\\
&\Rightarrow W_0\left(-(2+2\lambda b)\me^{-(2+2\lambda b)}\right) > -1-2\lambda b .
\label{eq:threeBinsanalysis}
\end{align}
Let $t\triangleq W_0\left(-(2+2\lambda b)\me^{-(2+2\lambda b)}\right)$, then $t\me^t=-(2+2\lambda b)\me^{-(2+2\lambda b)}$ and $-1<t<0$. Then, from \eqref{eq:threeBinsanalysis}, since $t\me^t$ is increasing function of $t$ for $t>-1$,
\begin{align}
t>&-1-2\lambda b \Rightarrow t\me^t=-(2+2\lambda b)\me^{-(2+2\lambda b)} \nn\\
&> -(1+2\lambda b)\me^{-(1+2\lambda b)} \nn\\
\Rightarrow & 2+2\lambda b < (1+2\lambda b)\me \Rightarrow b > -{1\over2\lambda}{\me-2\over\me-1} .
\end{align}
Thus, if $b>-{1\over2\lambda}{\me-2\over\me-1}$, an equilibrium with at least three bins is obtained; otherwise; i.e., $b\leq-{1\over2\lambda}{\me-2\over\me-1}$, there can exist at most two bins at the equilibrium.

\underline{Proof of (iii):} Now, we focus on equilibria with infinitely many bins. For any equilibrium, consider a bin with a finite length, say the $k$th bin, and by utilizing \eqref{eq:compactRecursion} and \eqref{eq:gh_Properties}, we have the following inequalities:
\begin{align*}
{2\over\lambda}+2b =& g(l_k)+h(l_{k+1}) = g(l_k) + g(l_{k+1}) - l_{k+1} \nn\\
>& l_k +l_{k+1} - l_{k+1} = l_k \Rightarrow l_k < {2\over\lambda}+2b ,\nn\\
{2\over\lambda}+2b =& g(l_k)+h(l_{k+1}) = h(l_k)+l_k +h(l_{k+1}) \nn\\
<& {1\over\lambda} + l_k + {1\over\lambda} = {2\over\lambda} + l_k 
\Rightarrow l_k > 2b .
\end{align*}
Thus, all bin-lengths are bounded from above and below: $2b < l_k < {2\over\lambda}+2b$. Now consider the fixed-point solution of the recursion in \eqref{eq:compactRecursion}; i.e., $g(l^*)={2\over\lambda}+2b-h(l^*)$. Then, by letting $c\triangleq{2\over\lambda}+2b$,
\begin{align}
l^*{\me^{\lambda l^*}\over\me^{\lambda l^*}-1} &= c - {l^*\over\me^{\lambda l^*}-1} \Rightarrow l^*{\me^{\lambda l^*}+1\over\me^{\lambda l^*}-1} = c \nn\\
&\Rightarrow (c-l^*)\me^{\lambda l^*}-(c+l^*)=0 .
\label{eq:expInfFixedEq}
\end{align}
In order to investigate if \eqref{eq:expInfFixedEq} has a unique solution $l^*$ such that $2b<l^*<{2\over\lambda}+2b$, let $\Psi(s)\triangleq (c-s)\me^{\lambda s}-(c+s)$ for $s\in\left(2b,{2\over\lambda}+2b\right)$ and notice that $\Psi(s)$ is a concave function of $s$ for $s\in\left(2b,{2\over\lambda}+2b\right)$, $\Psi(2b)>0$, $\Psi(s)$ reaches its maximum value in the interval $\left(2b,{2\over\lambda}+2b\right)$; i.e., when $\Psi^\prime(s^*)=0$, and $\Psi({2\over\lambda}+2b)<0$; thus, $\Psi(s)$ crosses the $s$-axis only once, which implies that  $\Psi(s)=0$ has a unique solution in the interval $\left(2b,{2\over\lambda}+2b\right)$. In other words, the fixed-point solution of the recursion in \eqref{eq:compactRecursion} is unique; i.e., $\Psi(l^*)=0$.

Hence, if the length of the first bin is $l^*$; i.e., $l_1=l^*$, then, all bins must have a length of $l^*$; i.e., $l_1=l_2=l_3=\ldots = l^*$. Thus, there exists an equilibrium with infinitely many equi-length bins.

Now, suppose that $l_1<l^*$. Then, by \eqref{eq:compactRecursion}, $h(l_2) = {2\over\lambda}+2b - g(l_1)$. Since $g$ in an increasing function, $l_1<l^*\Rightarrow g(l_1)<g(l^*)$. Let $g(l^*)-g(l_1)\triangleq\Delta>0$, then,
\begin{align}
g(l^*) + h(l^*) &= g(l_1) + h(l_2) = {2\over\lambda}+2b \nn\\&\Rightarrow \Delta = g(l^*)-g(l_1) = h(l_2) - h(l^*) .
\label{eq:l_1SmallCase}
\end{align}
From Proposition~\ref{prop:expProp2}, we know that $h(s)={s\over\me^{\lambda s}-1}$ is a decreasing function with $h^\prime(s)=-\frac{\me^{\lambda s}(\lambda s -1)+1}{(\me^{\lambda s}-1)^2}<0$ for $s>0$ and  $h^\prime(0)=-{1\over2}$. Since $h^\prime(s)$ is an increasing function of $s$, $h^\prime(0)=-{1\over2}$, and $h^\prime(s)>-{1\over2}$ for $s>0$, it follows that ${h(l^*)-h(l_2)\over l^*-l_2}>-{1\over2} \Rightarrow {-\Delta\over l^*-l_2}>-{1\over2} \Rightarrow l^*-l_2>2\Delta$. From \eqref{eq:l_1SmallCase},
\begin{align}
\Delta + \Delta &= (g(l^*)-g(l_1)) + (h(l_2) - h(l^*)) \nn\\
&= g(l^*)-g(l_1) + (g(l_2)-l_2) - (g(l^*)-l^*) \nn\\
&= g(l_2) - g(l_1) + \underbrace{l^* - l_2}_{>2\Delta} \nn\\
&\Rightarrow g(l_2) - g(l_1) <0 \Rightarrow l_2<l_1 .
\end{align}
Proceeding similarly, $l^*>l_1>l_2>\ldots$ can be obtained. Now, notice that, since $h(l_k)$ is a monotone function and $2b < l_k < {2\over\lambda}+2b$, the recursion in \eqref{eq:compactRecursion} can be satisfied if 
\begin{align}
&g(l_k) = {2\over\lambda}+2b-h(l_{k+1}) \nn\\
&\Rightarrow {2\over\lambda}+2b-h(2b) < g(l_k) < {2\over\lambda}+2b-h\left({2\over\lambda}+2b\right) .
\end{align}
Let $\underline{l}$ and $\overline{l}$ and defined as $g(\underline{l})={2\over\lambda}+2b-h(2b)$ and $g(\overline{l})={2\over\lambda}+2b-h\left({2\over\lambda}+2b\right)$, respectively. Thus, if $l_k \notin (\underline{l},\overline{l})$, then there is no solution to $l_{k+1}$ for the recursion in \eqref{eq:compactRecursion}. Since the sequence of the bin-lengths is monotonically decreasing, there is a natural number $K$ such that $l_K>\underline{l}$ and $l_{K+1}\leq\underline{l}$, which implies that there is no solution to $l_{K+2}$. Thus, there cannot be any equilibrium with infinitely many bins if $l_1<l^*$.

A similar approach can be taken for $l_1>l^*$: Since $g$ is an increasing function, $l_1>l^*\Rightarrow g(l_1)>g(l^*)$. Let $g(l_1)-g(l^*)\triangleq\widetilde{\Delta}>0 \Rightarrow g(l_1)-g(l^*) = h(l^*) - h(l_2) = \widetilde{\Delta}$. Then, since $h^\prime(s)>-{1\over2}$ for $s>0$, ${h(l_2)-h(l^*)\over l_2-l^*}>-{1\over2} \Rightarrow {-\Delta\over l_2-l^*}>-{1\over2} \Rightarrow l_2-l^*>2\Delta$. From \eqref{eq:l_1SmallCase},
\begin{align}
\widetilde{\Delta} + \widetilde{\Delta} &= (g(l_1)-g(l^*)) + (h(l^*) - h(l_2)) \nn\\
&= g(l_1)-g(l^*) + (g(l^*)-l^*) - (g(l_2)-l_2) \nn\\
&= g(l_1) - g(l_2) + \underbrace{l_2 - l^*}_{>2\Delta} \nn\\
&\Rightarrow g(l_1) - g(l_2) <0 \Rightarrow l_1<l_2 .
\end{align}
Proceeding similarly, $l^*<l_1<l_2<\ldots$ can be obtained. Since the sequence of the bin-lengths is monotonically increasing, there is a natural number $\widetilde{K}$ such that $l_{\widetilde{K}}<\overline{l}$ and $l_{\widetilde{K}+1}\geq\overline{l}$, which implies that there is no solution to $l_{\widetilde{K}+2}$. Thus, there cannot be any equilibrium with infinite number of bins if $l_1>l^*$. Notice that, it is possible to have an equilibrium with finite number of bins since for the last bin with a finite length, \eqref{eq:compactRecursionLast} is used. Further, it is shown that, at the equilibrium, any finite bin-length must be greater than or equal to $l^*$; i.e., $2b<l^*\leq l_k<{2\over\lambda}+2b$ must be satisfied.
\hspace*{\fill}\qed

\subsection{Proof of Theorem~\ref{thm:expMoreInformative}}\label{sec:expMoreInformative}

\underline{Proof of (i):} Suppose that there exists an equilibrium with $N$ bins, and the corresponding bin-lengths are $l_1<l_2<\ldots<l_N=\infty$ with the bin edges $0=\mm_0<\mm_1<\ldots<\mm_{N-1}<\mm_N=\infty$. Then, the decoder cost is
\begin{align}
J^{d,N} 
&= \sum_{i=1}^{N} \mathrm{Var}\left(M|\mm_{i-1}<M<\mm_i\right)\nn\\
&\qquad\qquad\qquad\qquad\times \mathrm{Pr}(\mm_{i-1}<M<\mm_i) \nn\\
&= \sum_{i=1}^{N} \left({1\over\lambda^2} - {l_i^2\over\me^{\lambda l_i}+\me^{-\lambda l_i}-2}\right) \nn\\
&\qquad\qquad\qquad\qquad\times \left(\me^{-\lambda \mm_{i-1}}\left(1-\me^{-\lambda l_i}\right)\right).
\end{align}
Now, consider an equilibrium with $N+1$ bins with the bin-lengths $\widetilde{l}_1<\widetilde{l}_2<\ldots<\widetilde{l}_{N+1}=\infty$ and the bin edges $0=\widetilde{m}_0<\widetilde{m}_1<\ldots<\widetilde{m}_{N}<\widetilde{m}_{N+1}=\infty$. The relation between the bin-lengths and the bin edges can be expressed as $l_k=\widetilde{l}_{k+1}$ and $\mm_k=\widetilde{m}_{k+1}-\widetilde{l}_1$, respectively, for $k=1,2,\ldots,N$ by Proposition~\ref{prop:expProp2}. Then, the decoder cost at the equilibrium with $N+1$ bins can be written as
\begin{align}
&J^{d,N+1} =  \sum_{i=1}^{N+1} \left({1\over\lambda^2} - {\widetilde{l}_i^2\over\me^{\lambda \widetilde{l}_i}+\me^{-\lambda \widetilde{l}_i}-2}\right) \nn\\
&\qquad\qquad\qquad\qquad\times \left(\me^{-\lambda \widetilde{m}_{i-1}}\left(1-\me^{-\lambda \widetilde{l}_i}\right)\right) \nn\\
&= \left({1\over\lambda^2} - {\widetilde{l}_1^2\over\me^{\lambda \widetilde{l}_1}+\me^{-\lambda \widetilde{l}_1}-2}\right) \left(\me^{-\lambda \widetilde{m}_0}\left(1-\me^{-\lambda \widetilde{l}_1}\right)\right) \nn\\
&+ \sum_{i=2}^{N+1} \left({1\over\lambda^2} - {\widetilde{l}_i^2\over\me^{\lambda \widetilde{l}_i}+\me^{-\lambda \widetilde{l}_i}-2}\right) \left(\me^{-\lambda \widetilde{m}_{i-1}}\left(1-\me^{-\lambda \widetilde{l}_i}\right)\right) \nn\\
&= \left({1\over\lambda^2} - {\widetilde{l}_1^2\over\me^{\lambda \widetilde{l}_1}+\me^{-\lambda \widetilde{l}_1}-2}\right) \left(1-\me^{-\lambda \widetilde{l}_1}\right) \nn\\
&+ \sum_{i=2}^{N+1} \left({1\over\lambda^2} - {l_{i-1}^2\over\me^{\lambda l_{i-1}}+\me^{-\lambda l_{i-1}}-2}\right) \nn\\
&\qquad\qquad\qquad\qquad\times \left(\me^{-\lambda (\mm_{i-2}+\widetilde{l}_1)}\left(1-\me^{-\lambda l_{i-1}}\right)\right) \nn\\
&= \left({1\over\lambda^2} - {\widetilde{l}_1^2\over\me^{\lambda \widetilde{l}_1}+\me^{-\lambda \widetilde{l}_1}-2}\right) \left(1-\me^{-\lambda \widetilde{l}_1}\right) + \me^{-\lambda\widetilde{l}_1} \nn\\
&\times\underbrace{\left(\sum_{i=1}^{N} \left({1\over\lambda^2} - {l_{i}^2\over\me^{\lambda l_{i}}+\me^{-\lambda l_{i}}-2}\right) \left(\me^{-\lambda \mm_{i-1}}\left(1-\me^{-\lambda l_{i}}\right)\right)\right)}_{J^{d,N}} \nn\\
&\overset{(a)}{<} J^{d,N} \left(1-\me^{-\lambda \widetilde{l}_1}\right) + J^{d,N} \me^{-\lambda \widetilde{l}_1} = J^{d,N} .
\label{eq:expInformativeIneq}
\end{align}
Thus, $J^{d,N+1}<J^{d,N}$ is obtained, which implies that the equilibrium with more bins is more informative. Here, (a) follows from the following:
\begin{align}
J^{d,N} &= \sum_{i=1}^{N} \left({1\over\lambda^2} - {l_i^2\over\me^{\lambda l_i}+\me^{-\lambda l_i}-2}\right) \nn\\
&\qquad\qquad\qquad\qquad\times\left(\me^{-\lambda \mm_{i-1}}\left(1-\me^{-\lambda l_i}\right)\right) \nn\\
&> \sum_{i=1}^{N} \left({1\over\lambda^2} - {\widetilde{l}_1^2\over\me^{\lambda \widetilde{l}_1}+\me^{-\lambda \widetilde{l}_1}-2}\right) \nn\\
&\qquad\qquad\qquad\qquad\times\mathrm{Pr}(\mm_{i-1}<m<\mm_i) \nn\\ 
&= \left({1\over\lambda^2} - {\widetilde{l}_1^2\over\me^{\lambda \widetilde{l}_1}+\me^{-\lambda \widetilde{l}_1}-2}\right) ,
\end{align}
where the inequality holds since $\widetilde{l}_1<l_1<l_2<\ldots<l_N$ and $\varphi(s)\triangleq {s^2\over\me^{\lambda s}+\me^{-\lambda s}-2}$ is a decreasing function of $s$. 

\underline{Proof of (ii):} Now consider an equilibrium with infinitely many bins. By Theorem~\ref{thm:expExistence}, the bin-lengths are $l_1=l_2=\ldots=l^*$, where $l^*$ is the fixed-point solution of the recursion in \eqref{eq:compactRecursion}; i.e., $g(l^*)={2\over\lambda}+2b-h(l^*)$, and the bin edges are $\mm_k = k l^*$. Then, the decoder cost is
\begin{align}
&J^{d,\infty} 
= \left({1\over\lambda^2} - {(l^*)^2\over\me^{\lambda l^*}+\me^{-\lambda l^*}-2}\right) \left(1-\me^{-\lambda l^*}\right) \nn\\
&+\sum_{i=2}^{\infty} \left({1\over\lambda^2} - {(l^*)^2\over\me^{\lambda l^*}+\me^{-\lambda l^*}-2}\right) \nn\\
&\qquad\times\left(\me^{-\lambda (i-1) l^*}\left(1-\me^{-\lambda l^*}\right)\right) \nn\\
=& \left({1\over\lambda^2} - {(l^*)^2\over\me^{\lambda l^*}+\me^{-\lambda l^*}-2}\right) \left(1-\me^{-\lambda l^*}\right) +\me^{-\lambda l^*}\nn\\
\times&\underbrace{\sum_{i=1}^{\infty} \left({1\over\lambda^2} - {(l^*)^2\over\me^{\lambda l^*}+\me^{-\lambda l^*}-2}\right) \left(\me^{-\lambda (i-1) l^*}\left(1-\me^{-\lambda l^*}\right)\right)}_{J^{d,\infty}} \nn\\
\Rightarrow & J^{d,\infty} =  \left({1\over\lambda^2} - {(l^*)^2\over\me^{\lambda l^*}+\me^{-\lambda l^*}-2}\right) .
\end{align}
Since the bin-lengths at the equilibria with finitely many bins are greater than $l^*$ by Theorem~\ref{thm:expExistence}, and due to a similar reasoning in \eqref{eq:expInformativeIneq} (indeed, by replacing $\widetilde{l}_1$ with $l^*$), $J^{d,\infty}<J^{d,N}$ can be obtained for any finite $N$. Actually, $J^{d,N}$ is a monotonically decreasing sequence with limit $\lim_{N\to\infty} J^{d,N} = J^{d,\infty}$. Thus, the lowest equilibrium cost is achieved with infinitely many bins.
\hspace*{\fill}\qed

\subsection{Proof of Proposition~\ref{prop:gaussMonotonicity}} \label{ProofGaussMonotonicity}
\begin{itemize}
\item[(i)] 	Consider an equilibrium with $N$ bins for a Gaussian source $M\sim\mathcal{N}(\mu,\sigma^2)$: the $k$-th bin is $[\mm_{k-1},\mm_k)$, and the centroid of the $k$-th bin (i..e, the corresponding action of the decoder) is $u_k=\mathbb{E}[M|\mm_{k-1}\leq M<\mm_k]$ so that $-\infty=\mm_0<u_1<\mm_1<u_2<\mm_2<\ldots<\mm_{N-2}<u_{N-1}<\mm_{N-1}<u_N<\mm_N=\infty$. Further, assume that $\mu$ is in the $t$-th bin; i.e., $\mm_{t-1}\leq \mu<\mm_t$. Due to the nearest neighbor condition (the best response of the encoder) we have $u_{k+1}-\mm_k = (\mm_k - u_k) - 2b$; and due to the centroid condition (the best response of the decoder), we have $u_k=\mathbb{E}[M|\mm_{k-1}\leq M<\mm_k]=\mu - \sigma{\phi({\mm_k-\mu\over\sigma})-\phi({\mm_{k-1}-\mu\over\sigma})\over\Phi({\mm_k-\mu\over\sigma})-\Phi({\mm_{k-1}-\mu\over\sigma})}$. Then, for any bin in $[\mu,\infty)$, since $\mm_{k}>\mu$, the following holds:
\begin{align}
& u_{k} - \mm_{k-1} = \mathbb{E}[M|\mm_{k-1}\leq M<\mm_{k}]- \mm_{k-1} \nn\\
&<\mathbb{E}[M|\mm_{k-1}\leq M<\infty]- \mm_{k-1}\nn\\
&=\mu + \sigma{\phi({\mm_{k-1}-\mu\over\sigma})\over1-\Phi({\mm_{k-1}-\mu\over\sigma})}	- \mm_{k-1} \nn\\
& \stackrel{\mathclap{\text{(a)}}}{<} \mu + \sigma {\sqrt{\left({\mm_{k-1}-\mu\over\sigma}\right)^2+4}+{\mm_{k-1}-\mu\over\sigma}\over2}- \mm_{k-1} \nn\\
&< {\sigma\over2}\Bigg(\sqrt{\left({\mm_{k-1}-\mu\over\sigma}\right)^2+4\left({\mm_{k-1}-\mu\over\sigma}\right)+4}\nn\\
&\qquad\qquad\qquad-{\mm_{k-1}-\mu\over\sigma}\Bigg) = \sigma .
\label{eq:gaussRightIneq}
\end{align}
Here, (a) in due to an inequality on the upper bound of the Mill's ratio \cite{birnbaumMillsRatio1942}; i.e., ${\phi(c)\over1-\Phi(c)}<{\sqrt{c^2+4}+c\over2}$ for $c>0$. Now, observe the following:
\begin{align*}
\sigma > u_{N} - \mm_{N-1} 	&= (\mm_{N-1} - u_{N-1}) - 2b\\
&> (u_{N-1} - \mm_{N-2}) - 2b\\
&\vdots\\
&> - (N-t)(2b),
\end{align*}
where the inequalities follow from the fact that the Gaussian pdf of $M$ is monotonically decreasing on $[\mu,\infty)$, which implies that bin-lengths
are monotonically increasing in the interval $[\mu,\infty)$ when $b<0$. Further, for $b<0$, $N-t<-{\sigma\over2b}$ is obtained, which implies that the number of bins in $[\mu,\infty)$ is bounded by $\big\lfloor -{\sigma\over2b} \big \rfloor$. 
\item[(ii)] Similarly, for any bin in $(-\infty,\mu]$, since $\mm_{k}<\mu$, the following holds:
\begin{align}
&\mm_k - u_{k} = \mm_k - \mathbb{E}[M|\mm_{k-1}<M<\mm_k]\nn\\
&<\mm_k - \mathbb{E}[M|-\infty<M<\mm_k]\nn\\
&= \mm_k - \mu + \sigma{\phi({\mm_{k}-\mu\over\sigma})\over\Phi({\mm_{k}-\mu\over\sigma})} \nn\\
& \stackrel{\mathclap{\text{(a)}}}{=} \sigma{\phi({\mu-\mm_{k}\over\sigma})\over1-\Phi({\mu-\mm_{k}\over\sigma})}-\sigma{\mu-\mm_{k}\over\sigma} \nn\\
& \stackrel{\mathclap{\text{(b)}}}{<} \sigma \left({\sqrt{\left({\mu-\mm_{k}\over\sigma}\right)^2+4}+{\mu-\mm_{k}\over\sigma}\over2}-{\mu-\mm_{k}\over\sigma}\right)  \nn\\
&< {\sigma\over2}\Bigg(\sqrt{\left({\mu-\mm_{k}\over\sigma}\right)^2+4\left({\mu-\mm_{k}\over\sigma}\right)+4}-{\mu-\mm_{k}\over\sigma}\Bigg) \nn\\
&= \sigma .
\label{eq:gaussLeftIneq}
\end{align}
Here, (a) holds since $\phi(x)=\phi(-x)$ and $\Phi(x)=1-\Phi(-x)$, and (b) follows from an inequality on the upper bound of the Mill's ratio \cite{birnbaumMillsRatio1942}; i.e., ${\phi(c)\over1-\Phi(c)}<{\sqrt{c^2+4}+c\over2}$ for $c>0$. Now, observe the following:
\begin{align*}
\sigma > \mm_1 - u_1 &= u_2 - \mm_1 + 2b\\
&> \mm_2 - u_2 + 2b\\
&\vdots\\
&> (t-1)(2b),
\end{align*}
where the inequalities follow from the fact that the Gaussian pdf of $M$ is monotonically increasing on $(-\infty,\mu]$, which implies that bin-lengths
are monotonically decreasing in the interval $(-\infty,\mu]$ when $b>0$. Further, for $b>0$, $t-1<{\sigma\over2b}$ is obtained, which implies that the number of bins in $(-\infty,\mu]$ is bounded by $\big\lfloor {\sigma\over2b} \big \rfloor$. 
\end{itemize}
\hspace*{\fill}\qed

\subsection{Proof of Remark~\ref{rem:gaussInfBinLength}} \label{ProofGaussInfBinLength}
For $b>0$, we can characterize what the bins looks like as the bin edges get very large with the following analysis:
\begin{align}
&\lim_{i\to\infty} \mathbb{E}[M|\mm_{i-1}^*<M<\mm_{i}^*] - \mm_{i-1}^* \nn\\
&= \lim_{i\to\infty} \mathbb{E}[M|\mm_{i-1}^*<M<\mm_{i-1}^*+l_i^*] - \mm_{i-1}^*\nn\\
&= \lim_{\mm_{i-1}^*\to\infty} \mu - \sigma{\phi({\mm_{i-1}^*+l_i^*-\mu\over\sigma})-\phi({\mm_{i-1}^*-\mu\over\sigma})\over\Phi({\mm_{i-1}^*+l_i^*-\mu\over\sigma})-\Phi({\mm_{i-1}^*-\mu\over\sigma})} - \mm_{i-1}^* \nn\\
& \stackrel{H}{=} \lim_{\mm_{i-1}^*\to\infty} \mu-\mm_{i-1}^* \nn\\
&- \sigma {\phi({\mm_{i-1}^*+l_i^*-\mu\over\sigma}){-\mm_{i-1}^*-l_i^*+\mu\over\sigma}{1\over\sigma}-\phi({\mm_{i-1}^*-\mu\over\sigma}){-\mm_{i-1}^*+\mu\over\sigma}{1\over\sigma}\over\phi({\mm_{i-1}^*+l_i^*-\mu\over\sigma}){1\over\sigma}-\phi({\mm_{i-1}^*-\mu\over\sigma}){1\over\sigma}} \nn\\
&= \lim_{\mm_{i-1}^*\to\infty} \mu-\mm_{i-1}^* + \mm_{i-1}^* - \mu + {l_i^*\over 1 - {\phi({\mm_{i-1}^*-\mu\over\sigma})\over\phi({\mm_{i-1}^*+l_i^*-\mu\over\sigma})}} \nn\\
&\stackrel{(a)}{=} 0
\label{eq:gaussRightInfLength}
\end{align}	 
Here, $\stackrel{H}{=}$ represents l'H\^ospital's rule and (a) follows from
\begin{align*}
&\lim_{\mm_{i-1}^*\to\infty}{\phi({\mm_{i-1}^*-\mu\over\sigma})\over\phi({\mm_{i-1}^*+l_i^*-\mu\over\sigma})}\\
&=\lim_{\mm_{i-1}^*\to\infty}\me^{{-\left({\mm_{i-1}^*-\mu\over\sigma}\right)^2+\left({\mm_{i-1}^*+l_i^*-\mu\over\sigma}\right)^2\over2}}=\infty.
\end{align*}
Then, \eqref{eq:centroidBoundaryEq} reduces to 
\begin{align}
\lim_{i\to\infty} \mm_i-&\mathbb{E}[M|\mm_{i-1}<M<\mm_{i}]\nn\\
&= \lim_{i\to\infty} \mathbb{E}[M|\mm_{i}<M<\mm_{i+1}]-\mm_i+2b \nn\\ \Rightarrow&\lim_{i\to\infty} \mm_i-\mm_{i-1}=\lim_{i\to\infty}\mm_i-\mm_i+2b \nn\\
\Rightarrow& \lim_{i\to\infty} \mm_i-\mm_{i-1}=2b .
\end{align}
In other words, the distance between the centroid and the lower edge of the bin converges to zero (i.e., the centroid of the bin converges to the left-edge), and length of the bins converge to $2b$.

Similarly, for $b<0$, we can characterize what the bins looks like as the bin edges get very large (in absolute value) with the following analysis:
\begin{align}
&\lim_{i\to\infty}  \mm_{r-i}^* - \mathbb{E}[M|\mm_{r-i-1}^*<M<\mm_{r-i}^*]  \nn\\
&= \lim_{i\to\infty} \mm_{r-i}^* - \mathbb{E}[M|\mm_{r-i}^*-l_{r-i}^*<M<\mm_{r-i}^*] \nn\\
&= \lim_{\mm_{r-i}^*\to-\infty} \mm_{r-i}^* - \mu + \sigma{\phi({\mm_{r-i}^*-\mu\over\sigma})-\phi({\mm_{r-i}^*-l_{r-i}^*-\mu\over\sigma})\over\Phi({\mm_{r-i}^*-\mu\over\sigma})-\Phi({\mm_{r-i}^*-l_{r-i}^*-\mu\over\sigma})} \nn\\
&\stackrel{H}{=} \lim_{\mm_{r-i}^*\to-\infty} \mm_{r-i}^* - \mu \nn\\
&+ \sigma {\phi({\mm_{r-i}^*-\mu\over\sigma}){-\mm_{r-i}^*+\mu\over\sigma}{1\over\sigma}-\phi({\mm_{r-i}^*-l_{r-i}^*-\mu\over\sigma}){-\mm_{r-i}^*+l_{r-i}^*+\mu\over\sigma}{1\over\sigma}\over\phi({\mm_{r-i}^*-\mu\over\sigma}){1\over\sigma}-\phi({\mm_{r-i}^*-l_{r-i}^*-\mu\over\sigma}){1\over\sigma}} \nn\\
&= \lim_{\mm_{r-i}^*\to-\infty} \mm_{r-i}^* - \mu - \mm_{r-i}^* + \mu - {l_i^*\over  {\phi({\mm_{r-i}^*-\mu\over\sigma})\over\phi({\mm_{r-i}^*-l_{r-i}^*-\mu\over\sigma})}-1} \nn\\
&\stackrel{(a)}{=} 0 .
\end{align}	 
Here, (a) follows from 
\begin{align*}
&\lim_{\mm_{r-i}^*\to-\infty}{\phi({\mm_{r-i}^*-\mu\over\sigma})\over\phi({\mm_{r-i}^*-l_{r-i}^*-\mu\over\sigma})} \\
&=\lim_{\mm_{r-i}^*\to\infty}\me^{{-({\mm_{r-i}^*-\mu\over\sigma})^2+({\mm_{r-i}^*-l_{r-i}^*-\mu\over\sigma})^2\over2}}
=\infty.
\end{align*} 
Similar to the $b>0$ case, the distance between the centroid and the upper edge of the bin converges to zero (i.e., the centroid of the bin converges to the right-edge), and length of the bins converge to $-2b$.
\hspace*{\fill}\qed

\section*{Acknowledgment}

Some of the results, in particular Proposition~\ref{prop:expProp1} and Proposition~\ref{prop:expProp2} build on the project report \cite{FurrerReport} written by Philippe Furrer, Stephen Kerner, Stanislav Fabricius, Serdar~Y{\"u}ksel and Tam\'{a}s Linder.

\bibliographystyle{IEEEtran}
\bibliography{NumberOfBins}

\end{document}